\newcommand{\RUNTITLE}[1]{}
\newcommand{\TITLE}[1]{\title{#1}}
\newcommand{\ARTICLEAUTHORS}[1]{\author{#1}}
\newcommand{\AUTHOR}[1]{\hspace{-0.2in}#1}
\newcommand{\AFF}[1]{\\\hspace{-0.1in}#1}
\newcommand{\EMAIL}[1]{\\\hspace{-0.1in}#1\and}
\newcommand{\ABSTRACT}[1]{\begin{abstract}#1\end{abstract}}
\newcommand{\KEYWORDS}[1]{\textbf{Keywords:} #1\\}
\newenvironment{APPENDICES}{\appendix}{}
\date{August 31, 2021}
\long\def\@makefigurecaption#1#2{\@onelinecenter{
    \EGT\FigureCaptionFontStyle \HD{9}{0}{\FigureNameFontStyle #1
      \kern16pt}\ignorespaces #2\HD{0}{0}}\endgraf}
\DeclareMathOperator*{\eff}{Eff}
\newtheorem{remark}{Remark}
\newtheorem{definition}{Definition}
\newtheorem{proposition}{Proposition}
\newtheorem{example}{Example}
\newtheorem{assumption}{Assumption}
\DeclareMathOperator*{\argmin}{arg\,min}
\DeclareMathOperator*{\argmax}{arg\,max}
\let\oldproof\proof
\renewcommand{\proof}{\oldproof{Proof.}}
\let\oldendproof\endproof
\renewcommand{\endproof}{\hfill\halmos\oldendproof}
\DeclarePairedDelimiter\floor{\lfloor}{\rfloor}
\newcommand{\Reals}{\mathbb{R}}
\newenvironment{tablesizeadjustment}{}{}
\newcommand{\shorteq}{%
  \settowidth{\@tempdima}{-}%
  \resizebox{\@tempdima}{\height}{=}%
}
\newcommand{\nosemic}{\renewcommand{\@endalgocfline}{\relax}}%
\newcommand{\dosemic}{\renewcommand{\@endalgocfline}{\algocf@endline}}%
\let\oldnl\nl%
\newcommand{\nonl}{\renewcommand{\nl}{\let\nl\oldnl}}%
\begin{document}

\beginmaincontent
\externalrefsupplement %

\RUNTITLE{Machine Learning-powered\\Iterative Combinatorial Auctions}
\TITLE{Machine Learning-powered\\Iterative Combinatorial Auctions%
  \thanks{This paper is a significantly extended/revised version of
    \cite{brero2018MLelicit} which was published
    in the conference proceedings of IJCAI-ECAI'18.}}
\ARTICLEAUTHORS{%
\AUTHOR{Gianluca Brero\footnote{All authors have contributed equally to this work.}}
\AFF{Harvard University \EMAIL{gbrero@g.harvard.edu}}
\AUTHOR{Benjamin Lubin\footnotemark[2]}
\AFF{Boston University \EMAIL{blubin@bu.edu}}
\AUTHOR{Sven Seuken\footnotemark[2]}
\AFF{University of Zurich \EMAIL{seuken@ifi.uzh.ch}}
}

\maketitleworking

\ABSTRACT{%
\noindent %
We present a machine learning-powered iterative combinatorial auction
(MLCA). The main goal of integrating machine learning (ML) into the
auction is to improve preference elicitation, which is a major
challenge in large combinatorial auctions (CAs). In contrast to prior work, our auction design uses \emph{value queries} instead of prices to drive the auction. The ML algorithm is used to help the auction decide which value queries to
ask in every iteration. While using ML inside a CA introduces new
challenges, we demonstrate how we obtain a design that is individually
rational, satisfies no-deficit, has good incentives, and is computationally practical. We benchmark our new auction against the well-known combinatorial clock
auction (CCA). Our results indicate that, especially in large domains,
MLCA can achieve significantly higher allocative efficiency than the CCA, even with only a small number of value queries.}

\KEYWORDS{Combinatorial Auctions, Machine Learning, Preference Elicitation, CCA}

\maketitlepublishable

\vspace{-1em}

\section{Introduction}
Combinatorial auctions (CAs) are used to allocate multiple items among
multiple bidders who may view these items as complements or
substitutes. Specifically, they allow bidders to submit bids on
\emph{bundles} of items to express their complex preferences. CAs have
found widespread real-world applications, including for the sale of
spectrum licenses \citep{Cramton2013SpectrumAuctionDesign} and the
allocation of TV-ad slots \citep{Goetzendorf:2015}.

\subsection{Preference Elicitation in CAs}

One of the main challenges when conducting CAs in practice is that the
bundle space grows exponentially in the number of items, which
typically makes it impossible for bidders to fully reason about, let
alone report their full valuation, even in medium-sized domains. This
makes preference elicitation a key challenge, especially in large CAs.
Researchers have addressed this challenge by designing \emph{bidding
  languages} that are succinct yet expressive for specific
(restricted) classes of valuation functions (see
\cite{nisan2006bidding} for a survey).
But, unfortunately, if one must support general valuations and
guarantee full (or even approximate) efficiency, then the auction
requires an exponential amount of communication in the worst case
\citep{nisan2006communication}.

To get around this difficulty in practice, the preference elicitation
challenge is often addressed by using \emph{iterative combinatorial
  auctions (ICAs)}, where the auctioneer elicits information from
bidders over multiple rounds, imposing some kind of limit on the
amount of information exchanged. Common formats include
\emph{ascending-price auctions} and \emph{clock auctions}, where
prices are used to coordinate the bidding process. In recent years,
the \emph{combinatorial clock auction (CCA)} \citep{ausubel2006clock}
has gained momentum. Between 2012 and 2014 alone, ten countries have
used the CCA for conducting spectrum auctions, raising approximately
\$20 billion in revenues, with the 2014 Canadian 700 MHZ auction being
the largest auction, raising more than \$5 billion
\citep{ausubel2017practical}. Furthermore, the CCA has been used to
auction off the rights for building offshore wind farms for generating
green energy \citep{ausubel2011WindRights}.  The CCA consists of two
main phases: in the initial clock phase, the auctioneer quotes
(linear) item prices in each round, and bidders are asked to respond
to a \emph{demand query}, stating their profit-maximizing bundle at
the quoted prices. In the second phase (the supplementary round),
bidders can submit a finite number of ``all-or-nothing'' bundle bids
(typically up to 500). The goal of this design is to combine good
price discovery in the clock phase with good expressiveness in the
supplementary round.

Despite the practical successes of the CCA, a recent line of papers
has revealed some shortcomings regarding both of its phases.  Recall
that in the clock phase (which often has more than 100 rounds
\citep{canadian2014}), bidders must answer demand queries, i.e., they
must respond with their optimal bundle given a vector of
prices. However, experimental studies by
\citet{scheffel_etal_2012_OnTheImpactOfCognitiveLimits} and
\citet{bichler2013core} have shown that bidders may not be able to
optimally respond to a demand query. In particular, these studies
found that bidders tend to focus on a limited search space consisting
of some bundles of items selected prior to the auction, and that this
can cause significant efficiency losses (between 4\% to 11\% in their
experiments). Furthermore, in the supplementary round, bidders face
the difficult challenge of deciding which additional bids to submit
while obeying a limit on the maximum number of bids they can
submit.\footnote{For example, in the 2014 Canadian spectrum auction,
  bidders could submit at most 500 bids (which is a small fraction of
  the $2^{98}$ bundles), and multiple bidders reached this limit
  \citep{canadian2014}.} If the bidders do not pick their bundles well
in this round, this can further reduce efficiecy. Thus, designing a
practically feasible CA with high empirical efficiency remains a
challenging research problem.

\subsection{Identifying Useful Bundles via Machine Learning}

With general valuations, finding the efficient (or even approximately efficient) allocation in a CA requires communication that is exponential in the number of items in the worst case \citep{nisan2006communication}. This
implies that a bidder may need to answer an exponential number of
queries and/or report an exponential number of values. Given that
bidders have cognitive costs for determining
their value for a given bundle and for answering queries, no practical CA mechanism can guarantee full efficiency beyond small toy domains.
Note that the results from \cite{nisan2006communication} imply that
\textit{iterative VCG mechanisms}
\citep[e.g.,][]{mishra2007ascending,de2007ascending} also require
exponential communication in the worst case and are thus impractical
even in medium-sized domains.  %
To obtain a practical CA, we therefore design a mechanism that, like
iterative VCG, interacts with bidders over multiple rounds, but that
imposes a \emph{limit} on the information exchanged between each
bidder and the auction.

The amount and the type of information being exchanged depends on the
auction format and thus the type of queries that are used. Many
iterative CAs use demand queries, where bidders are shown prices and
have to report their profit-maximizing bundle at these prices. The
amount of information exchanged per demand query depends on the
dimensionality of the prices. For example, the CCA uses linear prices
(i.e., one price per item), which, combined with a bound on the number
of rounds, ensures practicality. Due to the result by
\cite{nisan2006communication}, this also implies that the CCA cannot
provide efficiency guarantees.

In our mechanism, the information exchanged consists of a small set of
bundle-value pairs from each bidder, and the size of this set is
limited via a parameter of our mechanism (e.g., 500). Of course, due
to \cite{nisan2006communication}, and as in the CCA, this implies that
our mechanism also cannot provide useful efficiency
\emph{guarantees}. However, our goal is to design a mechanism that
maximizes \emph{empirical efficiency} in realistic CA domains.

Intuitively, a mechanism that wants to maximize empirical efficiency
but is constrained on information exchange should elicit the ``most
important'' information from each bidder.
However, in an iterative mechanism, identifying the ``next most-useful
query'' is not an easy task because it requires reasoning about the
incremental value of the missing information given what has already
been elicited. To this end, there are three basic approaches.

First, if enough \emph{structure} regarding bidders' valuations can be
assumed ex ante, then the auctioneer can exploit this structure by
adopting an (approximately) expressive domain-specific bidding
language \citep{Sandholm2013,Goetzendorf:2015}.  One can view such
languages as value models for the domain, where the parameters of the
model are left to be specified by the bidders.  The model then ensures
that this finite amount of information is \emph{generalized}, such
that the value for all bundles can be computed.  However, in many
domains, like spectrum auctions, bidders' valuations exhibit such a
rich and complex structure that the auctioneer cannot propose a single
domain-specific language/model that would be parsimonious enough in
its own description that it could be put into practice.
Furthermore, any particular choice of a domain-specific bidding
language may favor one bidder over another, which is problematic for a
government-run auction which should not be biased against any given
participant. In the present paper, we therefore consider
\textit{general} valuations such that our mechanism does not rely on
any critical structural assumptions.

Second, instead of the auctioneer performing the generalization task
by exploiting structural assumptions, the bidders can do it themselves
by explicitly providing information statements about their whole
valuation. The CCA follows this principle by asking bidders demand
queries in each round of the clock phase. An answer to a demand query
does not only provide the auctioneer with information on one bundle,
but it is simultaneously a statement regarding the bidder's whole
valuation. The auctioneer can then use these ``global information
statements'' to identify the next query. For example, the CCA computes
the next query via a simple price update rule where prices on
over-demanded items are increased by a fixed percentage
\citep{ausubel2017practical}. However, this places a relatively large
burden on the bidders, who may not be able to respond optimally to
every demand query, as discussed above.

In this paper, we propose a third approach to perform the
generalization task, using \emph{machine learning (ML)}. The idea is
to train an ML algorithm for each individual bidder on that bidder's
value reports to \emph{generalize} to the whole bundle space, i.e., to
\emph{learn} a value for each bundle the bidder has not yet evaluated.
To build some intuition, consider the use of \emph{linear regression}
for the ML algorithm, and the following simple example: a bidder who
reports value 1 for the bundle $(1,0)$ and value $10$ for bundle
$(1,1)$. In this example, linear regression would predict a value of 9
for bundle $(0,1)$. Of course, linear regression can only learn
additive valuations as it only has one coefficient per item. In
Section~\ref{sec:MLAlgorithm}, we present our general ML framework and
explain how non-linear ML algorithms (which can capture substitute and
complements valuations) can be employed in our mechanism.
Because such an ML model provides information about the whole
valuation, this generalization can be leveraged to determine a good
sequence of bundles to query.  This is the key ingredient of the new
auction design we develop in this paper.

\subsection{Overview of Contributions}

We propose a \emph{machine learning-powered iterative combinatorial
  auction (MLCA).} In contrast to the CCA, our auction does not use
demand queries (i.e., prices) but instead asks bidders for \emph{value
  reports} on individual bundles. While in the clock phase of the CCA,
clock prices do the job of coordinating bidders towards finding an
(approximately) efficient allocation, in our approach, we use an
elicitation process guided by ML to serve this role.

Our main innvoation is an \emph{ML-powered query module} (Section
\ref{subsec:QueryModule}) which, given a set of bundle-value pairs
already reported by each bidder, determines a new query for each
bidder to be answered in the next round of the auction. The query
module consists of two key steps. First, we use an ML algorithm for
each bidder to compute a \emph{learned valuation} for this
bidder. This provides us with a prediction of each bidder's value for
any bundle in the bundle space and thus allows us to compute the
predicted social welfare of any allocation. Second, we solve an
\emph{ML-based winner determination problem} to find the allocation
with the highest predicted social welfare. The next query for each
bidder is then chosen as the bundle the bidder would be allocated in
that allocation. These two steps embody the main idea of MLCA: we aim
to generate vectors of queries that are feasible allocations, and in
particular, we query those feasible allocations that have highest
predicted social welfare.

Our second contribution is the design of the full MLCA mechanism
(Section \ref{subsec:MechanismDescription}). MLCA uses the query
module as a sub-routine in every round of the auction and ultimately
determines an allocation and payments. To achieve good incentives in
practice, MLCA aims to behave as ``similarly as possible to VCG,''
while  obeying the maximum number of bids constraint. In
addition to using the query module, the following four design elements
are essential for this: (1) the final allocation and payments are
computed based on the elicited bundle-value pairs (and not based on
learned values), (2) MLCA computes VCG payments based on these
elicited bundle-value pairs, (3) throughout the auction, MLCA
explicitly queries each bidder's marginal economy, and (4) MLCA
enables bidders to ``push'' information to the auction which they deem
useful.

In Section~\ref{sec:theoreticalAnalysis}, we provide a detailed
theoretical analysis of MLCA. First, we derive a relationship between the learning error of the ML algorithm used and the efficiency achieved by MLCA (Section~\ref{sec:learningError}). Next, we analyze the incentives of MLCA, and we explain how its design features promote good incentives in practice. Finally, we show that MLCA satisfies individual rationality and no-deficit.

In Section~\ref{sec:MLAlgorithm}, we instantiate the ML algorithm. In principle, any ML algorithm could be used within MLCA. However, a key requirement is that one can relatively quickly solve the corresponding \emph{winner determination
  problem (WDP)} based on bidders' learned valuations, to compute a
new allocation -- and this computation is done many hundreds of times
(inside the query module) throughout one run of MLCA. We
first explain how standard \emph{linear regression} can be used to
obtain the bidders' learned valuations, and how formulating the linear
regression-based WDP is straightforward. Then we generalize this to
non-linear machine learning models that can also capture substitutes
and complements valuations -- in particular, we use \emph{kernelized
  support vector regression (SVR)}. Using a different \textit{kernel} in the SVR essentially leads to a different ML algorithm inside MLCA.
We discuss SVRs with four popular kernels: Linear (which corresponds to linear regression), Quadratic, Exponential, and Gaussian, and we provide WDP formulations for all four.

In Section~\ref{sec:OptMLExp}, we then provide experimental results comparing the performance of these four ML algorithms. Our main insight in this section is that the choice of the right ML algorithm used inside MLCA
depends on two characteristics: first, the ML algorithm's prediction
performance (measured in terms of the learning error), and second, our
ability to solve the ML-based WDP problem reasonably quickly.
We find that the Quadratic kernel only has
slightly worse prediction performance than the Exponential or Gaussian kernels, but we can formulate the corresponding WDP as a Quadratic Integer program which we can solve relatively quickly, even for large auction instances. This is why, in terms of economic efficiency (our main variable of interest), the
Quadratic kernel outperforms the other three kernels.

In Section~\ref{sec:MLCAvsCCAExp}, we present experimental results regarding the efficiency of MLCA when using the Quadratic kernel (compared to the CCA) and regarding its manipulability.  For these experiments, we employ the spectrum auction test suite (SATS) \cite{weiss2017sats}, and we use
the LSVM, GSVM and MRVM domains.
Our first result is that, as one would expect, the performance of MLCA
depends on how well the ML algorithm can capture the structure of the
domain. LSVM has a highly non-linear structure, but bidders are not
interested in a large number of bundles. In this domain, the CCA
achieves slightly higher efficiency than MLCA with 500
queries.\footnote{In work subsequent to this paper,
  \citet{Weissteiner2020DeepLearning} showed how to use neural
  networks instead of SVRs as the ML algorithm in MLCA, and they
  demonstrated that this leads to an additional efficieny increase in
  the LSVM domain. Similarly, also in work subsequent to this paper,  \cite{Weissteiner2020FTMLCA} showed how Fourier analysis can be leveraged for the design of ML-powered iterative CAs to achieve better results than MLCA.} In contrast, in GSVM, the Quadratic kernel can
perfectly capture the structure of the domain, and thus is able to
generalize very well. This leads to an efficiency of 100\% for MLCA,
even with only 100 queries, matching the efficiency of the CCA.
We test both mechanisms on the MRVM domain, which is the
largest and most realistic domain, with 10 bidders and 98 items.
Here, MLCA clearly outperforms the CCA. With 500 queries, the CCA
achieves an efficiency of 94.2\% while MLCA achieves 96.4\%, which is
a 2.2\% point improvement. Finally, we provide experimental evidence for MLCA's robustness against strategic manipulations in Section~\ref{subsec:ManipulationExperiments}. Overall, our results suggest that MLCA may be a promising candidate for running large-scale CAs in practice.

\subsection{Related Work}

An important research agenda related to this paper is the one on
preference elicitation in combinatorial auctions, which dates back to
the early 2000s \citep{sandholm2006preference}. Among the most
relevant work in this agenda are the papers by \citet{lahaie2004applying} and \citet{blum2004preference} which used learning algorithms to design elicitation algorithms for specific classes of valuations. Elicitation algorithms for generic valuations based on ML were introduced by \citet{lahaie2011kernel} and further developed by \citet{lahaie2019adaptive}. These approaches are based on
using ML on bidders' reports to find \emph{competitive equilibrium
  prices}, i.e., prices that coordinate bidders towards demanding
efficient allocations. \citet{brero2018Bayes} and
\citet{brero2019Fast} built on these mechanisms and designed Bayesian
auctions that, besides bidders' reports, can exploit prior beliefs on
bidders' valuations to quickly determine competitive equilibrium
prices. In contrast to the pre-dominant research agenda in this field,
we do not use ML to directly learn or predict competitive equilibrium
prices. Instead, we use ML to learn bidders' valuations, and we only
provide an implicit correspondence to approximate clearing
price.

Another related line of research grew out of the automated mechanism design (AMD) research agenda \citep{Conitzer2002AMD,conitzer2004self}, aiming to use algorithms to design \textit{direct-revelation} mechanisms: these are mechanisms
where agents report all of their preferences upfront and the mechanism
determines the outcome based on these preferences.  The first
approaches to AMD were based on formulating the mechanism design
problem as a search problem over the space of all possible mappings
from agents' preference profiles to outcomes. These approaches were
only applicable to small settings mainly because of the dimension of
the preference profile space. Recent work has partly addressed this
scalability issue by limiting the search to parametrized classes of
mechanisms and using learning algorithms to find suitable parameter
values. \citet{dutting2015payment} used discriminant-based classifiers
to learn approximately strategyproof payment rules for combinatorial
auctions. In more recent work, \citet{DuettingOptimalAuctionsDL} have
used deep learning methods to advance the design of revenue-maximizing
auctions. Thus, this strand of research has used ML in the process of designing new mechanisms. In contrast, in our work, we incorporate an ML algorithm into the mechanism itself, such that an ML algorithm is part of the resulting mechanism's execution.

\section{Preliminaries}

\subsection{ Formal Model}\label{ssec:CASetting}
In a combinatorial auction (CA), there is a set of a set of $m$
indivisible items denoted $M=\{1,...,m\}$ to be allocated among $n$
bidders denoted $N=\{1,...,n\}$.  A bundle is a subset of the set of
items. We associate each bundle with its indicator vector and denote
the set of bundles as $\mathcal X = \{0,1\}^m$. We represent the
preferences of each bidder $i$ with a value function $v_i: \mathcal{X}
\rightarrow \mathbb{R}_{\geq 0} $ that is private knowledge of the
bidder. Thus, for each bundle $x \in \mathcal{X}, v_i(x)$ represents
the \emph{true value} that bidder $i$ has for obtaining $x$. We do not
make any assumptions regarding the structure of bidders' value
functions.\footnote{However, our mechanism allows for structural
  assumptions (like free disposal) if they are needed.} We assume that
values are normalized such that the bidders have zero value for the
empty bundle. We also refer to $v_i$ as bidder $i$'s \emph{valuation}.
We let $v = (v_1,\dots,v_n)$ denote the \emph{valuation profile} and
$v_{-i} = (v_1,\dots,v_{i-1},v_{i+1},\dots,v_n)$ the corresponding
profile where bidder $i$ is excluded.  We also call the set of all
bidders $N$ the \emph{main economy}, and we call the set $N \setminus
\{i\}$ bidder $i$'s \emph{marginal economy.}

A CA \emph{mechanism} defines how bidders interact with the auction,
how the final allocation is determined, and how much each bidder has
to pay.  By $a = (a_1, \dots , a_n) \in \mathcal{X}^n$ we denote an
\emph{allocation}, with $a_i$ being the bundle that bidder $i$ obtains
under $a$. We denote the set of feasible allocations by $\mathcal
F$. Our mechanism supports various notions of feasibility; in this
paper, we consider an allocation to be \textit{feasible} if each item
is allocated to at most one bidder. \emph{Payments} are defined as a
vector $p = (p_1,...,p_n)\in \Reals^n$, with $p_i$ denoting the amount
charged to bidder $i$. We assume that bidders have a
\emph{quasi-linear utility function} of the form 
\begin{equation}\label{eq:quasiLinearUtility}
    u_i(a,p) = v_i(a_{i})-p_i.    
\end{equation}
The \emph{social welfare} of allocation $a$ is
defined as the sum of the bidders' true values for $a$, i.e., $V(a) =
\sum_{i \in N} v_i(a)$. The social welfare-maximizing (i.e.,
\emph{efficient}) allocation is denoted as $a^* \in \argmax_{ a \in
  \mathcal F} V(a)$. We measure the efficiency of any allocation $a\in
\mathcal F$ as $\eff(a)=V(a)/V(a^*)$. We aim to design mechanisms that
allocate items such that this measure of efficiency is maximized.

 In this paper, we design a mechanism for an \emph{iterative
   combinatorial auction} that asks each bidder to make value reports
 for different bundles across different rounds of the auction.  We let
 $(x_{ik}, \hat{v}_{ik})$ denote the $k^{\textrm{th}}$
 \emph{bundle-value report} from bidder $i$, where $x_{ik}$ denotes
 the bundle and $\hat v_{ik}$ denotes the corresponding value report
 (which may not necessarily be truthful). Throughout the auction, our
 mechanism maintains the \emph{set of bundle-value reports} (or
 \emph{report set}, for short) from bidder $i$ denoted $R_i$. At the
 point when bidder $i$ has made $\ell$ bundle-value reports, we have
 $R_i=\{(x_{ik},\hat{v}_{ik})\}_{k \in L}$ where
 $L=\{1,...,\ell\}$. Note that this notation enables us to refer to
 specific bundle-value pairs contained in $R_i$ via an index. We let
 $R = (R _1,\dots,R_n)$ denote the profile of these report sets. For
 notational simplicity, we say that a bundle $x\in R_i$ if bidder $i$
 has made a value report on bundle $x$.
 We use $|R_i|$ to denote the
 number of bundle-value reports made by bidder $i$. Our mechanism will
 enforce that a bidder cannot make multiple reports on the same
 bundle.

 At the end of the auction (after the last round), our mechanism
 determines a final allocation and payments based on all elicited
 value reports.  In this step, we need to ``look up'' a bidder's value
 report for specific bundles. To this end, we define the \emph{report
   function} $\hat{v}_i(\cdot)$, such that, for any bundle $x \in
 R_i$, $\hat{v}_i(x)$ is bidder $i$'s value report for $x$ (and
 undefined otherwise). Note that this definition of $\hat{v}_i(\cdot)$
 is different from its usage in the literature on direct revelation
 mechanisms where this function (when applied to CAs) assigns a value
 report to \emph{every} bundle in the bundle space.

In our mechanism, bidders typically only make a value report on a
limited number of bundles in the bundle space. Given that
$\hat{v}_i(\cdot)$ captures all of the reported information by bidder
$i$, and without making further assumptions, we only need to consider
the set of bundles restricted to the $R_i$'s when determining the
efficient allocation with respect to the $\hat{v}_i(\cdot)$'s.
Formally, we define this restricted set of feasible allocations as
\mbox{$\mathcal{F}_R=\{a\in \mathcal{F}: a_i \in R_i \;\forall i \}$}.
Consequently, given a profile of report sets $R$, our mechanism
computes the efficient allocation with respect to $R$ by solving
$\argmax_{a \in \mathcal{F}_R} \sum_{i \in N} \hat{v}_i (a_i),$ which
is a standard (combinatorial) winner determination problem that can be
formulated as an Integer Program.

\subsection{The Vickrey-Clarke-Groves (VCG) Auction}\label{ssec:CAEfficiency}

We now define the well-known VCG auction
\citep{vickrey1961counterspeculation,clarke1971multipart,groves1973incentives}.

\begin{definition}[VCG Auction]\label{def:VCG mechanism}
Under the VCG auction, every bidder $i$ must make a value report for
every bundle in the bundle space, which is captured by the report
function $\hat{v}_i(\cdot)$. Given these value reports, the outcome is
defined by:\begin{itemize}
\item   The allocation rule:  $a^{\text{VCG}} \in \argmax_{a \in \mathcal{F}} \sum_{i \in N} \hat{v}_i(a_i)$
\item The payment rule, which charges every bidder $i$ a payment:
\begin{equation} \label{eq:payment_VCG}
p^{\text{VCG}}_i = \sum_{j \in N\setminus \{i\}} \hat v_j (a^{-i}_j) \! - \sum_{j \in N\setminus \{i\}} \hat v_j ( a^{\text{VCG}}_j) , \text{\hspace{0.5cm} where } a^{-i}\in \argmax_{a \in \mathcal{F}} \sum_{j \in N \setminus \{i\}} \hat{v}_j (a_j)  \, .
\end{equation}
\end{itemize}
\end{definition}

It is well known that the VCG auction is \emph{strategyproof}, i.e., it is a
dominant strategy for all bidders to report their true valuation,
independent of what the other bidders do. 
When all bidders follow their (truthful) dominant strategy, the VCG auction is efficient.

Note that the VCG auction is often impractical for CAs, as it
requires bidders to report their full valuation (which is
exponentially sized in the number of items). 
Thus, researchers have proposed \emph{iterative VCG mechanisms} \citep[e.g.,][]{mishra2007ascending,de2007ascending} that interact with bidders over multiple
rounds and only elicit enough information to determine the VCG
outcome. By realizing the VCG outcome, these mechanisms inherit some good incentive properties of the VCG auction, supporting truthful bidding in ex-post Nash equilibrium. However, iterative VCG mechanisms are still impractical in general, because, in the worst case, they need to exchange an exponentially sized amount of information with bidders to determine an efficient allocation \citep{mishra2007ascending}.

\section{The  Machine Learning-powered Combinatorial Auction (MLCA)}
\label{sec:MLCA}

In this section, we introduce our new \emph{ML-powered combinatorial
  auction (MLCA)}.  To obtain a practical CA, we design a mechanism
that, like iterative VCG mechanisms, interacts with bidders over
multiple rounds. However, in contrast to iterative VCG mechanisms, we
impose an important \emph{design constraint} on the amount of
information exchanged between each bidder and the auction: the
auctioneer limits the maximum number of bids which each bidder is
allowed to submit, and this maximum is a parameter of our design
(e.g., 500). Under this constraint, we know from
\cite{nisan2006communication} that we cannot provide useful efficiency
guarantees, and we cannot directly apply iterative VCG mechanisms to
obtain ex-post Nash incentive compatibility. Therefore, we adopt as
our goals for MLCA to (a) maximize \emph{empirical efficiency} in
realistic CA domains and (b) have good incentives in practice.

To achieve high empirical efficiency despite the limit on the number
of bids, our most important innovation is the \emph{ML-powered query
  module}, which we introduce in Section~\ref{subsec:QueryModule}. In
Section \ref{subsec:MechanismDescription} we then present the full
MLCA mechanism, which uses the query module as a sub-routine in every
round of the auction, and which ultimately determines the final
allocation and payments. To achieve good incentives in practice, we
combine the query module with four additional design elements to
obtain a mechanism that (in practice) behaves as ``similarly as
possible to VCG,'' while always obeying the maximum number of bids
constraint. Concretely, the four design elements are: (1) the final
allocation and payments are computed based on the elicited
bundle-value pairs (and not based on learned values), (2) MLCA
computes VCG payments based on these elicited bundle-value pairs, (3)
throughout the auction, MLCA explicitly queries each bidder's marginal
economy, and (4) MLCA enables bidders to ``push'' information to the
auction which they deem useful. We analyze the effect these design
elements have on incentives in Section~\ref{SEC:Incentives} and we provide experimental evidence for MLCA's robustness against strategic manipulations in Section~\ref{subsec:ManipulationExperiments}.

\subsection{Machine Learning-powered Query Module}
\label{subsec:QueryModule}

\begin{figure}[tb!]
        \centering
        \includegraphics[width=0.6\linewidth]{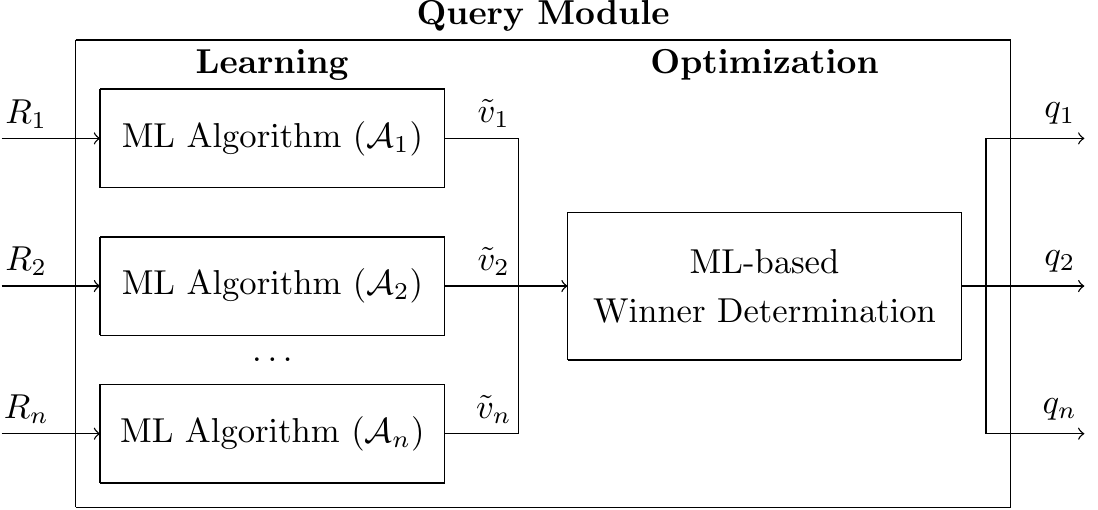}
        \caption{Schematic representation of how the query module works.}
        \label{fig:elicitation}
\end{figure}

  In this section, we present the \textit{ML-powered
  query module}, which is based on three key ideas. First, we use ML
to learn each bidder's full value function from a small set of
reported bundle-value pairs, such that we can predict the bidder's
value for any bundle in the bundle space. Second, we aim to generate
query profiles that are feasible allocations. Third, putting these
ideas together, we design an ML-based optimization algorithm which,
based on the bidders' learned valuations, finds feasible allocations
with high predicted social welfare.

We provide a high-level overview of the query module in
Figure~\ref{fig:elicitation}. The module takes as input a set of
bundle-value pairs $R_i$ from each bidder $i$. In a first step, for
each bidder $i$, we train a separate ML algorithm $\mathcal{A}_i$
based on $R_i$ to learn $i$'s valuation $\tilde{v}_i$ (which we call
the \emph{learned valuation} going forward).  Given all bidders'
learned valuations, we define the \emph{learned social welfare} of any
allocation $a$ as follows:
\begin{align}
\tilde{V}(a) = \sum_{i\in N} \tilde{v}_i(a_{i})
\end{align}
In a second step, we then feed all learned valuations $\tilde{v}_i$ to
an \emph{ML-based winner determination algorithm}. This algorithm
finds a candidate allocation $\tilde{a}$ that maximizes the learned
social welfare $\tilde{V}(\tilde{a})$. Finally, the query module
outputs a query $q_i$ for each bidder equal to the bundle he would be
allocated under this candidate allocation, i.e., $q_i = \tilde{a}_i$.

To understand the design of our query module, recall that, in our
mechanism, we use ML only to guide the elicitation of the most useful
bundles, but the computation of the final allocation and payments is
only based on the \textit{reported} bundle-value pairs.  Thus, by
generating a query profile $q=\tilde{a}$ that constitutes a feasible
allocation, we are gathering bids on bundles that could in the end be
used in the final allocation.  By contrast, if we generated each
bidder's query by only looking at this individual bidder, we would
likely generate queries that are ``incompatible'' (i.e., the bundles
are overlapping) and thus could not be used in the final allocation.

Algorithm~\ref{alg:ML_Elicitation} provides a formal description of
our ML-powered query module -- to improve clarity and readability, we
here present a slightly abbreviated description; we provide a full
version of the query module in Algorithm~\ref{alg:ML_Elicitation_Full}
in the \refapp{app:FullVersionsOfAlgorithms}. Formally, we
represent the query module as a function
\emph{NextQueries}$_\mathcal{A}(I,R)$. The function is parameterized
by a profile of machine learning algorithms $\mathcal{A}$, one for
each bidder.\footnote{In Section~\ref{sec:linearRegression}, we first
  instantiate the ML algorithm using linear regression; in
  Section~\ref{sec:SVR} we then generalize to non-linear learning
  models by using kernelized support vector regression. For our
  experiments (Sections~\ref{sec:OptMLExp} and \ref{sec:MLCAvsCCAExp})
  we equip all bidders with the same ML algorithm. However, our
  approach also supports using a different ML algorithm for each
  bidder (see Remark~\ref{rem:DifferentMLForDifferentBidders} in
  Section~\ref{sec:SVR}).} The function takes as its first argument an
index set $I \subseteq N$ of the bidders for the economy to be
considered (as we will describe in
Section~\ref{subsec:MechanismDescription}, our mechanism calls the
query module for both the main and all bidders' marginal
economies). The function takes as its second argument a profile of
reports $R$, where each $R_i$ is the set of bundle-value pairs that
bidder $i$ has already reported.

\begin{algorithm}[tb]
        \SetEndCharOfAlgoLine{}
        \SetKwRepeat{Do}{do}{while}
        \nonl\textbf{function} NextQueries$_\mathcal A$$(I,R)$;\;
        \nonl \textbf{parameters:} {profile of ML algorithms $\mathcal{A} $};\;
        \nonl \textbf{inputs:} {index set of bidders for the economy to be considered $I$; profile of reports $R$};\;
        \textbf{foreach} bidder $i\in I$ : $\tilde v_{i} := \mathcal{A}_i (R_{i})$;   \text{\hspace{0.5cm}$\setminus\setminus $\textbf{Learning Step:} learn valuations using ML algorithm} \label{AlgQueryModule:LearningStep}\;
        select $\tilde{a} \in \argmax_{a \in {\mathcal F}} \sum_{i\in I}\tilde v_{i} (a_{i})$; \text{\hspace{0.5cm}$\setminus\setminus $\textbf{Optimization Step} (based on learned valuations)} \label{AlgQueryModule:OptStep}\;
        assign new query profile: $q=\tilde{a}$ (i.e., for each $i \in I: q_i = \tilde{a}_i$); \label{AlgQueryModule:FirstQueryAssignment}\;%
        \ForEach
        {$i \in I$}
        {
                \uIf  {bundle $q_i$ has already been queried from bidder $i$ (i.e., $q_i \in R_i)$ \label{AlgQueryModule:QueryCheck}}{
                        define set of allocations containing a new query for bidder $i$: $\mathcal F' :=  \{a\in \mathcal F : a_i \neq x, \forall x\in R_i\, \}$; \label{AlgQueryModule:NewQuerySet}\;
                        select  $\tilde{a} \in \argmax_{a \in {\mathcal F'}} \sum_{i'\in I}\tilde v_{i'} (a_{i'})$; \text{\hspace{0.5cm}$\setminus\setminus $\textbf{Optimization Step} (with restrictions) } \label{AlgQueryModule:OptStepRestrictions}\;
                        overwrite new query for bidder $i$: $q_i = \tilde{a}_i$; \label{AlgQueryModule:OverwriteQuery}\;
                }
        }
        \textbf{return} profile of new queries $q$; \label{AlgQueryModule:ReturnQuery}\;
        \caption{Machine Learning-powered Query Module}
        \label{alg:ML_Elicitation}
\end{algorithm}

In Line~\ref{AlgQueryModule:LearningStep}, we begin with the
\textit{learning step}: for each bidder $i \in I$, bidder $i$'s ML
algorithm $\mathcal A_i$ is trained on the bundle-value pairs $R_i$ to
learn a valuation $\tilde{v}_i$.  In
Line~\ref{AlgQueryModule:OptStep}, in the \textit{optimization step},
we solve the ML-based winner determination problem, i.e, given all
bidders' \textit{learned valuations} $\tilde{v}_i$, we find an
allocation $\tilde{a}$ that maximizes the learned social welfare
$\tilde{V}(\tilde{a})$. In Section~\ref{sec:MLAlgorithm}, we show how
this ML-based winner determination problem can be solved in practice.
In Line~\ref{AlgQueryModule:FirstQueryAssignment}, we assign $q$
(i.e., the candidate query profile) to the allocation $\tilde{a}$
found in the optimization step. Note that at this point, all bidders'
queries come from the same allocation $\tilde{a}$ and are thus
compatible. However, some bidders may already have reported the value
for their candidate query $q_i = \tilde{a}_i$. Because we want to
generate a \textit{new} query for every bidder, we next test for this
on Line~\ref{AlgQueryModule:QueryCheck}, i.e., whether $q_i \in
R_i$. If true, we then create a ``next-best query'' for that bidder:
in Line~\ref{AlgQueryModule:NewQuerySet}, we first define a set of
feasible allocations $\mathcal{F}'$ specifically tailored to bidder
$i$, making sure that, for each allocation $a$ in the set, bidder $i$
has not yet reported his value for bundle $a_i$.  In
Line~\ref{AlgQueryModule:OptStepRestrictions}, we then solve a new
\textit{restricted} optimization problem, this time solving the
ML-based winner determination problem, but now with the feasible
allocations restricted to the set $\mathcal{F}'$ defined in the
previous step. In Line~\ref{AlgQueryModule:OverwriteQuery}, we then
overwrite bidder $i$'s query with the bundle he is assigned in the
allocation found in the previous step.  Finally, in
Line~\ref{AlgQueryModule:ReturnQuery}, we return the final query
profile.

\begin{remark}
In MLCA, bidders submit bundle-value reports, while prices are
used for elicitation in most prior work on iterative CAs
\citep[e.g.][]{parkes2006mit}.  Typically, the goal of such
price-based elicitation is to obtain approximate clearing
prices. It is possible to relate our mechanism to the rest of the
literature through a price-based interpretation of the
elicitation performed by MLCA's query module.  Specifically, in
\refapp{app:CE} we describe how to impute approximate clearing
prices that are implicit in the elicitation.  We provide a bound
on how close these prices are to clearing prices in the learning 
error of the ML algorithm.  
We emphasize that such imputed prices are only implicit when MLCA is
run. The imputed prices will in general be
non-anonymous, and if we use high-dimensional (non-linear) ML
algorithms, then they will also be high-dimension bundle
prices. This implicit high-dimensional price structure
enables MLCA to find an approximate CE (where the approximation
depends on the learning error of the ML algorithm) while
approaches based on linear prices (such as the CCA) may be limited.
For more details on this price-based interpretation of MLCA, please
see the discussion in \refapp{app:CE}.
\end{remark}

\subsection{The MLCA Mechanism}
\label{subsec:MechanismDescription}

In this subsection, we describe our full \emph{machine
  learning-powered combinatorial auction} (MLCA) mechanism. MLCA is an
iterative auction that proceeds in rounds until a maximum number of
queries has been asked. In each round, it generates new queries for
the main economy as well as each bidder's marginal economy, each time
employing the query module described in
Section~\ref{subsec:QueryModule}. After each round, the newly
generated queries are sent to the bidders and the mechanism waits for
the corresponding bundle-value reports. This enables the query module
(when called in the next round) to update each bidder's ML algorithm
based on the new reports and exclude already reported bundle-value
reports when generating a new query. At the end of the auction, the
mechanism computes the final allocation and payments based only on all
bidders' \textit{reported} bundle-value pairs. As the default payment
rule, we compute VCG payments based on these reports.

\begin{algorithm}[tb]
        \SetEndCharOfAlgoLine{}
        \nonl \textbf{parameters:} profile of ML algorithms $\mathcal A$; maximum \# of queries per bidder $Q^{\text{max}}$;
        \\\nonl \# of initial queries $Q^{\text{init}} \leq Q^{\text{max}}$\label{alg:MLCA:Parameters};\;
        \textbf{foreach} bidder $i\in N$: ask the bidder to report his value for $Q^{\text{init}}$ randomly chosen bundles\label{alg:MLCA:InitialReports};\;
        Let $R= (R_1,...,R_n)$ denote the initial report profile; each $R_i$ is $i$'s set of bundle-value reports\label{alg:MLCAInitialReportBookkeping};\;
        Let $T=\floor{(Q^{\text{max}} - Q^{\text{init}})/n}$ denote the total number of auction rounds and $t=1$ the current round\label{alg:MLCA:TotRounds};\;
        \While(\tcp*[f]{ Auction round iterator}){$t \leq T$ \label{alg:MLCA:RoundIterator}} {
                Generate query profile via \emph{NextQueries}$_\mathcal A$$(N,R)$;\tcp*{Queries for the main economy} \label{alg:MLCA:QueryMain}
                \ForEach
                {$i \in N$ \label{alg:MLCA:BiddersIt}}
                {
                        Generate query profile via \emph{NextQueries}$_{\mathcal{A}_{-i}}$$(N\setminus\{i\},R_{-i})$;\tcp*{Queries for marg. econ} \label{alg:MLCA:QueryMarginal}
                } \label{alg:MLCA:EndFor}
                \textbf{foreach} bidder $i \in N$: send new queries to bidder $i$ and wait for reports\label{alg:MLCA:SendQueries};\;
                \textbf{foreach} bidder $i \in N$: receive bundle-value reports $R_i'$ and add them to $R_i$, i.e., $R_i = R_i\cup R_i'$\label{alg:MLCA:NewReports};\;
                $t = t+1$\label{alg:MLCA:IncreaseRound};\;
        }\label{alg:MLCA:endWhile}
        Let $\hat{v}_i(\cdot)$ denote bidder $i$'s \emph{report function} capturing his bundle-value reports $R_i$, $\forall i \in N$\label{alg:MLCA:ReportFunction};\;
        Compute final allocation: $a^{\text{MLCA}}\in \argmax_{a \in \mathcal{F}_R} \sum_{i \in N} \hat{v}_i (a_i)$\label{alg:MLCA:FinalAllocation};\;
        \textbf{foreach} bidder $i \in N$: compute his payment
        \begin{equation}
        \label{eq:payment_MLCA}
        p^{\text{MLCA}}_i = \sum_{j \in N \setminus \{i\}} \hat v_j ( a^{-i}_j) - \sum_{j \in N \setminus \{i\}}\hat
        v_j ( a^{\text{MLCA}}_j), \text{\hspace{1cm} where } a^{-i}\in \argmax_{a \in \mathcal{F}_R} \sum_{j \in N \setminus \{i\}} \hat{v}_j (a_j) ;
        \end{equation}\label{alg:MLCA:PaymentComputation}\\
        Output allocation $a^{\text{MLCA}}$ and payments $p^{\text{MLCA}}$\label{alg:MLCA:Output};\;
        \caption{Machine Learning-powered Combinatorial Auction (MLCA)}
        \label{alg:MLCA}
\end{algorithm}

Algorithm~\ref{alg:MLCA} provides a formal description of the MLCA
mechanism -- again, to improve clarity and readability, we here
present a slightly abbreviated version; we present the full version of
MLCA in Algorithm~\ref{alg:MLCA_Full} of
\refapp{app:FullVersionsOfAlgorithms}.  The mechanism is
parameterized by a profile of machine learning algorithms $\mathcal A$
(one for each bidder) that will be used to parameterize the query
module. The mechanism has two more parameters: an overall
\textit{maximum number of queries} $Q^{\text{max}}$ that it can ask
each bidder, and the \textit{number of initial queries}
$Q^{\text{init}}$ that it asks each bidder in the
\textit{initialization phase} of the auction (which we can think of as
round $0$).

When the auction begins, the mechanism asks each bidder $i$ to report
their values for $Q^{init}$ randomly chosen bundles
(Line~\ref{alg:MLCA:InitialReports}). Next, each bidder $i$ reports a
value for each bundle he is queried, and all resulting bundle-value
pairs are then stored in the variable $R_i$
(Line~\ref{alg:MLCAInitialReportBookkeping}). We include this
initialization phase
(Lines~\ref{alg:MLCA:InitialReports}-\ref{alg:MLCAInitialReportBookkeping})
before going into the iterative phase, such that the ML algorithms
used in the query module already have some amount of training data
when they are first called. Specifically, this avoids wasteful
learning and optimization steps in the query module (when the ML
algorithms have no or little training data) and additionally leads to
some amount of exploration of the bundle space.\footnote{In our
  experiments in Section~\ref{sec:MLCAvsCCAExp}, we optimize the
  number of initial queries $Q^{\text{init}}$ to maximize
  efficiency. There are many avenues for potential improvement of the
  initialization phase, for example by using a Bayesian approach to
  determine the optimal sequence of initial queries. However,
  exploring those ideas is beyond the scope of this paper and we leave
  them to future work.}

Next, we enter the \emph{iterative phase} of the mechanism. We first
define the total number of auction rounds~$T$
(Line~\ref{alg:MLCA:TotRounds}), which is simply the number of queries
left (i.e., $Q^{max} - Q^{init}$) divided by the number of agents in
the auction~$n$, because in each round, each bidder will be asked $n$
queries (one query in the main economy and $n-1$ marginal economy
queries). Next,
Lines~\ref{alg:MLCA:RoundIterator}-\ref{alg:MLCA:endWhile} iterate
over the rounds of the auction. In Line~\ref{alg:MLCA:QueryMain} we
call the function \emph{NextQueries} to generate a query profile for
the main economy (one new query for each bidder). Therefore, the
function \emph{NextQueries} takes as inputs the set of all bidders $N$
and all bidders' bundle-value reports received so far
$R$. Additionally, we parameterize \emph{NextQueries} using the
profile of ML algorithms $\mathcal{A}$. Next, in
Lines~\ref{alg:MLCA:BiddersIt}-~\ref{alg:MLCA:EndFor}, we then create
a query profile for each bidder $i$'s marginal economy. Accordingly,
in Line~\ref{alg:MLCA:QueryMarginal}, we again call the function
\emph{NextQueries}, but this time using as inputs the set of all
bidders excluding bidder $i$, and all bundle-value reports except for
those by bidder $i$. Additionally, we now parameterize
\emph{NextQueries} using
$\mathcal{A}_{-i}$.\footnote{\label{FN:BookkeepingQueries}Note that in
  our implementation of MLCA, when calling \emph{NextQueries}, we also
  hand over the set of queries that have already been generated in the
  current auction round, such that the query module can automatically
  exclude these queries in the optimization steps. This is needed to
  guarantee that the query module generates a \emph{new} query for
  every bidder. For details, please see
  Algorithms~\ref{alg:ML_Elicitation_Full} and \ref{alg:MLCA_Full} in
  \refapp{app:FullVersionsOfAlgorithms}.}

Next, we send all queries that have been generated in this round to
the bidders and wait for their reports
(Line~\ref{alg:MLCA:SendQueries}). Upon receiving each bidder's $i$'s
reports for the queried bundles, we then store the new bundle-value
pairs in the variables $R_i$
(Line~\ref{alg:MLCA:NewReports}). Finally, at the end of this auction
round, we increase the round counter by one
(Line~\ref{alg:MLCA:IncreaseRound}).

After the last round of the auction (Line~\ref{alg:MLCA:endWhile}), we
proceed towards computing the final allocation and payments. Because
these steps require us to ``look up'' a bidder's value report for
specific bundles, for notational simplicity, we first construct each
bidder $i$'s \textit{report function} $\hat{v}_i(\cdot)$ to capture
all bundle-value reports made by bidder $i$
(Line~\ref{alg:MLCA:ReportFunction}). Recall from
Section~\ref{ssec:CASetting} that $\hat{v}_i(\cdot)$ is defined in
such a way that, for any bundle $x \in R_i$, $\hat{v}_i(x)$ is bidder
$i$'s value report for $x$ (and undefined otherwise). In
Line~\ref{alg:MLCA:FinalAllocation}, we compute the final allocation
$a^{\text{MLCA}}$ by solving $\argmax_{a \in \mathcal{F}_R} \sum_{i
  \in N} \hat{v}_i (a_i)$, which is a standard (combinatorial) winner
determination problem to find the optimal allocation according to the
$\hat{v}_i$'s. Note, that the set of feasible allocations
\mbox{$\mathcal{F}_R=\{a\in \mathcal{F}: a_i \in R_i \;\forall i \}$}
is restricted to only contain bundles $a_i$ for which a bidder has
made an explicit value report.\footnote{This implies that
  $\hat{v}_i(\cdot)$ will only ever be called for bundles for which it
  is defined.} In Line~\ref{alg:MLCA:PaymentComputation} we then
compute each bidder $i$'s payment $p_i^{\text{MLCA}}$. Concretely, in
Equation~\eqref{eq:payment_MLCA}, we essentially compute VCG payments;
but in contrast to standard VCG, when computing the optimal
allocations $a^{\text{MLCA}}$ and $a^{-i}$, we restrict the set of
feasible allocations to only contain bundles for which the bidders
have explicitly reported a value.  Finally, we output the final
allocation $a^{\text{MLCA}}$ and the payment profile
$p^{\text{MLCA}}$.

\begin{remark}[Randomization and Information Hiding in MLCA]
\label{rem:InformationSharingInMLCA}
Note that there are two steps in MLCA where we use
randomization. First, in Line~\ref{alg:MLCA:InitialReports} of the
algorithm, we generate a separate randomly chosen set of queries for
every bidder, such that the bidders do not know each others'
queries. Second, in Line~\ref{alg:MLCA:SendQueries} of the algorithm,
we send the queries to the bidders as a \emph{randomized} list, such
that a bidder cannot easily identify whether a query was generated in
the main economy or in a specific bidder's marginal economy.  Of
course, we never tell any bidder about the queries nor value reports
of other bidders. These design features make MLCA less transparent to
the bidders, which makes it more difficult for a bidder to predict how
the mechanism will use a given value report. Intuitively, this makes
the mechanism more robust to manipulations, as we will discuss in
Section~\ref{SEC:Incentives}. This argument is similar to the one
presented in~\citet{parkes2001iterative}.
\end{remark}

\begin{remark}[Answering Value Queries with Upper and Lower Bounds]
Note that MLCA uses standard value queries that ask bidders to reply
with a single value report to each query. However, in practice, rather
than replying with a single (exact) value report, it may be easier for
bidders to specify \textit{bounds on bundle values} \citep[see,
  e.g.,][]{parkes2006mit}. In subsequent work, \cite{Beyeler2020iMLCA} have
recently introduced a modified version of MLCA that takes as input
bidders' reports consisting of upper and lower bounds on their values,
and they have shown that this modified version of MLCA still achieves
comparable levels of efficiency.
\end{remark}

\subsection{Additional Design Features of MLCA}

We now describe three additional design features of MLCA that will
likely be important in many domains when applying MLCA: (a) enabling
bidders to ``push bids,'' (b) enabling the auctioneer to control the
number of rounds of the auction, and (c) enabling the auctioneer to
switch out the payment rule (e.g., to charge core-selecting
payments).\medskip

\noindent\textbf{Push bids.} Recall that MLCA aims to iteratively
elicit the most useful information from the bidders by sending them
carefully chosen queries.  However, a bidder might have some
particularly useful information about his valuation that  could improve efficiency. Thus, a natural idea is to let bidders
provide unsolicited information to the mechanism. Following
\citet{sandholm2006preference}, we call this ``bidder push.''
Concretely, in MLCA, we allow bidders to submit a limited number of
\emph{push bids}, which are unsolicited bundle-value pairs that the
bidders can submit in the initialization phase.
These push bids are then treated the same as the queried bids, i.e, they are included in the learning step;\footnote{Given the different nature of these bids, it may be useful to weight these bids differently in the learning step, to avoid potential negative effects on efficiency.} and of course, they are included in the final winner determination (the
details are described in Algorithm~\ref{alg:MLCA_Full} in
\refapp{app:FullVersionsOfAlgorithms}).  Note that a second
motivation for adding push bids (beyond potentially increasing
efficiency) is related to incentives.  On this point,
\citet{nisan2007computationally} have shown that letting bidders
provide unsolicited information in a VCG-style mechanism can improve
the incentives of a mechanism. In Section~\ref{SEC:Incentives}, where
we analyze the incentives of MLCA, we will make a similar point,
arguing that push bids improve the incentives of MLCA.  Finally, in
some domains, it may be useful to let bidders push some information
about their preferences in a language that is richer than using
bundle-value pairs. For example, bidders could state ``my valuation is
additive,'' ``I am only interested in a specific subset of the
items,'' or ``item A creates strong synergies with all other items.''
MLCA can be extended to also handle such structural
preference descriptions.\medskip

\noindent\textbf{Controlling the number of rounds.} Note that the total number of
auction rounds~$T$ in MLCA depends on the maximum number of queries $Q^{\text{max}}$, the number of initial queries $Q^{\text{init}}$, and the number of agents $n$ in the following way:
$T=\floor{(Q^{\text{max}} - Q^{\text{init}})/n}$ (Line~\ref{alg:MLCA:TotRounds} of Algorithm~\ref{alg:MLCA}). In practice, the auctioneer cannot control any of these parameters without hindering the auction performance: $n$ is determined by the setting at hand, $Q^{\text{max}}$ is maximized under the constraint of maintaining a reasonable bidding effort, and $Q^{\text{init}}$ is set to be the optimal amount of exploratory queries made during the initialization phase of the auction. To provide the auctioneer with better control over the number of auction rounds, we deploy MLCA with an additional parameter $Q^{\text{round}}$ that controls the number of queries MLCA asks each bidder at each round: When $Q^{\text{round}}$ is larger than the number of bidders $n$, the auction would go through the steps of generating queries for the main and marginal economies (Lines~\ref{alg:MLCA:QueryMain}--\ref{alg:MLCA:EndFor})
multiple times. Otherwise, it would randomly select $Q^{\text{round}}$ queries for each bidder and use these to determine his queries. We refer to \refapp{appendix:additional_features} for a more detailed description of this additional design feature. \medskip

\noindent\textbf{Alternatives to VCG payments.}
Recall that our auction computes final payments by applying the VCG
payment rule (see Section~\ref{ssec:CAEfficiency}) to the
bundle-value pairs reported by the bidders (see Line~\ref{alg:MLCA:PaymentComputation} of Algorithm~\ref{alg:MLCA}).
However, this choice is completely modular in our auction, and the auctioneer can instead adopt different payment rules with other desirable properties (e.g., core-selecting payment rules such as \textit{VCG-nearest}~\citep{day2012quadratic}). We provide a more extended discussion on the trade-offs related to the choice of different payment rules in \refapp{appendix:additional_features}.

\section{ Theoretical Analysis}
\label{sec:theoreticalAnalysis}

We now turn to the theoretical analysis of the MLCA mechanism.

\subsection{Learning Error and Elicitation Guarantees}
\label{sec:learningError}

Because MLCA employs an ML algorithm, the effectiveness of MLCA in terms of eliciting the most useful information and its overall efficiency
are related to the \emph{prediction performance} of that ML algorithm.
In this section, we formalize this relationship by providing
theoretical guarantees that can be used by an auctioneer as guidance in
selecting and parameterizing an ML algorithm for use in MLCA.

The prediction performance of an ML algorithm is usually assessed via
its \emph{learning error}.  Formally, for a bidder $i$, we define the
learning error on a bundle $x$ as the absolute value of the difference
between the bidder's true value and the learned value for that bundle,
i.e.: $\text{err}(x) = |v_i(x) - \tilde{v}_i(x)|$.  Of course, in
practice, an auctioneer cannot evaluate the learning error directly
because she does not have access to the true valuations. However, she
can evaluate the ML algorithm using data from previous auctions. In
general, the better the ML algorithm captures the domain, the smaller
the learning errors will be.  In Section~\ref{sec:OptMLExp}, we
evaluate the learning error of four different ML algorithms and show
how they depend on the richness of the ML model used.  Throughout this
section, we leave the ML algorithm unspecificied and derive
theoretical properties that hold for any ML algorithm.

\subsubsection{Bounding the Efficiency Loss}
\label{subsubsec:EfficiencyLossBound}

In service of connecting the learning error of the ML algorithm to the
efficiency loss of MLCA, we start with a bound on the efficiency loss
of the learned valuation $\tilde a$, as follows:

\begin{proposition}\label{prop:learning_error_learned}
Let $\tilde{v}$ be a learned valuation profile. Let $\tilde a$ be an
efficient allocation w.r.t. to $\tilde{v}$, and let $a^*$ be an
efficient allocation w.r.t. the true valuation profile. Assume that
the learning errors in the bundles of these two allocations are
bounded as follows: for each bidder $i$, $|\tilde{v}_i(\tilde a) -
v_i(\tilde a)|\le\delta_1$ and $|\tilde{v}_i(a^*) -
v_i(a^*)|\le\delta_2$, for $\delta_1,\delta_2\in\mathbb R$. Then the
following bound on the efficiency loss in $\tilde a$ holds:
\begin{equation}\label{eq:learning_error_learned}
1 - \eff(\tilde a) \le  \frac{n(\delta_1+\delta_2)}{V(a^*)} .
\end{equation}
\end{proposition}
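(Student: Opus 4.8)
The plan is to rephrase the efficiency loss as a ratio of social-welfare gaps and then chain three elementary inequalities. Writing $1 - \eff(\tilde a) = \bigl(V(a^*) - V(\tilde a)\bigr)/V(a^*)$, it suffices to establish the numerator bound $V(a^*) - V(\tilde a) \le n(\delta_1+\delta_2)$, after which dividing through by $V(a^*)$ yields \eqref{eq:learning_error_learned} directly.

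To get the numerator bound, I would pass through the learned social welfare $\tilde V$. First, summing the hypothesis $|\tilde v_i(\tilde a) - v_i(\tilde a)|\le\delta_1$ over the $n$ bidders gives $V(\tilde a) = \sum_{i} v_i(\tilde a) \ge \sum_i \tilde v_i(\tilde a) - n\delta_1 = \tilde V(\tilde a) - n\delta_1$. Second, since $\tilde a$ is efficient with respect to $\tilde v$ and $a^*\in\mathcal F$, optimality gives $\tilde V(\tilde a)\ge \tilde V(a^*)$. Third, summing the hypothesis $|\tilde v_i(a^*) - v_i(a^*)|\le\delta_2$ over all bidders gives $\tilde V(a^*) = \sum_i \tilde v_i(a^*) \ge \sum_i v_i(a^*) - n\delta_2 = V(a^*) - n\delta_2$. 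Concatenating the three: $V(\tilde a) \ge \tilde V(\tilde a) - n\delta_1 \ge \tilde V(a^*) - n\delta_1 \ge V(a^*) - n\delta_1 - n\delta_2$, which is exactly $V(a^*) - V(\tilde a)\le n(\delta_1+\delta_2)$.

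There is no real obstacle here — the only things to be careful about are (i) keeping the direction of each absolute-value bound consistent with the inequality it feeds into (we use the lower bound $v_i \ge \tilde v_i - \delta_1$ in the first step and $\tilde v_i \ge v_i - \delta_2$ in the third), and (ii) observing that summing a per-bidder error bound over $N$ is what introduces the factor $n$. One may also note in passing that the bound is only meaningful when $n(\delta_1+\delta_2) < V(a^*)$, i.e. when the learning errors are small relative to the optimal welfare; otherwise the right-hand side exceeds $1$ and the statement is vacuous. I would close by remarking that $V(a^*)>0$ (which we may assume, since otherwise every allocation is trivially efficient) is what makes the division step legitimate.
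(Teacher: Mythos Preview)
Your argument is correct and is essentially the same as the paper's: both pass through the learned social welfare $\tilde V$, use the optimality of $\tilde a$ with respect to $\tilde v$ to drop the term $\tilde V(a^*)-\tilde V(\tilde a)\le 0$, and bound the remaining two differences by $n\delta_1$ and $n\delta_2$ via summing the per-bidder error assumptions. The only cosmetic difference is that the paper writes the decomposition as a single telescoping identity before bounding, whereas you present it as a chain of three inequalities.
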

\begin{proof}
We add and subtract $\displaystyle V(\tilde a)$ and
$\displaystyle \tilde V(a^*)$ to $\displaystyle V(a^*) -
V(\tilde a)$, obtaining
\begin{equation}\label{eq:learning_error_learned2}
V(a^*) - V(\tilde a) = \left(V(a^*) - \tilde V(a^*)\right)
+ \left( \tilde V (\tilde a) - V(\tilde a)\right) +
\left( \tilde V (a^*)-\tilde V (\tilde a)\right).
\end{equation}
Given that $\tilde a$ is a social welfare-maximizing allocation under
$\tilde v$, the term $\displaystyle \tilde V (a^*)-\tilde V ( \tilde
a)$ in \eqref{eq:learning_error_learned2} cannot be
positive. Inequality~\eqref{eq:learning_error_learned} follows by
considering that any real number cannot be greater than its absolute
value, and dividing both side by $V(a^*)$.
\end{proof}

In words, Proposition~\ref{prop:learning_error_learned} shows that the
efficiency loss at $\tilde{a}$ is bounded by the overall learning
error at the two allocations $a^*$ and $\tilde{a}$ (normalized by the
maximum social welfare). Note that the proposition does not require
any assumptions regarding bidders' strategies; in particular, it holds
independent of whether $\tilde{v}$ was learned based on truthful or
misreported values.

Now that we have a bound on $\tilde{a}$, we can derive a bound on the
final allocation $a^{MLCA}$:
\begin{proposition}\label{prop:learning_error}
Assume that all bidders only submit truthful bids to MLCA. Let
$\tilde{v}$ be some learned valuation profile, let $\tilde a$ be an
efficient allocation w.r.t. to $\tilde{v}$, and let $a^*$ be an
efficient allocation w.r.t. the true valuation profile.  Assume that
the learning errors in the bundles of $\tilde{a}$ and $a^*$ are
bounded as follows: for each bidder $i$, $|\tilde{v}_i(\tilde{a}) -
v_i(\tilde{a})|\le\delta_1$ and $|\tilde{v}_i(a^*) -
v_i(a^*)|\le\delta_2$, for $\delta_1,\delta_2\in\mathbb R$. Then
following bound on the efficiency loss in the final allocation
$a^{MLCA}$ holds for all $\tilde{v}$ learned for any economy, in any
iteration of MLCA (i.e., in the Query Module;
Line~\ref{AlgQueryModule:LearningStep} of
Algorithm~\ref{alg:ML_Elicitation}):
\begin{equation}\label{eq:learning_error}
1 - \eff(a^{MLCA}) \le  \frac{n(\delta_1+\delta_2)}{V(a^*)}
\end{equation}
\end{proposition}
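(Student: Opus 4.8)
The plan is to reduce the statement to Proposition~\ref{prop:learning_error_learned} by means of one structural fact about MLCA: under truthful bidding, the allocation that the query module computes in its (unrestricted) optimization step (Line~\ref{AlgQueryModule:OptStep} of Algorithm~\ref{alg:ML_Elicitation}) is still available to the final winner-determination step, so that $a^{MLCA}$ can only be better, in true social welfare, than that allocation. Combining this with the already-proved efficiency guarantee for $\tilde a$ gives the bound.

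First I would fix an arbitrary round $t$ and let $\tilde v$ be the learned valuation profile of the main-economy call \emph{NextQueries}$_{\mathcal A}(N,R)$ in that round, and let $\tilde a\in\argmax_{a\in\mathcal F}\tilde V(a)$ be the allocation selected in Line~\ref{AlgQueryModule:OptStep}. The key claim is that $\tilde a\in\mathcal F_R$ when MLCA terminates, i.e. $\tilde a_i\in R_i$ for every bidder $i$. This I would prove by tracing $\tilde a$ through the module: it sets $q_i:=\tilde a_i$ in Line~\ref{AlgQueryModule:FirstQueryAssignment}, and in Lines~\ref{AlgQueryModule:QueryCheck}--\ref{AlgQueryModule:OverwriteQuery} it overwrites $q_i$ \emph{only} in the branch where $\tilde a_i\in R_i$ already holds; in the complementary branch $q_i$ stays equal to $\tilde a_i$, so bidder $i$ is queried $\tilde a_i$ and it enters $R_i$ in Line~\ref{alg:MLCA:NewReports} of Algorithm~\ref{alg:MLCA}. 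Either way $\tilde a_i\in R_i$ after round $t$, and since report sets only grow over the course of the auction, this persists to termination.

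The rest is assembly. Under truthful bidding $\hat v_i(x)=v_i(x)$ for all $x\in R_i$, so the final allocation satisfies $a^{MLCA}\in\argmax_{a\in\mathcal F_R}\sum_{i\in N}\hat v_i(a_i)=\argmax_{a\in\mathcal F_R}V(a)$; together with $\tilde a\in\mathcal F_R$ this yields $V(a^{MLCA})\ge V(\tilde a)$. Applying Proposition~\ref{prop:learning_error_learned} to $\tilde v$, $\tilde a$, $a^*$ with the hypothesized bounds $\delta_1,\delta_2$ gives $1-\eff(\tilde a)\le n(\delta_1+\delta_2)/V(a^*)$, i.e. $V(\tilde a)\ge V(a^*)-n(\delta_1+\delta_2)$. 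Chaining the two inequalities and dividing by $V(a^*)$ gives $\eff(a^{MLCA})\ge 1-n(\delta_1+\delta_2)/V(a^*)$, which is exactly~\eqref{eq:learning_error}; since $t$ was arbitrary, the bound holds for the learned profile of the main economy in every round (and one may take the smallest right-hand side over all rounds). For a marginal-economy call that omits some bidder $\iota$, the same argument goes through after viewing the learned profile as the full profile $(\tilde v_{-\iota},0)$ and completing the selected allocation with the empty bundle for $\iota$ (available to every bidder since values are normalized): the completed allocation is still efficient for this extended profile and lies in $\mathcal F_R$, and the only new error term, $|0-v_\iota(a^*_\iota)|=v_\iota(a^*_\iota)$, is absorbed into $\delta_2$.

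The main obstacle is the structural claim of the second paragraph: one must check carefully that neither the conditional re-optimization inside the query module nor the within-round de-duplication of queries used in the full version of the algorithm prevents the \emph{original} $\tilde a_i$ from being added to $R_i$ --- it does not, precisely because re-optimization for bidder $i$ is triggered exactly when $\tilde a_i$ is already present. Everything else is a single invocation of Proposition~\ref{prop:learning_error_learned} together with the observation that truthfulness turns the final winner-determination problem into a maximization of true social welfare over $\mathcal F_R$; note that Proposition~\ref{prop:learning_error_learned} is itself strategy-independent, so truthfulness is used only to identify $a^{MLCA}$ as a true-welfare maximizer over $\mathcal F_R$.
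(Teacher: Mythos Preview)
Your proposal is correct and follows the same approach as the paper's own proof: reduce to Proposition~\ref{prop:learning_error_learned} via the inequality $\eff(a^{MLCA})\ge\eff(\tilde a)$, which in turn comes from the structural observation that every bundle of $\tilde a$ ends up in the corresponding $R_i$ (either because it was already there, triggering the re-optimization branch, or because it is freshly queried), so that under truthful bidding $a^{MLCA}$ dominates $\tilde a$ in true social welfare. The paper states this structural fact in one sentence (``MLCA afterwards ensures that all bundles contained in $\tilde a$ will be queried'') without tracing through the branches of the module as you do; your more explicit verification and your separate handling of the marginal-economy case (by padding with the empty bundle for the excluded bidder and noting the extra error term this induces in $\delta_2$) go beyond what the paper spells out, but the underlying idea is identical.
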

\begin{proof}
Given Proposition~\ref{prop:learning_error_learned}, it suffices to
show that $\eff(a^{MLCA})\geq \eff(\tilde{a})$. This follows from
considering that, whenever the Query Module is called, MLCA afterwards
ensures that all bundles contained in $\tilde a$ will be queried. This
implies that, under truthful reports, the efficiency of MLCA can only
be (weakly) higher than the efficiency of $\tilde a$.  Thus,
$\eff(a^{MLCA})\geq \eff(\tilde{a})$.
\end{proof}
\medskip

Proposition~\ref{prop:learning_error} provides a bound on the
efficiency loss of the mechanism as a whole, in the learning error of
the ML algorithm. It follows that, with truthful reports, the mechanism is fully efficient if it has a perfect ML algorithm with zero learning error.

\subsection{Incentives}
\label{SEC:Incentives}

In this section, we study the incentive properties of MLCA. We first explain how MLCA's design features promote good incentives \textit{in practice} (Section~\ref{subsubsec:GoodIncentivesInPractice}). We then show that, under certain assumptions, MLCA is social welfare aligned and ex-post Nash incentive compatible (Section~\ref{subsubsec:SocialWelfareAlignmentAndExPostNash}).

\subsubsection{Design Features Promoting Good Incentives in Practice}
\label{subsubsec:GoodIncentivesInPractice}

For MLCA to achieve high allocative efficiency, it is important that bidders act truthfully. Formally:

\begin{definition}
\label{def:truthful_strategy}
    In MLCA, a bidder’s strategy is called \textit{truthful} if the bidder (1) only replies truthfully to any value query and (2) only pushes truthful value reports.
\end{definition}
\smallskip

Given the iterative design of MLCA, bidders have various direct and indirect ways to influence the final allocation as well as their payments. Without making any assumptions, it is not possible to formally guarantee that it is always optimal for bidders to be truthful. Furthermore, as is the case for all deployed iterative CAs, it is analytically intractable to determine the equilibrium bidding strategies in MLCA (e.g., see \cite{Levin2016PropertiesOfCCA} for an analysis of the incentives of the CCA). Nevertheless, we argue that MLCA has good incentives in practice, in the sense that any possible manipulation strategies are implausible to execute successfully and too risky, such that a rational bidder should simply report truthfully. The following four design features of MLCA promote this behavior:
\begin{itemize}
    \item \textbf{Design Feature \#1:} The MLCA outcome is computed based on the elicited bids only
    \item \textbf{Design Feature \#2:} MLCA computes VCG payments (based on the elicited bids)
    \item \textbf{Design Feature \#3:} MLCA explicitly queries each bidder's marginal economy
    \item \textbf{Design Feature \#4:} MLCA allows for push-bids
\end{itemize}
\smallskip

Design Feature \#1: It is important to note that the final allocation (Step~\ref{alg:MLCA:FinalAllocation} in Algorithm~\ref{alg:MLCA}) and the final payments (Step~\ref{alg:MLCA:PaymentComputation} in Algorithm~\ref{alg:MLCA}) are computed based on the set of elicited bids only. In particular, no ML is used in this step. This implies that bidders only care about the set of bids that MLCA elicits.

Design Feature \#2: Based on these bids, MLCA computes VCG payments.\footnote{See \refapp{appendix:additional_features} for a discussion regarding core payments.} Of course, without requiring bidders to report their full value function, VCG is no longer strategyproof \citep{NisanRonen2000}. To see the different pathways of how a bidder may be able to benefit from misreporting in MLCA, consider a bidder's utility when the VCG payment rule is applied to the bundle-value pairs at the end of MLCA:
\begin{equation}
\label{eq:utility_MLCA}
u_i=\underbrace{\bigg( v_i(a^{\text{MLCA}}_i) + \sum_{j \in N \setminus \{i\}} \hat v_j (a^{\text{MLCA}}_j) \bigg)}_{\text{Reported social welfare of main economy}} \quad - \underbrace{\sum_{j \in N \setminus \{i\}} \hat v_j (a^{-i}_j)}_{\text{Reported social welfare of marginal economy}}
\end{equation}

Observe from this equation that a bidder seeking to manipulate MLCA
can evaluate the effect of a manipulation by grouping
the three terms in his utility function into two components: the first
two terms are the \textit{reported social welfare in the main economy}
(when bidder $i$ is truthful), while the last term is the
\textit{social welfare of bidder $i$'s marginal economy}. Any
beneficial manipulation must increase the difference between the first
and the second component. We next argue that, with Design Features \#3 and \#4 in place, it is practically implausible to successfully execute any such manipulation.

Design Feature \#3: In every auction round, MLCA explicitly queries each bidder's marginal economy (Step~\ref{alg:MLCA:QueryMarginal} in Algorithm~\ref{alg:MLCA}). This means that, for each bidder $i$, the query module generates a new query profile with one query for all other bidders, such that this query profile maximizes the learned social welfare in bidder $i$'s marginal economy. This makes it very likely that, for the final outcome computation, most or even all bids constituting the optimal allocation in bidder $i$’s marginal economy have been elicited without considering his reports. Thus, the only way for bidder $i$ to have an effect on the social welfare of his marginal economy is to affect (via his own bids) how the other bidders' valuations are learned. However, to safely execute such a manipulation, a bidder would essentially need complete information about the other bidders and perfect information about all steps taken by the mechanism. In practice, this is implausible for multiple reasons. First, not only does a bidder have uncertainty regarding the other bidders’ values but also regarding their strategies (e.g., which bundle-value pairs they will push). Second, due to the design of MLCA, a bidder does not know which queries the other bidders have been asked (e.g., due to the random initial bundles; see Remark~\ref{rem:InformationSharingInMLCA}). Given these two sources of uncertainty, we argue that any manipulation of the marginal economy is very risky for a bidder and thus implausible in practice.

If we accept that a bidder's marginal economy is practically independent of his reports, a bidder's only way to increase his utility is to increase the reported social welfare of the main economy (i.e., the first term in Equation~\ref{eq:utility_MLCA}). Here, bidder $i$'s reports are used in the optimization that determines $a^{\text{MLCA}}$. The following example illustrates how a bidder might (theoretically) want to misreport his value to increase the social welfare in the main economy.
 
\begin{example}
\label{ex:manipulation}
Consider a setting with 2 bidders and 2 items. We let
$\textsc{a}=(1,0), \textsc{b}=(0,1), \textsc{ab}=(1,1)$. We use MLCA
with linear regression, which in this setting estimates 2 coefficients
(one per item). The other MLCA parameters are set as follows:
$Q^{\text{max}}=2$, and $Q^{\text{init}}=1$. Bidder $1$'s true values
are $v_1(\textsc{a})=v_1(\textsc{ab})=2$, $v_1(\textsc{b})= 1.1$, and
bidder $2$'s ones are $v_2(\textsc{a})=v_2(\textsc{b})=1$,
$v_2(\textsc{ab})=2$.  Implicitly, MLCA assumes a report from both of
zero for the empty bundle. Assume that, in
Line~\ref{alg:MLCA:InitialReports} of Algorithm~\ref{alg:MLCA}, bidder
$2$ is assigned query \textsc{ab} and he truthfully reports $\hat
v_{2}(\textsc{ab})=2$; bidder~$1$ is assigned query $\textsc{b}$, and
he truthfully report $\hat v_{1}(\textsc{b})=1.1$.

The mechanism then calls the Query Module to determine new queries for
the main economy. The valuations learned in
Line~\ref{AlgQueryModule:LearningStep} of
Algorithm~\ref{alg:ML_Elicitation} are: $\tilde v_1(\textsc{a})=0$ and
$\tilde v_1(\textsc{b})=\tilde v_1(\textsc{ab})=1.1$, and $\tilde
v_2(\textsc{a})=\tilde v_2(\textsc{b})=1$, and $\tilde
v_2(\textsc{ab})=2$.  Bidder $1$ is then assigned
query $\textsc{ab}$, while bidder $2$ is assigned query $\textsc{a}$.
If bidders are truthful, bidder~1 is allocated bundle $\textsc{b}$ and
charged $p_1^{\text{MLCA}} = 1$, with utility equal to $1.1 - 1 =
0.1$.

If bidder~$1$ had reported 0.9 for bundle $\textsc{b}$, i.e., $\hat
v_{1}(\textsc{b})=0.9$, then the valuation $\tilde v_1$ learned in
Line~\ref{AlgQueryModule:LearningStep} of
Algorithm~\ref{alg:ML_Elicitation} would have been $\tilde
v_1(\textsc{a})=0$ and $\tilde v_1(\textsc{b})=\tilde
v_1(\textsc{ab})=0.9$; he would have been assigned query $\textsc{a}$
in the main economy, obtaining this bundle in the final allocation for
a payment equal to $1$. Then, his utility would have been $2-1=1>0.1$.
\end{example}
\smallskip

Design Feature \#4: Due to the uncertainty a bidder faces regarding the state of the mechanism (see Remark~\ref{rem:InformationSharingInMLCA}), the effect of the manipulation shown in Example~\ref{ex:manipulation} would be very
difficult to predict for the bidder, which makes it very risky. But more importantly, the
ability for bidders to push bids in the initial auction phase make the manipulation \emph{unnecessary}, even for bidders who would otherwise contemplate
executing it. Specifically, with push bids, instead of misreporting
his value for $\textsc{b}$ to ultimately be queried and allocated $\textsc{a}$, bidder $1$ could just push his value for $\textsc{a}$, and then report truthfully afterwards. Given this, we argue that any remaining manipulations would be too risky for the bidder to execute and are thus implausible in practice.

\subsubsection{Social Welfare Alignment and Ex-post Nash Incentive Compatibility}
\label{subsubsec:SocialWelfareAlignmentAndExPostNash}

We now show that, given two assumptions, we also obtain formal incentive guarantees for MLCA.

\begin{assumption}\label{assumption:independent_marginal}
	For every bidder $i$, if all other bidders report truthfully, then the social welfare of bidder i’s marginal economy is independent of bidder i’s value reports.
\end{assumption}

In our discussion of Design Feature \#3, we have provided arguments for why this assumption is justified in practice. 
Given this assumption, we can make our earlier arguments regarding a bidder's remaining manipulation opportunities more precise.
\begin{proposition}[Social Welfare Alignment of MLCA]
\label{Prop:SocialWelfareAlignment}
If Assumption~\ref{assumption:independent_marginal} holds, and if all other bidders are truthful, then MLCA is \emph{social welfare aligned}, i.e., increasing the reported social welfare of $a^{\text{MLCA}}$ is the only way for a bidder to increase his utility.
\end{proposition}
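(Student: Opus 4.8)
The plan is to start directly from the expression for bidder $i$'s utility under MLCA's VCG payment rule. By quasi-linearity (Equation~\eqref{eq:quasiLinearUtility}), bidder $i$ obtains his \emph{true} value $v_i(a^{\text{MLCA}}_i)$ for whatever bundle he is allocated, irrespective of the values he reports; his reports only influence which allocation $a^{\text{MLCA}}$ is selected and the two sums appearing in the payment rule~\eqref{eq:payment_MLCA}. Substituting the payment into $u_i = v_i(a^{\text{MLCA}}_i) - p^{\text{MLCA}}_i$ yields exactly Equation~\eqref{eq:utility_MLCA}, which decomposes $u_i$ into the first component $v_i(a^{\text{MLCA}}_i) + \sum_{j \in N\setminus\{i\}}\hat v_j(a^{\text{MLCA}}_j)$ -- the reported social welfare of the main-economy allocation $a^{\text{MLCA}}$ (with bidder $i$ counted at his true value) -- minus the second component $\sum_{j\in N\setminus\{i\}}\hat v_j(a^{-i}_j)$ -- the reported social welfare of bidder $i$'s marginal economy.

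Next I would invoke Assumption~\ref{assumption:independent_marginal}. Since all other bidders are assumed truthful, the assumption states precisely that the social welfare of bidder $i$'s marginal economy does not depend on bidder $i$'s value reports; hence the second component of~\eqref{eq:utility_MLCA} is a constant $c$ across all strategies bidder $i$ might play. It then follows immediately that $u_i$ equals the first component minus this constant, so for any two strategies of bidder $i$, $u_i$ is strictly larger under the first if and only if the reported social welfare of $a^{\text{MLCA}}$ is strictly larger under the first. In particular, the only way for bidder $i$ to increase his utility is to increase the reported social welfare of $a^{\text{MLCA}}$, which is the claim.

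The argument itself is short; the real content is carried entirely by Assumption~\ref{assumption:independent_marginal}, and the main point to be careful about is the scope of ``bidder $i$'s value reports'' in that assumption -- it must be read as covering every lever bidder $i$ controls (his answers to all value queries across all rounds as well as his push bids), since each of these can in principle propagate into the report sets $R_j$ of the other bidders via the main-economy calls to the query module, and thereby into the marginal-economy winner determination problem that defines the second component. Once one accepts that this indirect influence is absent (which is exactly what the assumption posits and what the discussion of Design Feature~\#3 argues is plausible in practice), social welfare alignment is an immediate consequence of the additive structure of the VCG utility in~\eqref{eq:utility_MLCA}.
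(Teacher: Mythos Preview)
Your proof is correct and follows essentially the same approach as the paper: decompose $u_i$ via the VCG payment rule into the reported main-economy welfare minus the reported marginal-economy welfare (Equation~\eqref{eq:utility_MLCA}), then use Assumption~\ref{assumption:independent_marginal} to freeze the latter term, leaving only the first component for bidder~$i$ to influence. Your treatment is somewhat more detailed---in particular the remark on the scope of ``bidder $i$'s value reports'' is a useful clarification---but the underlying argument is the same.
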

\begin{proof}
As we see in Equation~\eqref{eq:utility_MLCA}, in MLCA, the utility of a bidder is given by the difference between the reported social welfare in the main economy and the one in his marginal. Given Assumption~\ref{assumption:independent_marginal}, we know that, when the other bidders are truthful, a bidder cannot affect the reported social welfare of his marginal economy. Thus, the only way for him to increase his utility is to increase the reported social welfare for the final allocation $a^{\text{MLCA}}$.
\end{proof}

Proposition~\ref{Prop:SocialWelfareAlignment} says that any beneficial manipulation for a bidder must increase the reported social welfare in the main economy. However, this is already the primary goal of MLCA's elicitation process. In particular, the query module is designed such that it generates query profiles with maximum learned social welfare. Thus, if the ML algorithm works well, MLCA should already perform this task on behalf of the bidders, without the need for any manipulations. Indeed, in our experiments (Section~\ref{sec:MLCAvsCCAExp}), we show that, when all bidders report truthfully, MLCA finds the efficient allocation in the majority of auction instances, for two of the three domains we study. This motivates the second assumption:

\begin{assumption}
\label{assumption:efficient_main}
If all bidders bid truthfully then MLCA finds an efficient allocation.
\end{assumption}

With this assumption in hand, we can now provide formal incentive guarantees for MLCA:

\begin{proposition} [Truthful Ex-post Nash Equilibrium]
	If Assumptions~\ref{assumption:independent_marginal} and~\ref{assumption:efficient_main} hold, then bidding truthfully is an ex-post Nash equilibrium of the game induced by MLCA.
\end{proposition}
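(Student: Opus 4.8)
The plan is to reduce the statement, via the two propositions already established, to a one-line comparison of social welfare values. I would fix an arbitrary bidder $i$ and an arbitrary realized valuation profile $v$, assume that every bidder $j\neq i$ plays the truthful strategy of Definition~\ref{def:truthful_strategy}, and show that no strategy of bidder $i$ yields strictly higher utility than the truthful strategy. Since $i$ and $v$ are arbitrary, this establishes that truthful bidding is an ex-post Nash equilibrium.

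First I would invoke Proposition~\ref{Prop:SocialWelfareAlignment}: Assumption~\ref{assumption:independent_marginal} holds and the other bidders are truthful, so MLCA is social welfare aligned, i.e.\ in the utility decomposition \eqref{eq:utility_MLCA} the second component (the reported social welfare of bidder $i$'s marginal economy) is a constant that does not depend on bidder $i$'s reports, and the only way bidder $i$ can raise his utility is to raise the first component, the reported social welfare of the main economy. Because every $j\neq i$ reports truthfully, $\hat v_j=v_j$, so that first component equals $v_i(a^{\text{MLCA}}_i)+\sum_{j\neq i}v_j(a^{\text{MLCA}}_j)=V(a^{\text{MLCA}})$, the \emph{true} social welfare of the final allocation. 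Hence a best response for bidder $i$ is precisely a strategy that maximizes $V(a^{\text{MLCA}})$ (and the fixed marginal-economy term is the same for the truthful strategy and for any deviation, so utilities can be compared directly against the truthful benchmark).

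Next I would bound $V(a^{\text{MLCA}})$. For any reports, $a^{\text{MLCA}}\in\mathcal F_R\subseteq\mathcal F$, so $V(a^{\text{MLCA}})\le\max_{a\in\mathcal F}V(a)=V(a^*)$, regardless of what bidder $i$ does. On the other hand, if bidder $i$ also reports truthfully, then all bidders are truthful, and Assumption~\ref{assumption:efficient_main} gives that $a^{\text{MLCA}}$ is efficient, i.e.\ $V(a^{\text{MLCA}})=V(a^*)$. Thus the truthful strategy already attains the global maximum of $V(a^{\text{MLCA}})$, so no deviation can strictly increase it, and by the previous paragraph no deviation can strictly increase bidder $i$'s utility; truthful reporting is a best response, which completes the proof.

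I expect the only delicate point to be the bookkeeping in the second paragraph: the ``reported social welfare of the main economy'' appearing in \eqref{eq:utility_MLCA} already contains bidder $i$'s \emph{true} value $v_i(a^{\text{MLCA}}_i)$ rather than his report, so with truthful opponents it is exactly $V(a^{\text{MLCA}})$; one must also be careful that Assumption~\ref{assumption:independent_marginal} pins the marginal term to the \emph{same} constant under truthful play and under any deviation, which is what licenses comparing bidder $i$'s deviation utility directly with his truthful utility. Everything else is immediate from Propositions~\ref{prop:learning_error_learned}--\ref{Prop:SocialWelfareAlignment} and the definition of $a^*$.
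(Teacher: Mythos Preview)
Your proof is correct and follows essentially the same approach as the paper's: invoke social welfare alignment (via Assumption~\ref{assumption:independent_marginal} and Proposition~\ref{Prop:SocialWelfareAlignment}) to reduce the problem to maximizing the main-economy social welfare, then use Assumption~\ref{assumption:efficient_main} to observe that truthful bidding already achieves the efficient allocation, so no deviation can help. Your write-up is simply more explicit about the bookkeeping (identifying the first component of \eqref{eq:utility_MLCA} with $V(a^{\text{MLCA}})$ under truthful opponents, and the bound $V(a^{\text{MLCA}})\le V(a^*)$); the one stray remark is that Propositions~\ref{prop:learning_error_learned}--\ref{prop:learning_error} play no role here, only Proposition~\ref{Prop:SocialWelfareAlignment} does.
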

\begin{proof}
	Given Assumption~\ref{assumption:independent_marginal}, we know that MLCA is social welfare aligned; i.e., any beneficial manipulation must increase the social welfare of the main economy. Given Assumption~\ref{assumption:efficient_main}, MLCA finds an efficient allocation if all bidders are truthful. Thus, there is no possibility to further increase the reported social welfare in the main economy. The result follows.
\end{proof}

In practice, Assumptions~\ref{assumption:independent_marginal} and ~\ref{assumption:efficient_main} do not hold perfectly. Therefore, we also analyze the robustness of MLCA against various strategic manipulations via computational experiments (see Section~\ref{subsec:ManipulationExperiments}). In our experiments, we could not find any manipulation strategy that increases a bidder's utility, providing further evidence regarding MLCA's robustness against manipulations in practice.

\if 0
\paragraph{Expected Bidder Behavior in Practice.}

We conclude with a discussion of the expected bidder behavior
when using MLCA, we briefly summarize our analysis in this
section. First, we have noticed how querying the marginals protects against manipulations of the
marginal economy, such that (in practice), for a bidder, optimizing
his utility is equivalent to optimizing the social welfare of the main
economy. Second, we have shown how
the ``bidder push'' feature enables the bidder to share useful
information that he may have with the auctioneer, such that many
(otherwise attractive) manipulations become unnecessary. Given this,
we argue that, while not strategyproof, MLCA has good incentives
properties (similar to other practical CA designs). Therefore, in
practice, a bidder should follow the following simple strategy (which
maximizes the social welfare of the main economy): (1) if he is asked
a value query by the mechanism, he should respond truthfully; (2) throughout the auction, the bidder should push those bundles for which he
has a high value and for which he believes he has a good chance of
winning (up to the maximum allowed by MLCA). 
Of course, a theoretical,
computational or experimental analysis of bidders' strategies (and how
they depend, e.g., on the cap on the push queries) would be very
interesting. However, this is beyond the scope of this paper. Any such
analysis would most likely have to model bidders' beliefs regarding
the other bidders' values.

\fi

\begin{remark}[Collusion and Spitefulness]
A potential remaining concern regarding manipulative behavior is
\emph{collusion} (where two or more bidders manipulate together to
increase each others' profits) and \emph{spiteful bidding} (where a
bidder manipulates to decrease another bidder's profits). Given that
MLCA uses the VCG payment rule, it is (like VCG) also susceptible to
collusive and spiteful behavior. However, note that other practical
CAs like the CCA are also susceptible to similar manipulations
\citep{Levin2016PropertiesOfCCA}. Investigating the degree to which
our mechanism can be manipulated via spiteful or collusive bidding,
and comparing it to the CCA, is outside the scope of this paper but
interesting future work.
\end{remark}

\subsection{Individual Rationality}
\label{sec:IndividualRationality}
A mechanism is \textit{individually rational} if each bidder's payment
is less than or equal to the bidder's reported value for his final
allocation. We can show:
\begin{proposition}
MLCA satisfies individual rationality.
\end{proposition}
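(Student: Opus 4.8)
The plan is to unfold the VCG-style payment formula in Equation~\eqref{eq:payment_MLCA} and reduce individual rationality to a single comparison of reported welfares. Writing out the definition, individual rationality for bidder $i$ says $\hat v_i(a^{\text{MLCA}}_i) \ge p^{\text{MLCA}}_i = \sum_{j \in N \setminus \{i\}} \hat v_j(a^{-i}_j) - \sum_{j \in N \setminus \{i\}} \hat v_j(a^{\text{MLCA}}_j)$. Moving the second sum to the left-hand side, this is equivalent to $\sum_{j \in N} \hat v_j(a^{\text{MLCA}}_j) \ge \sum_{j \in N \setminus \{i\}} \hat v_j(a^{-i}_j)$, i.e., the reported social welfare of the final allocation $a^{\text{MLCA}}$ must be at least the reported social welfare that the other bidders obtain in bidder $i$'s marginal-economy allocation $a^{-i}$. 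So it suffices to prove this inequality.

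To establish it, I would exhibit a feasible allocation witnessing the bound. Starting from $a^{-i}$ (which lies in $\mathcal F_R$ and is in particular feasible), construct the allocation $a'$ that agrees with $a^{-i}$ on every bidder $j \ne i$ and assigns the empty bundle to bidder $i$. Removing bidder $i$'s bundle from a feasible allocation preserves feasibility, so $a' \in \mathcal F$; and since MLCA implicitly treats the empty bundle as a report with value zero for every bidder (so $\emptyset \in R_i$ and $\hat v_i(\emptyset) = 0$), we also get $a' \in \mathcal F_R$. Because $a^{\text{MLCA}}$ maximizes reported social welfare over $\mathcal F_R$ (Line~\ref{alg:MLCA:FinalAllocation} of Algorithm~\ref{alg:MLCA}), we conclude $\sum_{j \in N} \hat v_j(a^{\text{MLCA}}_j) \ge \sum_{j \in N} \hat v_j(a'_j) = \hat v_i(\emptyset) + \sum_{j \ne i} \hat v_j(a^{-i}_j) = \sum_{j \ne i} \hat v_j(a^{-i}_j)$, which is exactly the inequality needed, so rearranging back gives $\hat v_i(a^{\text{MLCA}}_i) \ge p^{\text{MLCA}}_i$.

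The one step I would be most careful about — and essentially the only thing distinguishing this from the textbook proof that VCG is individually rational — is verifying that the witness allocation $a'$ belongs to the \emph{restricted} feasible set $\mathcal F_R$ (reported bundles only), not merely to $\mathcal F$; this hinges on the empty bundle, with reported value zero, always being available to every bidder. Everything else is a direct rearrangement of the payment definition together with the optimality property of $a^{\text{MLCA}}$, so no estimation or case analysis is required.
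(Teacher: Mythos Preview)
Your proof is correct and follows essentially the same approach as the paper's: expand $\hat v_i(a^{\text{MLCA}}_i) - p_i^{\text{MLCA}}$ into $\sum_{j\in N}\hat v_j(a^{\text{MLCA}}_j) - \sum_{j\in N\setminus\{i\}}\hat v_j(a^{-i}_j)$ and conclude non-negativity from the optimality of $a^{\text{MLCA}}$. You are simply more explicit than the paper about the witness allocation and, in particular, about why the comparison stays inside the restricted feasible set $\mathcal F_R$ (via the implicit zero-value report on the empty bundle), a detail the paper's one-line argument leaves unspoken.
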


\begin{proof}
Note that, despite its iterative nature, MLCA computes its final
payments by just applying a VCG (or a core-selecting) payment rule on
the reported valuation profile $\hat v$ (see
Equation~\eqref{eq:payment_MLCA}, Algorithm~\ref{alg:MLCA}). As each
$\hat v_i$ is only defined on bundles evaluated by bidder $i$ (and
assigns bidder $i$'s reported values to these bundles), MLCA inherits
individual rationality from using the VCG (or core-selecting)
mechanism. More formally, if we expand each bidder $i$'s utility
$\hat v_i(a^{\text{MLCA}})-p_i^{\text{MLCA}}$ using our payment
equation~\eqref{eq:payment_MLCA}, we obtain $\sum_{j\in N} \hat
v_j(a^{\text{MLCA}}) - \sum_{j\in N\setminus\{i\}} \hat v_j
(a^{(-i)})$, which, as $a^{\text{MLCA}}$ is optimal for $\hat v$,
cannot be negative.
\end{proof}
\subsection{No-Deficit}
\label{sec:NoDeficit}

A mechanism stipulating transfers between its participants and the
center should not run a \textit{deficit}. We can show that MLCA
guarantees the \textit{no-deficit} property.
\begin{proposition}
        MLCA satisfies no-deficit.
\end{proposition}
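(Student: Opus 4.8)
The plan is to reduce the claim to the classical non-negativity argument for VCG payments. Recall that \emph{no-deficit} means the auctioneer collects a non-negative total transfer, i.e., $\sum_{i\in N} p_i^{\text{MLCA}} \ge 0$. Since MLCA's final payments (Equation~\eqref{eq:payment_MLCA}) are exactly the VCG payments computed from the reported valuation profile $\hat v$, but with the set of feasible allocations restricted to $\mathcal F_R$, I would in fact establish the stronger statement that \emph{each} individual payment $p_i^{\text{MLCA}}$ is non-negative, which trivially implies no-deficit. As a preliminary observation, note that because $a^{\text{MLCA}}\in\mathcal F_R$ and $a^{-i}\in\mathcal F_R$, every term $\hat v_j(a^{\text{MLCA}}_j)$ and $\hat v_j(a^{-i}_j)$ appearing in \eqref{eq:payment_MLCA} is evaluated at a bundle in $R_j$, so all report functions are well-defined throughout the computation.

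The key step is the inequality $\sum_{j\in N\setminus\{i\}}\hat v_j(a^{-i}_j)\ \ge\ \sum_{j\in N\setminus\{i\}}\hat v_j(a^{\text{MLCA}}_j)$. This holds because, by definition, $a^{-i}$ maximizes $a\mapsto\sum_{j\in N\setminus\{i\}}\hat v_j(a_j)$ over $a\in\mathcal F_R$, while $a^{\text{MLCA}}$ is itself a member of $\mathcal F_R$ (it is the maximizer of a different objective over that same set); hence the optimal value $\sum_{j\neq i}\hat v_j(a^{-i}_j)$ is at least the value attained at $a^{\text{MLCA}}$. Substituting this into \eqref{eq:payment_MLCA} gives $p_i^{\text{MLCA}}\ge 0$ for every bidder $i$, and summing over $i\in N$ yields $\sum_{i\in N}p_i^{\text{MLCA}}\ge 0$, which is precisely no-deficit.

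I do not expect a genuine obstacle here: the argument is the standard one showing VCG revenue is non-negative, and it notably does not even require reported values to be non-negative, only that the main-economy allocation $a^{\text{MLCA}}$ remains an admissible candidate in each marginal-economy optimization problem, which is immediate from the definition of $\mathcal F_R$. The one point that needs care is the bookkeeping around the restricted feasibility set: one must check that replacing $\mathcal F$ by $\mathcal F_R$ (as MLCA does, both for $a^{\text{MLCA}}$ and for each $a^{-i}$) does not break the comparison above, and it does not, since both allocations being compared lie in $\mathcal F_R$. For completeness I would also remark that the same conclusion carries over if the auctioneer swaps in a core-selecting payment rule such as VCG-nearest, since core payments with respect to the reported profile generate weakly more revenue than VCG and are therefore likewise non-negative.
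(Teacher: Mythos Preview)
Your proposal is correct and follows essentially the same approach as the paper: both reduce no-deficit to the non-negativity of each VCG payment computed on the reported profile $\hat v$ over the restricted feasibility set $\mathcal F_R$, and both note the argument extends to core-selecting rules. The paper's proof simply asserts that VCG (or core-selecting) payments are always non-negative, whereas you actually spell out the standard comparison $\sum_{j\neq i}\hat v_j(a^{-i}_j)\ge\sum_{j\neq i}\hat v_j(a^{\text{MLCA}}_j)$ and the bookkeeping about $\mathcal F_R$; this extra detail is sound and makes your write-up more self-contained than the paper's one-sentence argument.
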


\begin{proof}
As we discussed in the proof above, MLCA computes its final payments
by just applying a VCG (or a core-selecting) payment rule on the
reported valuation profile $\hat v$ (see
Equation~\eqref{eq:payment_MLCA}, Algorithm~\ref{alg:MLCA}). Our
proposition follows by considering that VCG (or core-selecting)
payments are always non-negative for each reported valuation profile
$\hat v$.
\end{proof}

\section{Instantiating the Machine Learning Algorithm}
\label{sec:MLAlgorithm}

So far, our presentation of MLCA has been agnostic about which
ML algorithm to use. However, in practice,
this choice is very important as it determines not only the
quality of the queries identified by
Algorithm~\ref{alg:ML_Elicitation}, but also whether running this
algorithm is computationally feasible in practice. Indeed, each time our Query Module is called, the mechanism may need to determine up to $n+1$
social welfare-maximizing allocations for the learned valuations, each
time under different constraints. The evaluation of each allocation
$a$ requires applying $\mathcal A_i$ for each bidder $i$ to derive
their respective $\tilde v_i$.  In general, explicitly considering each
allocation would require evaluating exponentially many allocations. However, if the learning model exhibits useful structure, we can exploit this structure when searching for the social
welfare-maximizing allocation.
In this section, we present learning models that can be
used in our mechanism and show how to integrate their
predictions into our mechanism's winner determination problem in
a computationally practical way.  Specifically, we provide an integer
program formulation of the winner determination problem used to derive
$\tilde a$ by careful representation of each trained ML
model. Throughout this section, we also assume that each bidder has
made the same number of reports $\ell$.
In Section~\ref{sec:linearRegression}, we start with a simple linear
regression model. While simple to understand and computationally
convenient, this learning model cannot capture complementarities
between items, which prevents Algorithm~\ref{alg:ML_Elicitation} from
learning highly efficient allocations in settings where such
complementarities are important. Then, in Section~\ref{sec:SVR}, we introduce
more \textit{expressive} learning models, i.e., learning models able
to capture more complex valuations.%
\footnote{Most theoretical work in ML assumes that the input data is
  IID.  Despite this, many ML algorithms still work well on dependent
  samples in practice.  Here, we take advantage of this empirical
  efficacy, as the training data in our framework is not independent.
  Of note, in a non-IID context, learning error may not diminish with
  additional training data.  We leave it to future work to incorporate
  techniques specifically built around non-IID data
  \citep[e.g.,]{steinwart2009learning}.}%

\begin{remark}
\label{rem:DifferentMLForDifferentBidders}
In general, MLCA can use $n$ bidder-specific learning algorithms
$\mathcal A_i$ (see Line~\ref{alg:MLCA:Parameters} in
Algorithm~\ref{alg:MLCA}). For simplicity, in this section we
introduce our winner determination formulations under the
assumption that each bidder uses the same learning algorithm for
their learned valuation function $\tilde{v}_i$. We emphasize that
because bidders' ML algorithms are treated independently it is easy to generalize to the case where different ML algorithms $\mathcal A_i$ are used for each bidder.
\end{remark}

\subsection{Linear Regression}\label{sec:linearRegression}
In linear regression, each learned valuation $\tilde v_i$ has the
linear structure
\begin{equation}\label{eq:linearRegression}
\tilde v_i (x) = w_i \cdot x,
\end{equation}
where $w_i \in \mathbb R^m$ is the \textit{weight vector}
corresponding to $\tilde v_i$.%
\footnote{Here we do not introduce a bias term. This is consistent
  with assuming that our bidders always value the empty bundle at zero.}%
Given a weight vector $w_i$, each $w_{ij}$ represents
bidder $i$'s learned value for item $j$.  A first, simple approach to
determine $\tilde{v}_i(x)$ is to interpolate bidder $i$'s reported
values $R_i=\{(x_{ik}, \hat v_{ik})\}_{k \in L}$. We can then use a
standard squared loss function $\mathcal L_2(y,\tilde y) = (y-\tilde
y)^2$ to quantify the interpolation error between an observation $y$
and prediction $\tilde y$, and choose our weight vector as
\begin{equation}\label{eq:NonRegLinPrimal}
w_i \in \argmin_{w_i'} \, \sum_{k \in L} \mathcal{L}(\hat v_{ik} ,
w_i' \cdot x_{ik}).
\end{equation}
Note that this simple approach does not provide a full
characterization of $ w_i$ and, consequently, of bidder $i$'s learned
valuation.  We then add the extra term $||w_i'||^2=w'_i\cdot w'_i$ to
the objective in~\eqref{eq:NonRegLinPrimal}, obtaining a convex
problem. This technique is generally called \textit{Tikhonov
  regularization} \citep{tikhonov1977solutions}, and it biases the
learning towards valuations with low weights $w_{ij}$. This new
learning model is known as \textit{regularized linear regression} and
determines the weight vector as:
\begin{equation}\label{eq:RegLinPrimal}
w_i \in \argmin_{w_i'} \, c\, \sum_{k\in L} \mathcal{L}(\hat v_{ik} ,
w_i' \cdot x_{ik})+||w_i'||^2
\end{equation}
Here $c>0$ defines the tradeoff between interpolation accuracy and
regularization.

When valuations are learned via regularized linear regression, the
optimization problem that determines $\tilde a \in
\argmax_{a\in\mathcal F} \sum_{i\in N}\tilde v_i(a)$ can be formulated
as the following integer program (IP):
\begin{align}
\label{eq:WDLinearRegression}
\max_{a} & \quad \sum_{i\in N} \sum_{j\in M} w_{ij} a_{ij}  \\
\text{s.t. } & \quad \sum_{i\in N} a_{ij}\le 1 \quad  \forall j\in M \notag \\
& \quad a_{ij} \in \{0,1\} \quad \forall i\in N, j\in M. \notag
\end{align}
This IP has $n \cdot m$ boolean variables $a_{ij}$, each denoting
whether bidder $i$ should get item $j$ under allocation $a$, and $m$
feasibility constraints enforcing that each item $j$ is not allocated
more than once.
To limit this search problem to the allocations of the set $\mathcal
F_i$ defined in Line~10 of Algorithm~\ref{alg:ML_Elicitation}, it is
sufficient to add an integer cut $\sum_{j\in M} |x_{ij} - a_{ij}|\ge
1$ for each bundle $x$ in $R_i$.  While integer programming problems
are NP-hard, using branch and bound algorithms
\citep{land1960automatic} as implemented by modern IP solving
software such as CPLEX \citep{cplex} allows us to solve
Problem~\eqref{eq:WDLinearRegression} in a few milliseconds even for
large auction instances with 98 items and 10 bidders.

Despite being computationally very convenient, linear regression
models have a major drawback: they cannot capture valuations where
items are complements or substitutes.\footnote{Under linear regression models, the learned value bidder $i$
  has for bundle $x$ is always additive as it is given by the sum of
  the learned values $ w_{ij}$ for the items $j$ contained in $x$ (see
  Equation~\eqref{eq:linearRegression}).} In the next section, we introduce a learning model that generalizes linear regression and allows us to capture a broader class of
valuations.

\subsection{Support Vector Regression}
\label{sec:SVR}

Support vector regression (SVR) is a learning model that provides
powerful non-linear learning, while retaining attractive computational
properties.  In this subsection, we present the most important
properties of SVR. We defer to \citet{smola2004tutorial} for a
detailed introduction.

To learn non-linear valuations, SVR algorithms project our bundle
encodings from $\mathbb R^m$ into a high (possibly infinite)
dimensional \emph{feature space} $\mathbb R^f$. This projection is
determined via a mapping function $\varphi: \mathbb R^m \to \mathbb
R^f$ that is an implicit meta-parameter of the learning model (as we
shall see, it is specified via the choice of kernel). The learned
function is then captured via linear regression in the feature space
as
\begin{equation}\label{eq:regressionFeatureSpace}
\tilde{v}_i(x) =  w_i \cdot \varphi(x),
\end{equation}
where $w_i \in \mathbb R^f$. The weight vector $ w_i$ minimizes an
objective similar to the one in~\eqref{eq:RegLinPrimal}, with the
important difference that the squared loss is replaced by the
$\varepsilon$-insensitive hinge loss $\mathcal L_\varepsilon(y,\tilde
y) = \max \{|y- \tilde y|-\varepsilon,0\}$, where $\varepsilon \ge 0$
is a meta-parameter of the model.
The $\varepsilon$-insensitive hinge loss is a linear function in
contrast to the squared loss in the regularized least squares approach
described earlier.  Moreover, when the interpolation error is smaller
than $\varepsilon$, the loss drops to zero.  This ``insensitive''
region will turn out to be important in that it allows for more
succinct models to be learned.
In standard machine learning applications, this parsimony is useful
for speeding up the computation of the trained model, $\tilde
v_i(x)$. In our application, the parsimony of models trained under the
$\varepsilon$-insensitive loss translates into making the winner
determination problem easier to solve.

With this background we formulate the learning problem under SVRs as
follows \citep{smola2004tutorial}:
\begin{equation}\label{eq:SVRPrimal}
w_i \in \argmin_{w_i'} \, c\, \sum_{k\in L}
\mathcal{L}_\varepsilon(\hat v_{ik} , w_i' \cdot \varphi(x_{ik}))
+||w_i'||^2.
\end{equation}
When the dimensionality of the feature space is low, $w_i$ can be
determined using standard simplex algorithms, as for linear
regression. However, SVR algorithms may involve high dimensional
feature spaces, meaning that it is often impractical to determine
$\tilde v_i$ via its weight vector. In this scenario, it is convenient
to derive $\tilde v_i$ via the \textit{dual version} of the learning
problem presented in \eqref{eq:SVRPrimal}, which is formulated in
$\ell$ pairs of variables $\alpha_{ik}$ and $\beta_{ik}$; these
variables are the Lagrange multipliers of the two constraints $w_i'
\cdot \varphi(x_{ik}) \le \hat v_{ik}+\varepsilon$ and $w_i' \cdot
\varphi(x_{ik}) \ge \hat v_{ik}-\varepsilon$, respectively.
To do this, we begin by defining the \emph{kernel function}, which is
defined as the value of the dot product of the projection of two
vectors $x$ and $x'$ in the feature space:%
\begin{definition}[Kernel Function \citep{mercer1909functions}]
\begin{equation}
\kappa(x,x')=\varphi(x) \cdot \varphi(x')
\end{equation}
\end{definition}
The kernel is useful because of the so-called ``kernel trick'': for a
suitable choice of $\varphi$, $\kappa(\cdot)$ can be evaluated
directly in closed form, without the need to envoke the explicit
projection $\varphi$ at all.

With this definition, the dual formulation of the SVR training problem,
\eqref{eq:SVRPrimal}, is as follows \citep{smola2004tutorial}:
\begin{align}
\label{eq:SVRDual}
\max_{\alpha_i,\beta_i} & \quad -\frac{1}{2} \sum_{k \in L} \sum_{k' \in L} (\alpha_{ik}-\beta_{ik})(\alpha_{ik'}-\beta_{ik'})\, \kappa(x_{ik},x_{ik'}) \\
& \quad -\varepsilon \sum_{k\in L} (\alpha_{ik}+\beta_{ik}) + \sum_{k\in L} \hat v_{i,k}(\alpha_{ik}-\beta_{ik}) \notag \\
\text{s.t. } & \quad \alpha_{ik},\beta_{ik} \in [0,c] \quad  \forall k\in L .\notag
\end{align}
Because of the kernel trick, for a suitable kernel, this problem can
be formulated and solved without entering the high dimensional feature
space at all.  Solving the training problem is thus equivalent to
determining the optimal $\alpha_i$ and $\beta_i$.  Having accomplished
this, predictions under the learned model are obtained via the
following \citep{smola2004tutorial}:
\begin{equation}
\label{eq:SVRDualPrediction}
\tilde{v}_i(x) = \sum_{k \in L} (\alpha_{ik} - \beta_{ik}) \kappa(x, x_{ik})
\end{equation}

\subsubsection{Formulating the Winner Determination Problum Under SVR}

In our application, we need to solve for the social welfare maximizing
allocation under market clearing constraints.  That is, we need to be
able to formulate the objective:
\begin{equation}
\label{eq:socialwelfareobjective}
\tilde a \in \argmax_{a\in\mathcal F} \sum_{i\in N}\tilde v_i(a)
\end{equation}
For a given kernel, we can formulate \eqref{eq:socialwelfareobjective}
via \eqref{eq:SVRDualPrediction}, as follows:
\begin{align}\label{eq:WDSVRGeneric}
\max_a & \quad \sum_{i\in N} \sum_{k\in L} ( \alpha_{ik} -  \beta_{ik}) \kappa(a_i, x_{ik})\\
\text{s.t.} & \quad \sum_{i\in N} a_{ij}\le 1 \quad  \forall j\in M \notag \\
& \quad a_{ij} \in \{0,1\} \quad \forall i\in N, j\in M \notag.
\end{align}
It remains, however, to pick a kernel function $\kappa$ that is
non-linear (to gain expressive power over regularized linear
regression) and that can still be effectively encoded in an integer program.

In selecting kernels, we focus on two commonly used classes:
\textit{dot-product kernel functions} and \textit{radial basis kernel
  functions (RBF kernels)} \citep[][chapter 4]{williams2006gaussian}.
In \refapp{app:WDs}, we present the integer programming
formulations of Problem~\eqref{eq:WDSVRGeneric} for dot-product and
RBF kernels.  In our application, we consider three kernel functions
selected based on two criteria: 1. the expressivity of the
corresponding learning model (which is determined by the implicit
feature mapping $\varphi$), and 2. the complexity of the corresponding
instantiation of Problem~\eqref{eq:WDSVRGeneric}.

We start by considering the \textit{Quadratic kernel}:
\begin{definition}[Quadratic Kernel]
\label{def:kernelQuadratic}
A Quadratic kernel is a kernel function of the form
\begin{equation}
\kappa(x,x') = x\cdot x' + \lambda (x\cdot x')^2,
\end{equation}
where $\lambda$ is a non-negative parameter.
\end{definition}
The Quadratic kernel is not \textit{fully expressive}, i.e., it cannot
capture \textit{any} valuation function a bidder may have. However, it
allows us to formulate Problem~\eqref{eq:WDSVRGeneric} as a quadratic
programming problem with boolean variables, which can be practically
solved via branch and bound methods \citep{lima2017solution}:
\begin{align}\label{eq:WDSVRQuadratic}
\max_a & \quad \sum_{i\in N} \sum_{k\in L} ( \alpha_{ik} - \beta_{ik}) \left(\sum_{j\in M} a_{ij} x_{ikj} + \lambda \left(\sum_{j\in M} a_{ij} x_{ikj}\right)^2\right)\\
\text{s.t.} & \quad \sum_{i\in N} a_{ij}\le 1 \quad  \forall j\in M \notag \\
& \quad a_{ij} \in \{0,1\} \quad \forall i\in N, j\in M \notag.
\end{align}

We also consider two fully expressive kernels: the \textit{Gaussian
  kernel} and the \textit{Exponential kernel}:
\begin{definition}[Gaussian Kernel]\label{def:kernelGaussian}
A Gaussian kernel is a kernel function of the form
\begin{equation}
\kappa(x,x') = \exp{\left(-\frac{\displaystyle ||x -
    x'||^2}{\displaystyle \lambda}\right)},
\end{equation}
where $\lambda$ is a non-negative parameter.
\end{definition}
\begin{definition}[Exponential Kernel]\label{def:kernelExponential}
An Exponential kernel is a kernel function of the form
\begin{equation}
\kappa(x,x') = \exp{\left(\frac{\displaystyle x\cdot x'}{\displaystyle
    \lambda}\right)},
\end{equation}
where $\lambda$ is a non-negative parameter.
\end{definition}
The \textit{Gaussian kernel} is an RBF kernel commonly used in the
machine learning literature. The Exponential kernel is also fully expressive
but, unlike the Gaussian one, is a dot-product kernel.

\section{Optimizing the Machine Learning Algorithm}
\label{sec:OptMLExp}

To achieve maximal performance with MLCA, we need to identify the kernel and parameters that work best for the machine learning algorithm.  Before evaluating the full MLCA in the next section, we here perform a series of experiments to identify the optimal parameterization.
\subsection{Experiment Set-up}
\label{sec:OptMLExpSetup}

In order to run these experiments, we need data with which to exercise
the mechanism.
The allocation of spectrum is one of the most important applications
of CAs.  We therefore adopt the allocation of spectrum for our
experimental evaluation.  To do this, we employ the Spectrum Auction
Test Suite (SATS) version 0.7.0 \citep{weiss2017sats}, which allows us
to easily generate thousands of auction instances on demand. 
We tested our approach on three of SATS' value models
across a range of complexity.  We describe each in turn below:
\begin{itemize}
\item The Global Synergy Value Model (GSVM)
  \citep{goeree_etal_2008_HierarchicalPackageBidding} generates
  medium-sized instances with 18 items and 7 bidders.  GSVM models the
  items (spectrum licenses) as being arranged in two
  circles. Depending on his type, a bidder may be interested in
  licenses from different circles and has a value that depends on the
  total number of licenses of interest.  GSVM also includes a
  ``global'' bidder with interest in two thirds of the total licenses.
\item The Local Synergy Value Model (LSVM)
  \citep{scheffel_etal_2012_OnTheImpactOfCognitiveLimits} also
  generates medium-sized instances with 18 items and 6 bidders. In
  LSVM, the items are placed on a two-dimensional grid, and a bidder's
  value depends on a sigmoid function of the number of contiguous
  licenses near a target item of interest.  This makes the value model
  of LSVM more complex than that of GSVM.\item The Multi-Region Value
  Model (MRVM) \citep{weiss2017sats} generates large instances with 98
  items and 10 bidders.  MRVM captures large settings, such as US and
  Canadian auctions, by modeling the licenses as being arranged in
  multiple regions and bands. Bidders' values are affected by both
  geography and frequency dimensions of the licenses. 
\end{itemize}
The GSVM, LSVM, and MRVM instances we tested correspond to SATS seeds
101-200 for experiments based on 100 samples, and SATS seeds 101-150
for experiments based on 50 samples. We used CPLEX 12.10 to solve the integer programs used to determine welfare-optimal allocations and train our learning algorithms.\footnote{Experimentally we need to evaluate several different mechanism design choices, and do so with enough data points for statistical
significance.  Consequently, in our experiments, we ran an enormous
number of MIPs (approximately $2,000,000$).  Accordingly, we adopt a
modest timeout for the solver, which we set to 1 minute, and adopt the
best solution found so far.  We note that in practical use,
auctioneers will typically have more time to let the optimizer run
(typically at least an hour), which would improve the outcome of our
mechanisms; we thus report conservative results with respect to the
optimality of the MIP solutions.} We conducted our experiments on a
Ubuntu 16.04 cluster with AMD EPYC 7702 2.0 GHz processors using 8
cores and 32 GB of RAM.

\subsection{Results}
\label{sec:OptMLResults}

In the experiments in this section, we are interested in identifying
which kernel to use in the ML algorithm such that the full mechanism
will yield the most efficient outcome.
In selecting a kernel, we need to optimize the various parameters of
the learning algorithm (i.e. $\varepsilon$, $c$, and any kernel
hyperparameters).  We tune these parameters to maximize the effeciency
of the predicted allocation.
Because we need the machine learning algorithm to operate within a
reasonable comptuational budget, one of the most important parameters
is the insensitivity threshold $\varepsilon$.  This is because, as is
standard in kernel-based ML methods, $\varepsilon$ controls the number
of support vectors that are likely to be part of the learned model.
The size of the winner determination MIP that we solve (see
\refapp{app:WDs}) is heavily dependent on the number of support
vectors. Thus, $\varepsilon$ determines a trade-off between the learning
error of the ML model and the run-time of the winner determination
optimizer.  To investigate this trade-off, we conducted an experiment
on the two more expensive kernels, Gaussian and Exponential. We
present the corresponding results for the LSVM domain in
Table~\ref{fig:insensitivityComparisonLSVM}. Results for the other two
domains are provided in
\refapp{app:kernelAndInsensitivityComparisonGSVMandMRVM}.  To
conduct this experiment, we first generate an LSVM domain using SATS.
From this domain we then randomly sample $Q \in \{100,200,500\}$
truthful bundle-value pairs from each bidder.  Based on these reports,
we train an ML algorithm (with particular parameters) to construct
$\tilde{v}$.  We then evaluate this $\tilde{v}$ and its effectiveness
in learning optimal allocations (in terms of quality and speed)
according to four measures of interest which we next describe.

\begin{table}
        \centering

\begin{tablesizeadjustment}
\setlength{\tabcolsep}{1.5pt}%
\begin{tabular}{l | c || r | r | r || r | r | r || r | r | r || r | r | r}
Kernel & $\varepsilon$ & 
\multicolumn{3}{c||}{Efficiency} &
\multicolumn{3}{c||}{Learning Error} & 
\multicolumn{3}{c||}{WD Solve Time} & 
\multicolumn{3}{c}{Optimality Gap} \\
\hhline{~|~||---||---||---||---}
&
& \multicolumn{1}{c|}{50} & 
  \multicolumn{1}{c|}{100} & 
  \multicolumn{1}{c||}{200}
& \multicolumn{1}{c|}{50} & 
  \multicolumn{1}{c|}{100} & 
  \multicolumn{1}{c||}{200}
& \multicolumn{1}{c|}{50} & 
  \multicolumn{1}{c|}{100} & 
  \multicolumn{1}{c||}{200}
& \multicolumn{1}{c|}{50} & 
  \multicolumn{1}{c|}{100} & 
  \multicolumn{1}{c}{200} \\

\hhline{=:=*{4}{::===}}

Exponential &  0 &

          81.7\%    &

          84.2\%   &

          84.6\%   &

              16.08   &

              13.39   &

              11.46   &

          54.39s   &

          60.00s   &

          60.00s   &

          0.49   &

          2.04   &

          6.70   \\

\hhline{-|-*{4}{||---}}

Exponential &  2 &

          82.7\%    &

          84.7\%   &

          84.7\%   &

              16.37   &

              13.51   &

              11.54   &

          47.86s   &

          60.00s   &

          60.00s   &

          0.32   &

          1.56   &

          5.31   \\

\hhline{-|-*{4}{||---}}

Exponential &  4 &

          82.7\%    &

          83.8\%   &

          84.7\%   &

              16.80   &

              13.76   &

              11.77   &

          43.60s   &

          60.00s   &

          60.00s   &

          0.17   &

          1.17   &

          4.28   \\

\hhline{=:=*{4}{::===}}

Gaussian &  0 &

          72.5\%    &

          83.2\%   &

          85.9\%   &

              19.41   &

              15.69   &

              12.61   &

          60.00s   &

          60.00s   &

          60.00s   &

          0.68   &

          1.79   &

          4.44   \\

\hhline{-|-*{4}{||---}}

Gaussian &  2 &

          71.0\%    &

          84.3\%   &

          87.3\%   &

              19.70   &

              15.82   &

              12.79   &

          60.00s   &

          60.00s   &

          60.00s   &

          0.59   &

          1.51   &

          3.71   \\

\hhline{-|-*{4}{||---}}

Gaussian &  4 &

          72.3\%    &

          81.4\%   &

          87.6\%   &

              20.11   &

              16.05   &

              13.10   &

          60.00s   &

          60.00s   &

          60.00s   &

          0.47   &

          1.28   &

          3.14   \\

\end{tabular}
\end{tablesizeadjustment}

        \caption[LSVMkernel]{The effect of varying the insensitivity parameter
                $\varepsilon$ on the predictive performance: both efficiency
                of the predicted optimal allocation, and learning error.
                Results are shown for the two expensive kernels,
                Exponential and Gaussian, in the LSVM domain.\footnotemark}
        \label{fig:insensitivityComparisonLSVM}
\end{table}
\footnotetext{Dashes in the table indicate entries where we could not
  get complete results because CPLEX could not find any feasible
  solution within the given time limit. Note that these are
  preliminary results subject to change due to ongoing updates to our
  code base.}

First, we provide the efficiency of the learned optimal allocation.
To do this, we compute the social welfare of the gold standard optimal
allocation at true values, $V(a^*)$, using the concise MIP formulation
built into SATS, and compare this to the true social welfare of the
learned optimal allocation, $V(\tilde a)$.  Note that this is not the
efficiency of MLCA (which we investigate in the next section), but the
efficiency of the learned optimal allocation $\tilde{a}$ after being
trained on $Q \in \{100,200,500\}$ randomly selected bundles.  The
table includes subcolumns for each $Q$, enabling us to compare the
effect of training data size on the learning algorithm.
For each kernel, we show the results for three insensitivity
thresholds, $\varepsilon$: the one that is best for efficiency on
average across the different $Q$s, one twice that size, and zero.
From the table, we can see that most of the efficiency gains occur in
the first half of the considered range of $\varepsilon$.  Further, we
see that for sufficiently large $\varepsilon$, efficiency is
monotonically increasing in the sample size $Q$.

Next, we report the learning error of the machine learning algorithm.
Here we are measuring learning error in the standard way by measuring
the average absolute difference between the predicted and true value
for all bundles in the domain.\footnote{This is possible for GSVM and
  LSVM which have 18 items and are enumerable in this way; for MRVM we
  compute a similar statistic by sampling $100,000$ bundles.}
From the table we see that the higher the epsilon (and thus the fewer
support vectors) the worse the learning error.

Next, we report the solve time for the winner determination MIP that
finds $\tilde a$ based upon the trained ML model.
From the table we see that the solver always times out unless the
insensitivity threshold $\varepsilon$ is sufficiently large.

Finally, we list the optimality gap reported by the solver when it
stops.  Specifically, this is calculated as
$(\overline{o}-\underline{o})/\underline{o}$, where $\overline{o}$,
and $\underline{o}$ are the solver's proven upper and lower bounds on
the objective value respectively.  When the value is zero, the solver
has proven optimality.
As expected, the optimality gap is closely correlated with the solve
time, but it provides a quantitative measure of the consequence of the
solver stopping early.

Overall from the table, we see that it is important to select epsilon
carefully in order to properly trade off learning performance (as
measured via learning error) with solve time (as measured via WD solve
time) to maximize the efficiency of the learned optimal
allocation (as measured by Efficiency).

Now that we have selected the best parameters for each kernel, we turn
to a head-to-head comparison of each of the kernels we have
introduced.
Accordingly, in each of our domains of study we run the same
experiment as described above with the Linear, Quadratic, Exponential
and Gaussian kernels.  We present the results for the LSVM domain in
Table~\ref{fig:kernelComparisonLSVM}; results for the other two
domains are available in
\refapp{app:kernelAndInsensitivityComparisonGSVMandMRVM}.
Following the result in the previous table, all parameters
(i.e. $\varepsilon$, $c$, and any kernel hyperparameters) in this
kernel comparison are tuned to maximize efficiency.

\begin{table}
        \centering

\begin{tablesizeadjustment}
\setlength{\tabcolsep}{1.5pt}%
\begin{tabular}{l || r | r | r || r | r | r || r | r | r || r | r | r}
Kernel & 
\multicolumn{3}{c||}{Efficiency} & 
\multicolumn{3}{c||}{Learning Error} & 
\multicolumn{3}{c||}{WD Solve Time} & 
\multicolumn{3}{c}{Optimality Gap} \\
\hhline{~||---||---||---||---}
& \multicolumn{1}{c|}{50} & 
  \multicolumn{1}{c|}{100} & 
  \multicolumn{1}{c||}{200}
& \multicolumn{1}{c|}{50} & 
  \multicolumn{1}{c|}{100} & 
  \multicolumn{1}{c||}{200}
& \multicolumn{1}{c|}{50} & 
  \multicolumn{1}{c|}{100} & 
  \multicolumn{1}{c||}{200}
& \multicolumn{1}{c|}{50} & 
  \multicolumn{1}{c|}{100} & 
  \multicolumn{1}{c}{200} \\
\hhline{=*{4}{::===}}
Linear & 70.6\% & 
         76.4\% & 
         78.6\% &
         
             23.12 &
             20.08 &
             18.85 &
        
         0.00s &
         0.00s & 
         0.00s &
         0.00 & 
         0.00 & 
         0.00 \\
\hhline{-*{4}{||---}}
Quadratic & 86.2\% & 
         90.6\% & 
         92.6\% &
         
             19.50 &
             17.01 &
             15.13 &
        
         0.25s & 
         0.45s & 
         0.60s &
         0.00 & 
         0.00 & 
         0.00 \\
\hhline{-*{4}{||---}}
Exponential &          
          82.7\%   &
         
          84.7\%   &
         
          84.7\%   &

             16.37   &
            
             13.51   &
            
             11.54   &

          47.86s   &
         
          60.00s   &
         
          60.00s   &
          
          0.32   &
          
          1.56   &
         
          5.31   \\
\hhline{-*{4}{||---}}
Gaussian &
         
          71.0\%   &
         
          84.3\%   &
         
          87.3\%   &

             19.70   &
            
             15.82   &
            
             12.79   &

          60.00s   &
         
          60.00s   &
         
          60.00s   &
          
          0.59   &
          
          1.51   &
         
          3.71   \\
\end{tabular}
\end{tablesizeadjustment}

        \caption[LSVMkernel2]{Comparison of different kernels in the LSVM domain.  The
                Quadratic kernel obtains the best efficiency under the imposed
                computational constraints.}
        \label{fig:kernelComparisonLSVM}
\end{table}
From the table we can see the the Quadratic kernel yields the best
efficiency at all sample sizes.  The Linear kernel has the lowest
efficiency, showing that more expressive kernels can be worth the
computational effort they require.

Turning to the learning error column, we observe that all of the
entries in this table are higher than the lowest observed value in
Table~\ref{fig:insensitivityComparisonLSVM}, which is the Exponential
kernel with $\varepsilon=0$.  This indicates that the Exponential
kernel is best able to generalize in this domain, but its calculation
is sufficiently expensive that we are generally better off choosing a
Quadratic kernel that has somewhat higher learning error, but a more
succinct winner determination formulation.

We note that efficiency is typically inversely related to learning
error.  For example, the efficiency of the Quadratic kernel rises in
the sample size $Q$, while its learning error decreases in $Q$.  

Turning to the final columns in the table, we see that the Linear
kernel is solvable very rapidly with $0$ optimality gap.  Whereas the
Exponential and Gaussian kernels are more expensive.

We see that similar arguments hold for GSVM and MRVM (see \refapp{app:kernelAndInsensitivityComparisonGSVMandMRVM}). Overall, the Quadratic kernel makes the best tradeoff between learning
error and solve time, yielding the most efficient learned optimal
allocation within our computational budget, and we therefore adopt it
going forward in the experiments in the next section.

\section{Experiments}
\label{sec:MLCAvsCCAExp}

In this section, we evaluate the full MLCA mechanism, comparing it
against the widely-used CCA.  Additionally, we investigate the effect of
non-truthful bidding.

\subsection{Experiment Set-up}
\label{sec:MLCAvsCCAExpSetup}

To run our experiments, we need to specify a number of attributes of
both the MLCA and CCA mechanisms, as well as define the scope and
properties of the experiments themselves.

\paragraph{MLCA.}

MLCA is parametrized by a machine learning algorithm $\mathcal A_i$
for each bidder $i$, the number of queries $Q^{\text{max}}$ and the
number of initial queries $Q^{\text{init}}$.
As described in Section~\ref{sec:OptMLExp}, our ML algorithm will be
an SVR with Quadratic kernel for all domains of study.
In our experiments, we use $Q^{\text{max}}=100$ in the
simpler GSVM domain, and $Q^{\text{max}}=500$ in LSVM and MRVM.
We note that the $Q^{\text{max}}=500$ is a practical
number of queries, given that it was used in the real-world Canadian
auction that inspired the MRVM model that we use.
For the number of initial queries $Q^{\text{init}}$, we select an
optimal value through offline tuning.  Specifically, we use
$Q^{\text{init}} = 50$ for GSVM, $Q^{\text{init}} = 60$ for LSVM, and
$Q^{\text{init}} = 50$ for MRVM.  To be conservative, we don't allow
any bidders to ``push'' bundles in our MLCA evaluation.  This avoids
confusing the results by including contributions due to a bidder push
heuristic.
\paragraph{CCA.}

The CCA is parameterized by the reserve prices employed, the way
prices are updated, and what heuristics are assumed for bidder
behavior in the supplementary phase.
In our experiments, we use reserve prices for each license equal to
1\% of the average license value derived from $10,000$ bundle-value
pairs sampled from bidders in the domain.
For the price updates, we follow the parameterization of the
real-world Canadian CCA, that used $5\%$ price increments for most of
the auction rounds. We note that the number of auction rounds we
obtain under these price settings is similar to the number of auction
rounds observed in the real-world Canadian auction.
Finally, we specify how bidders in the CCA select which bundle-value
pairs to report in the supplementary round.\footnote{There is no prior
  work to guide the optimal strategy for bidders in choosing bundles
  to bid upon in the supplemental round. We therefore explored several
  different heuristics in our experiments after consulting with
  industry experts who have been involved in the design of the CCA and
  who have provided advice to bidders in the CCA. The lack of
  theoretical guidance here represents an additional strategic burden
  on bidders, in contrast to MLCA.}  We consider the following heuristics:
\begin{itemize}
\item Clock Bids: this corresponds to there being \textit{no
}supplementary round. Thus, the final allocation of the CCA is only
  determined based on the those bundles reported in the clock phase
  (as an answer to the corresponding demand query), using the value
  equal to the highest quoted price for that bundle.
\item Clock Bids Raised: bidders provide their true values for all
  unique bundles they reported during the clock phase.
\item Profit Max: bidders report their true values for all bundles
  reported during the clock phase and additionally for $Q$ bundles
  earning them the highest profit at the final clock
  prices.\footnote{To get a similar number of reported values across
    small and large bidders, we also let bidders report values for
    bundles earning them negative profit at the final clock prices,
    which may still be useful for them in the final winner
    determination.} In our simulations, we use $Q=100$ in the
  simpler GSVM domain, and $Q=500$ in LSVM and MRVM\footnote{Our
    implementation of the CCA in MRVM includes \emph{generics},
    enabling bidders to submit ``quantity bids'' for groups of
    substitutable items \citep[see][]{weiss2017sats}. Note that MLCA
    currently does use generics, which gives the CCA a slight
    advantage in this respect.}.
\end{itemize}

\noindent For both MLCA and CCA, we simulate truthful bidding in Section~\ref{sec:MLCAvsCCAResults}; in Section~\ref{subsec:ManipulationExperiments} we will consider more complex bidder behavior.

\subsection{Efficiency Results}
\label{sec:MLCAvsCCAResults}
We now study the efficiency of MLCA, comparing it against the CCA. We  consider each of our three domains in turn. We begin with the
two more stylized domains GSVM and LSVM, to build intuition for when
MLCA works well and when it does not, before we move on to the
realistically-sized MRVM domain.\footnote{Our experiments are run on
  the same computational grid as was used for the experiments in
  Section~\ref{sec:OptMLExp}.  However, the experiments in this
  section required a much larger computational effort because many
  more MIPs needed to be solved in an iterative fashion (more than
  $100,000$ core hours).}

\subsubsection{GSVM}

\begin{table}[H]
    \begin{adjustbox}{max width=\textwidth}
        \centering 

\begin{tablesizeadjustment}
\setlength{\tabcolsep}{1.5pt}%
\begin{tabular}{c | c | c || r | r | r | r }
\multicolumn{1}{c|}{\makecell[t]{Mechanism}} &
\multicolumn{1}{c|}{\makecell[t]{ML Algorithm}} &
\multicolumn{1}{c||}{\makecell[t]{Bidding Heuristic}} &
\multicolumn{1}{c|}{\makecell[t]{Efficiency}} & 
\multicolumn{1}{c|}{\makecell[t]{Revenue}} & 
\multicolumn{1}{c|}{\makecell[t]{Revenue (Core)}} & 
\multicolumn{1}{c}{\makecell[t]{Rounds}} \\ 
\hhline{=:=:=::=:=:=:=}
\multirow{2}{*}{MLCA} & SVR-Linear         & \multirow{2}{*}{-} & 99.7\% (0.10)         & 66.1\% (0.99)   & 69.6\% (0.96)   & 14 (0.0)          \\ 
                      & SVR-Quadratic      &                    & \textbf{100.0\% (0.00)}  & 68.4\% (1.10)   & 72.4\% (1.05)   & 14 (0.0)          \\ 
\hhline{---||----}
\multirow{3}{*}{CCA}  & \multirow{3}{*}{-} & Clock Bids         & 88.8\% (0.81)        & 37.9\% (1.43)    & 52.5\% (0.83)   & 233 (3.0)      \\ 
                      &                    & Clock Bids Raised  & 94.0\% (0.47)          & 51.3\% (1.70)    & 67.1\% (0.93)   & 234 (3.0)      \\ 
                      &                    & Profit Max         & \textbf{100.0\% (0.00)}   & 68.1\% (1.13)   & 73.1\% (0.99)   & 234 (3.0)      \\ 
\hhline{---||----}
VCG                   & \makecell[c]{-}    & \makecell[c]{-}    & \textbf{100.0\% (0.00)} & 68.4\% (1.11)   & \makecell[c]{-} & 1 (0.0)           \\ 
\hhline{---||----}
Random Allocation     & \makecell[c]{-}    & \makecell[c]{-}    & 19.6\% (0.79)        & \makecell[c]{-} & \makecell[c]{-} & \makecell[c]{-} \\ 
\end{tabular}
\end{tablesizeadjustment}

    \end{adjustbox}
        \caption{Results
        for MLCA, CCA and VCG in the GSVM domain. We use
        $Q^{\text{max}}=100$ in MLCA and $Q=100$ in the Profit Max
        heuristic of CCA. Results are averages over 100 auction
        instances. Standard errors are in
        parentheses.}  \label{fig:mechanismComparisonGSVM}
\end{table}

We start with the relatively simple GSVM domain, for which we present
results in Table~\ref{fig:mechanismComparisonGSVM}. We see that MLCA
provides $100\%$ efficiency even with only a $100$ query cap. We note
that bidder preferences in the GSVM domain can be captured perfectly
by the Quadratic kernel (see \refapp{app:GSVMAndQuadKernel}).

The CCA with the Profit Max heuristic also performs very well,
achieving $100\%$ efficiency.
In the table, we next report revenue, measured as the fraction of
surplus accruing to the seller. We see that VCG produces a high level
of revenue in this setting, and that all of the CCA and MLCA versions
are close to VCG, except for the Clock Bids heuristic.
In the next column we show revenue for the VCG-nearest core-selecting
rule as applied to the bundle-value pairs in the supplemental round of
the CCA (this is the payment rule that is often used in practice).  As
expected, we see some revenue lift when swapping VCG for the
core-selecting rule.
Finally the table shows the number of rounds employed by each
mechanism. We observe that MLCA uses a very small number of rounds
given the relatively small query cap.

\subsubsection{LSVM}

\begin{table}[H]
\begin{adjustbox}{max width=\textwidth}
\centering 

\begin{tablesizeadjustment}
\setlength{\tabcolsep}{1.5pt}%
\begin{tabular}{c | c | c || r | r | r | r }
\multicolumn{1}{c|}{\makecell[t]{Mechanism}} &
\multicolumn{1}{c|}{\makecell[t]{ML Algorithm}} &
\multicolumn{1}{c||}{\makecell[t]{Bidding Heuristic}} &
\multicolumn{1}{c|}{\makecell[t]{Efficiency}} & 
\multicolumn{1}{c|}{\makecell[t]{Revenue}} & 
\multicolumn{1}{c|}{\makecell[t]{Revenue (Core)}} & 
\multicolumn{1}{c}{\makecell[t]{Rounds}} \\ 
\hhline{=:=:=::=:=:=:=}
\multirow{2}{*}{MLCA} & SVR-Linear           & \multirow{2}{*}{N/A} & 98.3\% (0.40)          & 72.8\% (0.88)   & 75.3\% (0.79)  & 114 (0.0)         \\ 
                      & SVR-Quadratic        &                      & 99.6\% (0.12)          & 80.9\% (0.91)   & 84.5\% (0.83)  & 114 (0.0)
                                                                                                                                                   \\ 
\hhline{---||----}
\multirow{3}{*}{CCA}  & \multirow{3}{*}{N/A} & Clock Bids           & 82.4\% (0.71)          & 62.8\% (0.98)   & 66.4\% (0.81)   & 123 (0.3)      \\ 
                      &                      & Clock Bids Raised    & 91.0\% (0.47)            & 76.0\% (1.03)     & 79.2\% (0.89)   & 124 (0.3)      \\ 
                      &                      & Profit Max           & \textbf{99.9\% (0.03}) & 82.3\% (0.91)   & 86.4\% (0.73)   & 124 (0.3)      \\ 
\hhline{---||----}
VCG                   & \makecell[c]{-}      & \makecell[c]{-}      & 100.0\% (0.00)              & 83.1\% (0.89)   & \makecell[c]{-} & 1 (0.0)          \\ 
\hhline{---||----}
Random Allocation     & \makecell[c]{-}      & \makecell[c]{-}      & 20.4\% (0.64)          & \makecell[c]{-} & \makecell[c]{-} & \makecell[c]{-} \\ 
\end{tabular}
\end{tablesizeadjustment}

\end{adjustbox}
\caption{Results for MLCA, CCA and VCG in the LSVM domain. We use
$Q^{\text{max}}=500$ in MLCA and $Q=500$ in the Profit Max heuristic
of CCA. Results are averages over 100 auction instances. Standard
errors are in parentheses.} \label{fig:mechanismComparisonLSVM}
\end{table}
Next we turn to the LSVM domain, the results for which we show in
Table~\ref{fig:mechanismComparisonLSVM}. We see that the CCA obtains
an efficiency of 99.9\%. Note that the strong performance of
the CCA is heavily dependent on the employed Profit Max heuristic for
the supplementary phase, as we can see from the much lower performance
of the CCA when using the other two heuristics. MLCA again performs
best when using the Quadratic kernel, achieving an efficiency of
99.7\%, but this is still slightly worse than the performance of the
CCA with the Profit Max heuristic. 
Using a paired t-test we found this difference to be statistically significant ($p<0.01$). We can make similar arguments to GSVM regarding revenue and rounds.

Overall, in the LSVM domain, we observe that the CCA slightly
outperforms MLCA.  This is likely due to two effects. First, most
bidders in the LSVM domain (in particular, the regional bidders) are
only interested in a relatively small number of bundles
\citep{scheffel_etal_2012_OnTheImpactOfCognitiveLimits}. The Profit
Max heuristic with a query cap of 500 then essentially allows them to
almost fully described their preferences. In contrast, MLCA does not
rely on such a strong bidding heuristic, and the Quadratic kernel
employed cannot capture the domain very well
(in contrast to GSVM), which explains the slightly lower efficiency.

\begin{remark}
There are multiple ways to further increase the efficiency of
MLCA. First, with more computational resources (i.e., larger compute
clusters, improvements in compute technology expected to come), the 1
minute time limit we impose on each MIP becomes less of a constraint,
which automatically increases efficiency. Second, with more
computational resources, we could then also employ more expressive
kernels (e.g., Gaussian or Exponential). As we have shown in
Section~\ref{sec:OptMLExp}, with our current computational set-up, the
Quadratic kernel leads to the highest efficiency, even though the
Gaussian and Exponential kernel have lower learning error -- but this
would likely change once we have sufficiently powerful
computers. Third, we could explore using other ML algorithms that may
capture the structure of the LSVM domain better than the Quadratic
kernel. In fact, in work subsequent to this paper,
\cite{Weissteiner2020DeepLearning} have recently shown how deep neural
networks (NNs) can be used as the ML algorithm in MLCA (instead of
SVRs). For the LSVM domain, they showed that using NNs increases the
efficiency of MLCA beyond that achievable via SVRs.
\end{remark}

\subsubsection{MRVM}

\begin{table}[H]
        \centering
        \begin{adjustbox}{max width=\textwidth}

\begin{tablesizeadjustment}
\setlength{\tabcolsep}{1.5pt}%
\begin{tabular}{c | c | c || r | r | r | r }
\multicolumn{1}{c|}{\makecell[t]{Mechanism}} &
\multicolumn{1}{c|}{\makecell[t]{ML Algorithm}} &
\multicolumn{1}{c||}{\makecell[t]{BiddingHeuristic}} &
\multicolumn{1}{c|}{\makecell[t]{Efficiency}} & 
\multicolumn{1}{c|}{\makecell[t]{Revenue}} & 
\multicolumn{1}{c|}{\makecell[t]{Revenue (Core)}} & 
\multicolumn{1}{c}{\makecell[t]{Rounds}} \\ 
\hhline{=:=:=::=:=:=:=}
\multirow{2}{*}{MLCA} & SVR-Linear           & \multirow{2}{*}{N/A} & 93.9\% (0.18)          & 42.7\% (0.33)   & 42.9\% (0.32)   & 114 (0.0)         \\ 
                      & SVR-Quadratic        &                      & \textbf{96.4\% (0.13)} & 40.5\% (0.33)   & 40.7\% (0.34)   & 114 (0.0)         \\ 
\hhline{---||----}
\multirow{3}{*}{CCA}  & \multirow{3}{*}{N/A} & Clock Bids           & 93.2\% (0.20)          & 17.4\% (0.48)   & 18.0\% (0.44)     & 140 (0.9)      \\ 
                      &                      & Clock Bids Raised    & 93.4\% (0.20)          & 30.0\% (0.80)      & 34.9\% (0.39)   & 141 (0.9)      \\ 
                      &                      & Profit Max           & 94.2\% (0.20)          & 30.0\% (0.77)     & 34.5\% (0.39)   & 141 (0.9)      \\ 
\hhline{---||----}
VCG                   & \makecell[c]{-}      & \makecell[c]{-}      & 100.0\% (0.00)             & 42.2\% (0.31)   & \makecell[c]{-} & 1   (0.0)         \\ 
\hhline{---||----}
Random Allocation     & \makecell[c]{-}      & \makecell[c]{-}      & 34.4\% (0.72)         & \makecell[c]{-} & \makecell[c]{-} & \makecell[c]{-} \\ 
\end{tabular}
\end{tablesizeadjustment}

        \end{adjustbox}
        \caption{Results for MLCA, CCA and VCG in the MRVM domain. We
          use $Q^{\text{max}}=500$ in MLCA and $Q=500$ in the
          ProfitMax heuristic of CCA. Results are averages across 50
          auction instances. Standard errors are in
          parantheses.}
        \label{fig:mechanismComparisonMRVM}
\end{table}

Finally, we turn to MRVM, which is a realistically-sized domain, and
the most complex value model we consider. The results for this domain
are shown in Table~\ref{fig:mechanismComparisonMRVM}. We see that MLCA
(using the Quadratic kernel) achieves an efficiency of 96.4\%, while
the CCA (using the Profit Max heuristic) only achieves an efficiency of
94.2\%. Using a paired t-test, we find that this efficiency difference
is highly statistically significant ($p<1e^{-8}$).

There are multiple reasons for the performance advantage of MLCA over
the CCA in this domain. First, MRVM has $2^{98}$ bundles, and,
importantly, the bidders are interested in a very large subset of
those bundles. This means that $500$ bundle-value pairs is not
sufficient to reasonably describe bidders' preferences.  As a
consequence, the CCA Profit Max heuristic is not nearly as effective
as it was in the simper GSVM and LSVM domains. Second, even though the
Quadratic kernel cannot learn the MRVM model perfectly (unlike in
GSVM), our kernel experiments have shown that the learning error of
the Quadratic kernel in MRVM is also relatively small. Thus, in this large (realistically-sized) domain, the power
of a well-tuned ML algorithm can really shine.

Turning to revenue, we observe that the CCA with the Profit Max
heuristic obtains a much smaller amount of revenue than VCG.  Further, we observe that this low level of revenue is
present not only with the VCG payment rule, but also with a
core-selecting payment rule.  In contrast, MLCA achieves a
significantly higher amount of revenue than the CCA. The most likely
explanation for this is the fact that MLCA explicitly queries the
marginal economies, which leads to good information elicitation in the
marginals and thus higher prices, while the CCA essentially only
focuses on preference elicitation in the main economy.

Finally, we observe that MLCA and the CCA require a comparable number
of rounds, implying that the number of rounds required by MLCA would
not be a limitation in practice.

Overall, we see that MLCA performs exceptionally well in this complex
domain.  We see that when the domain is very large and complex, our
rich ML-based elicitation approach is highly effective, while the CCA
becomes less effective.

\subsection{Manipulation Results}
\label{subsec:ManipulationExperiments}

So far, we have considered truthful behavior on the part of bidders.  But we are also interested in the robustness of the mechanism to non-truthful bidder behavior.  While a full equilibrium analysis of a mechanism as complex as MLCA is beyond the state-of-the-art for either a closed-form or for a computational approach \citep{Bosshard2020JAIR}, we can still provide useful evidence about how the mechanism performs in the face of strategic bidders.  Specifically, we consider the effect of a single bidder attempting a unilateral deviation towards strategic play, holding all other bidders at truthful bidding.   

It remains to specify what strategy our single manipulating bidder, $i$, should employ.  To do so, we reason carefully about MLCA in order to develop a potentially useful strategy.  In this exercise we will allow the bidder access to private information that would not be available to bidders; doing so only strengthens potential power of the strategy and thus the exercise.
For reasons described in Section~\ref{SEC:Incentives}, it will be very difficult for a bidder to gain advantage through manipulating the main economy.  Therefore, we target our strategy at the marginal economy as follows: (a) for those bundles $i$ may win, bid truthfully; (b) for all other bundles overbid by as much as possible without winning the bundle. This approach seeks to drive out competition in $i$'s marginal economy, lowering that economy's social welfare, and thus decreasing $i$'s VCG payment.  

We instantiate the approach by granting bidder $i$ access to the
following information:
\begin{align}
    V_R &= \max_{a \in \mathcal F : a_j \in R_j\forall j \in N} V(a) \label{eq:vr}\\
    V_T(b) &= \max_{a\in\mathcal{F} : a_i = b} V(a) \label{eq:vq}
\end{align}
The scalar $V_R$ \eqref{eq:vr} is the social welfare of the best allocation available among the elicited bundles up to the current round (note that this calculation uses the bids of the other bidders evaluated at their true values).
For a given bundle $b$, $V_T(b)$ \eqref{eq:vq}, is the social welfare of the best allocation where  bidder $i$ is fixed to bundle $b$ and the other bidders are \emph{not} restricted to their reports (note that this calculation uses the full truthful valuation of the other bidders).
Both of these pieces of information contain powerful private information that would not normally be available to bidders outside the confines of the present exercise. 

We then propose the following strategy for bidder $i$, when queried bundle $b$:
\begin{equation}
    \hat{v}(b) = 
    \begin{cases}
       v_i(b) & \text{if } V_R \leq V_T(b) \\
       v_i(b) + z(V_R-V_T(b)) & \text{otherwise}
  \end{cases}
\end{equation}
Where $0 \leq z < 1$ is a parameter specifying the amount of overbidding. 
The strategy works by using $V_T(b)$ as a threshold: so long as the best allocation at reports, $V_R$, is below this value any overbid might change the allocation (assuming other bidders are truthful), and so to be ``safe'' the bidder remains truthful.  Otherwise the bidder can overbid by up to the amount $V_R$ is in excess of the threshold and still be ``safe'' from changing the allocation (assuming other bidders are truthful).  Thus, by construction and as desired, bidder $i$ will never win a misreported bundle when all other bidders report truthfully, as they do in our experiments.

\begin{table}[tb]
\centering
\begin{adjustbox}{max width=\textwidth}
\begin{tabular}{l||ccc|ccc|ccc}
 & \multicolumn{3}{c|}{\bfseries Local} & \multicolumn{3}{c|}{\bfseries Regional} & \multicolumn{3}{c}{\bfseries National} \\
\multicolumn{1}{c||}{Strategy} & \makecell{Social\\Welfare} & \makecell{Marginal Economy\\Social Welfare} &      Utility & \makecell{Social\\Welfare} & \makecell{Marginal Economy\\Social Welfare} &       Utility & \makecell{Social\\Welfare} & \makecell{Marginal Economy\\Social Welfare} &          Utility \\
\hhline{=::===:===:===}
Truthful          &     9863 (128) &                      9863 (128) &  0.02 (0.01) &     9863 (128) &                      9849 (128) &  14.14 (3.00) &     9863 (128) &                      8004 (147) &  1859.21 (95.49) \\
Overbidding (25\%) &     9895 (128) &                      9895 (128) &  0.00 (0.00) &     9871 (128) &                      9854 (127) &  16.39 (4.49) &     9887 (130) &                      8003 (147) &  1883.41 (95.89) \\
Overbidding (50\%) &     9890 (129) &                      9890 (129) &  0.09 (0.06) &     9882 (129) &                      9866 (127) &  16.16 (4.19) &     9843 (125) &                      7981 (146) &  1861.68 (96.17) \\
Overbidding (75\%) &     9892 (127) &                      9892 (127) &  0.06 (0.05) &     9865 (127) &                      9848 (126) &  16.48 (4.30) &     9834 (130) &                      7984 (143) &  1849.95 (89.53) \\
Overbidding (99\%) &     9852 (128) &                      9852 (128) &  0.01 (0.01) &     9872 (126) &                      9857 (126) &  14.77 (4.09) &     9816 (127) &                      7978 (146) &  1837.81 (94.11) \\
\hhline{-||---------}
ANOVA p-value     &          0.999 &                           0.999 &        0.362 &          1.000 &                           1.000 &         0.992 &          0.996 &                           1.000 &            0.998 \\
\end{tabular}

\end{adjustbox}
\caption{Manipulation Experiments for MRVM.  Entries are the average of 50 runs.}
\label{tab:manipulationMRVM}
\end{table}

To evaluate this strategy, we ran experiments for all three domains and with all bidder types separately employing this strategy.  We evalauted $z\in\{0.25,0.5,0.75,0.99\}$ with the quadratic kernel. Results for MRVM are reported in Table~\ref{tab:manipulationMRVM}; we defer the results for GSVM and LSVM to \refapp{app:manipulationExperiments}. 
In the last row of the table we provide the p-value of the one-way ANOVA to test for statistically significant differences across all applied strategies for each column respectively.  We observe that none of the manipulation strategies we consider leads to a statistically significant improvement in the bidder's utility.  Thus, even when providing the bidder with access to powerful information not normally present (i.e., an oracle capable of calculating $V_R$ and $V_T(b)$), the bidder is not able to significantly improve their utility under the proposed strategy targetting the bidder's marginal economy.  This provides further evidence regarding MLCA's robustness against such manipulations.

\section{Conclusion }

In this paper, we have introduced a machine learning-powered iterative
combinatorial auction mechanism we call MLCA. In contrast to prior
designs like the CCA, MLCA does not use prices but \emph{value
  queries} to interact with the bidders. Via simulations, we have
shown that MLCA is able to achieve higher efficiency than the CCA,
even with just a small number of queries per bidder.

Two components in our design are responsible for this efficiency gain:
(1) the ML algorithm learns a bidder's valuation on the whole bundle
space, and (2) in each iteration of the auction, we compute the
(tentatively) optimal allocation based on the learned values, which we
use to decide which value query to ask next to each bidder. To achieve
good incentives, we have drawn on principles from the VCG mechanism to
design \emph{MLCA}.

Our results give rise to promising directions for future research:
First, while we have shown how our elicitation method can be modified
to allow bidders to only report bounds on their values, we have left a
full mechanism based on this bounds-based elicitation method to future
work. Second, while our method of selecting the \emph{initial set of
  queries} uniformly at random from the whole bundle space has worked
surprisingly well, future work could explore more sophisticated active
learning methods for generating this initial set.

\section*{Acknowledgments}
Part of this research was supported by the SNSF (Swiss National
Science Foundation) under grant \#156836 and by the European Research
Council (ERC) under the European Unions Horizon 2020 research and
innovation programme (Grant agreement No. 805542) and by the National
Science Foundation (NSF) under grant no. CMMI-1761163. We thank Fabio
Isler and Manuel Beyeler for excellent research assistance in
implementing some of the algorithms and running some of the
experiments.

\endmaincontent

\beginsupplementpreamble
\externalrefmain %

\RUNTITLE{Supplement for CAs via ML-powered Preference Elicitation}
\TITLE{Supplement for Combinatorial Auctions via \\
Machine Learning-based Preference Elicitation}

\ARTICLEAUTHORS{%
\AUTHOR{Gianluca Brero}
\AFF{University of Zurich, \EMAIL{brero@ifi.uzh.ch}}
\AUTHOR{Benjamin Lubin}
\AFF{Boston University, \EMAIL{blubin@bu.edu}}
\AUTHOR{Sven Seuken}
\AFF{University of Zurich, \EMAIL{seuken@ifi.uzh.ch}}
}
\maketitle
\relax

\tableofcontents
\endsupplementpreamble
\beginsupplementseparator
\newpage
\section*{Appendices}
\endsupplementseparator

\beginsupplementcontent

\begin{APPENDICES}

\renewcommand*{\theHsection}{App.\thesection}

\section{Full Versions of the Query Module and of MLCA}
\label{app:FullVersionsOfAlgorithms}

\begin{algorithm}[H]
        \SetEndCharOfAlgoLine{;}
        \SetKwRepeat{Do}{do}{while}
        \nonl\textbf{function} NextQueries$_\mathcal A$$(I,R, S)$\;
        \nonl \textbf{parameters:} {profile of ML algorithms $\mathcal{A}$}\;
        \nonl \textbf{inputs:} {index set of bidders for the economy to be considered $I$; profile of reports $R$;\\\nonl profile of queries that have already been generated in this auction round $S$}\;
        \textbf{foreach} bidder $i\in I$ : $\tilde v_{i} := \mathcal{A}_i (R_{i})$; \text{\hspace{0.5cm}$\setminus\setminus $\textbf{Learning Step:} learn valuations using ML algorithm}\\
                select $\tilde{a} \in \argmax_{a \in {\mathcal F}} \sum_{i\in I}\tilde v_{i} (a_{i})$; \text{\hspace{0.5cm}$\setminus\setminus $\textbf{Optimization Step} (based on learned valuations)} \label{query module:OptStep}\\
assign new query profile: $q=\tilde{a}$ (i.e., for each $i \in I: q_i = \tilde{a}_i$)\;%
        \ForEach
                {$i \in I$}
                {
        \uIf  {bundle $q_i$ has already been queried or generated before (i.e., $q_i \in R_i \cup S_i)$}                  {
                                        define set of allocations containing a new query for $i$: $\mathcal F' :=  \{a\in \mathcal F : a_i \neq x, \forall x\in         R_i \cup S_i\, \}$\;
                select  $\tilde{a} \in \argmax_{a \in {\mathcal F'}} \sum_{i'\in I}\tilde                 v_{i'} (a_{i'})$; \text{\hspace{0.5cm}$\setminus\setminus $\textbf{Optimization                 Step} (with restrictions) }\\
overwrite new query for bidder $i$: $q_i = \tilde{a}_i$\;
                }
        }
        \textbf{return} profile of new queries $q$\;
        \caption{Machine Learning-powered Query Module (full version, including set $S$)}
        \label{alg:ML_Elicitation_Full}
\end{algorithm}

\newpage

\begin{algorithm}[H]
        \SetEndCharOfAlgoLine{;}
        \nonl \textbf{parameters:} profile of ML algorithms $\mathcal A$; maximum \# of queries per bidder $Q^{\text{max}}$; \\\nonl\# of initial queries $Q^{\text{init}} \leq Q^{\text{max}}$; \# of queries per round $Q^{\text{round}}$; maximum \#\ of push bids per bidder  $P^{\text{max}}$\;
        \textbf{foreach} bidder $i\in N$: receive $P_i \leq P^{max}$ push bids\;
        \textbf{foreach} bidder $i\in N$: ask the bidder to report his value for $Q^{\text{init}}$ randomly chosen bundles\;

Let $R= (R_1,...,R_n)$ denote the initial report profile, where each $R_i$ is $i$'s set of bundle-value reports\;
Let $T=\floor{(Q^{\text{max}} - Q^{\text{init}})/Q^{\text{round}}}$ denote the total number of auction rounds and $t=1$ the current round\;
        \While(\tcp*[f]{ Auction round iterator}){$t \leq T$} {
                Let $S=(S_{1},..., S_{n})$ denote the profile of queries for this auction round, with each $S_i=\emptyset$\;
                \ForEach{bidder $i\in N$}
                {
                        Sample a set of bidders $N'$ from $N\setminus \{i\}$ with $|N'|=Q^{\text{round}}-1$ \;
                        \ForEach
                {$i' \in N'$}
                {
                Generate query profile $q:=$ \emph{NextQueries}$_{\mathcal A_{-i'}}$$(N\setminus\{i'\},R_{-i'},S_{-i'})$\tcp*{Queries for ME}
                For bidder $i$: add $q_i$ to the queries generated for this round, i.e., $S_i = S_i \cup \{q_i\}$\;
                }
        }
Generate query profile $q:=$ \emph{NextQueries}$_\mathcal A$$(N,R,S)$\tcp*{Queries for the main economy}
\textbf{foreach} bidder $i \in N$: add $q_i$ to the queries generated for this round, i.e., $S_i = S_i \cup \{q_i\}$\;
        \textbf{foreach} bidder $i \in N$: send  new queries $S_i$ to bidder $i$ and wait for reports\;%
\textbf{foreach} bidder $i \in N$: receive bundle-value reports $R_i'$ and add them to $R_i$, i.e., $R_i = R_i\cup R_i'$\;
$t=t+1$\;
        }
Let $\hat{v}_i(\cdot)$ denote bidder $i$'s \emph{report function} capturing his bundle-value reports $R_i$, $\forall i \in N$\;
        Compute final allocation: $a^{\text{MLCA}}\in \argmax_{a \in \mathcal{F}_R} \sum_{i \in N} \hat{v}_i (a_i)$\;
        \textbf{foreach} bidder $i \in N$: compute his payment
        \begin{equation}
        \label{eq:payment_MLCA_Full}
        p^{\text{MLCA}}_i = \sum_{j \in N \setminus \{i\}} \hat v_j ( a^{-i}_j) - \sum_{j \in N \setminus \{i\}}\hat
        v_j ( a^{\text{MLCA}}_j), \text{\hspace{1cm} where } a^{-i}\in \argmax_{a \in \mathcal{F}_R} \sum_{j \in N \setminus \{i\}} \hat{v}_j (a_j) ;
        \end{equation}\\
        Output  allocation $a^{\text{MLCA}}$ and payments $p^{\text{MLCA}}$\;
        \caption{Machine Learning-powered Combinatorial Auction (MLCA) (full version) }
        \label{alg:MLCA_Full}
\end{algorithm}

\section{Learning Error and Imputed Approximate Clearing Prices}
\label{app:CE}

Recall that in MLCA, bidders submit bundle-value reports, while prices
are used for elicitation in most prior work on iterative CAs
\citep[e.g.][]{parkes2006mit}.  Typically, the goal of such price-based
elicitation is to obtain approximate clearing prices. Here we relate
our mechanism to the rest of the literature by showing how to obtain a
price-based interpretation of the elicitation performed by MLCA's
query module.  Specifically, we describe how to impute approximate
clearing prices that are implicit in the elicitation.  We provide a
bound on how close these prices are to clearing prices, based upon a
bound on the learning error of the ML algorithm.  We next formalize
these concepts.

To begin, we introduce a very general concept of \textit{prices}
(allowing for non-anonymous bundle prices). We let $\pi =
(\pi_1,..,\pi_n)$ denote the \emph{price profile}, where each $\pi_i$
is bidder $i$'s \textit{price function}, with $\pi_{i}(x)$ denoting
bidder $i$'s price for any given bundle $x\in\mathcal X$.
Next, we define a competitive equilibrium (CE):
\begin{definition}[Competitive equilibrium]
Given prices $\pi$, we define each bidder $i$'s demand set $d^{\pi}_i$
as the set of bundles that maximize his utility at $\pi$:
\begin{equation}\label{eq:demandset}
d^{\pi}_i = \argmax_{x\in\mathcal X} \left(v_i(x) - \pi_i(x)\right)
\end{equation}
Similarly, we can define the seller's supply set $s^{\pi}$ as the set
of allocations that are most profitable at $\pi$:
\begin{equation}\label{eq:supplyset}
s^{\pi} = \argmax_{a\in\mathcal F} \sum_{i\in N} \pi_i(a_i)
\end{equation}
We say that prices $\pi$ and allocation $a$ are in \textbf{competitive
  equilibrium} if $a_i\in d^{\pi}_i \,\,\, \forall i \in N$ and $a\in
s^{\pi}$.  Any prices that are part of a CE, are called
\textbf{clearing prices}.
\end{definition}
\noindent
A special case of the first welfare theorem holds that any competitive
equilibrium allocation is also efficient
\citep[16.C-D]{mas1995microeconomic}.  Prior work on iterative CAs has
exploited this property by iteratively updating prices until a CE is
found \citep[e.g., \textit{i}Bundle,][]{parkes1999bundle}.%
\footnote{We note that the iterative VCGs mechanisms by
  \citet{mishra2007ascending} and \citet{de2007ascending} go beyond CE
  prices and find \emph{universal competitive equilibrium (UCE)
    prices}, which are specific clearing prices that contain all the
  information necessary to compute a VCG outcome.}
This approach is motivated by the fact that any auction that finds an
efficient allocation must reveal CE prices
\citep{nisan2006communication}. However, in the worst case, this may
require an exponential amount of communication (potentially quoting a
different price for each bundle).  Recall that in MLCA, we limit the
amount of information exchanged (via the query cap). Thus, due to the
result by \cite{nisan2006communication}, we cannot guarantee finding a
CE. In fact, MLCA does not even use prices to interact with bidders
(in contrast to the CCA or iterative VCG mechanisms). However, at each
round of the auction, we can \emph{impute} prices based on the learned
valuations in that round. This provides insight into how bundles are
being \emph{implicitly priced} in each round of MLCA.

To make this connection, we will need the following relaxation of
clearing prices:
\begin{definition}[$\delta$-Approximate Clearing Prices]
        Given prices $\pi$, we define each bidder $i$'s $\beta$-demand set
        $d_i^{\pi,\beta}$ as the set of bundles that maximize his utility at
        $\pi$ when subsidized by $\beta$:
        \begin{equation}\label{eq:deltademandset}
                d_i^{\pi,\beta} = \left\{x\in\mathcal X : \left(v_i(x) -
                \pi_i(x)\right) + \beta \geq \left(v_i(x') - \pi_i(x')\right) \forall
                x' \in d_i^\pi \right\}
        \end{equation}
        Similarly, we can define the seller's $\gamma$-supply set
        $s^{\pi,\gamma}$ as the set of allocations that are most profitable at
        $\pi$ when subsidized by $\gamma$:
        \begin{equation}\label{eq:deltasupplyset}
                s^{\pi, \gamma} = \left\{ a\in\mathcal{F} :
                \sum_{i\in N} \pi_i(a_i) + \gamma \geq
                \sum_{i\in N} \pi_i(a_i') \forall a' \in s^\pi \right\}
        \end{equation}
        We say that prices $\pi$ are $\delta$-\textbf{approximate clearing
                prices} if, at any allocation $a^*$, the following holds:
        $$
        a^*_i\in d_i^{\pi,\beta_i} \text{ and }
        a^*\in s^{\pi,\gamma} \text{ and }
        \delta \geq \sum_{i \in N} \beta_i + \gamma.
        $$
\end{definition}
\noindent That is, $\delta$-approximate clearing prices are those
prices that would be clearing, if the participants were subsidized by
an amount not greater than $\delta$.

It turns out that given a bound on the learning error of the ML
algorithm, we can bound the degree of approximation in the price
profile:%
\begin{proposition}
\label{prop:ApproximateClearingSimple}
Let $\tilde v$ be a learned valuation profile, $\pi$ be a profile of
clearing prices for $\tilde v$, and $a^*$ be an efficient allocation
at truthful values $v$.  Assume that the learning errors are bounded
as follows: for each bidder $i$, $\max_{x\in\mathcal X} |\tilde
v_i(x)- v_i(x)|\le \delta_1$ and $|\tilde v_i(\tilde{a}_i)-
v_i(\tilde a_i)|\le \delta_2$.  Then, $\pi$ is a $\big(n( \delta_1+
\delta_2)\big)$-approximate competitive equilibrium price profile for
$v$.
\end{proposition}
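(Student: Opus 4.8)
The plan is to unpack the definition of $\delta$-approximate clearing prices into a demand-side and a supply-side ``regret,'' and to bound each via the learning error. Let $\tilde a$ be the allocation for which $(\pi,\tilde a)$ is a competitive equilibrium with respect to $\tilde v$ (such an allocation exists since $\pi$ is clearing for $\tilde v$); by definition this means that for every bidder $i$, $\tilde a_i$ maximizes $\tilde v_i(x)-\pi_i(x)$ over $x\in\mathcal X$, and that $\tilde a$ maximizes $\sum_{i\in N}\pi_i(a_i)$ over $a\in\mathcal F$. Given the efficient allocation $a^*$ at $v$, I would take
\[
\beta_i := \max_{x\in\mathcal X}\bigl(v_i(x)-\pi_i(x)\bigr)-\bigl(v_i(a^*_i)-\pi_i(a^*_i)\bigr), \qquad \gamma := \max_{a\in\mathcal F}\sum_{i\in N}\pi_i(a_i)-\sum_{i\in N}\pi_i(a^*_i).
\]
These are nonnegative, and, since every bundle in $d_i^\pi$ attains the common value $\max_x(v_i(x)-\pi_i(x))$ and every allocation in $s^\pi$ attains $\max_a\sum_i\pi_i(a_i)$, they are precisely the smallest constants for which $a^*_i\in d_i^{\pi,\beta_i}$ and $a^*\in s^{\pi,\gamma}$. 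So it remains to show $\sum_{i\in N}\beta_i+\gamma\le n(\delta_1+\delta_2)$.

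The crux is a bound on each bidder's indirect utility under the true valuation. Chaining $v_i(x)\le\tilde v_i(x)+\delta_1$ (valid for all bundles), the equilibrium optimality of $\tilde a_i$ for $\tilde v$, and $\tilde v_i(\tilde a_i)\le v_i(\tilde a_i)+\delta_2$ yields
\[
\max_{x\in\mathcal X}\bigl(v_i(x)-\pi_i(x)\bigr)\;\le\;\delta_1+\bigl(\tilde v_i(\tilde a_i)-\pi_i(\tilde a_i)\bigr)\;\le\;\delta_1+\delta_2+v_i(\tilde a_i)-\pi_i(\tilde a_i).
\]
Summing the definitions of $\beta_i$ and $\gamma$, the terms $\sum_i\pi_i(a^*_i)$ cancel and one is left with $\sum_i\beta_i+\gamma=\sum_i\max_x(v_i(x)-\pi_i(x))-V(a^*)+\sum_i\pi_i(\tilde a_i)$. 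Inserting the displayed bound, the terms $\sum_i\pi_i(\tilde a_i)$ cancel too, giving $\sum_i\beta_i+\gamma\le n(\delta_1+\delta_2)+V(\tilde a)-V(a^*)$. Finally, $a^*$ is efficient at $v$, so $V(\tilde a)\le V(a^*)$ and the last two terms are nonpositive, which completes the argument.

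I do not expect a real obstacle here: the proof is a telescoping of the two equilibrium conditions against the two learning-error bounds, in the same spirit as Proposition~\ref{prop:learning_error_learned}. The only points needing care are (i) verifying that the definitions of $d_i^{\pi,\beta}$ and $s^{\pi,\gamma}$ collapse, as claimed, to ``within $\beta$ (resp.\ $\gamma$) of the optimal bidder utility (resp.\ seller revenue),'' so that the chosen $\beta_i$ and $\gamma$ are admissible and minimal; and (ii) being careful to invoke the $\max_{x\in\mathcal X}$ form of the $\delta_1$ bound rather than a pointwise bound, since the maximizer of $v_i(x)-\pi_i(x)$ is not identified in advance. Neither point is deep.
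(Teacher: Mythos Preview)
Your argument is correct and is essentially the paper's own proof. The paper defers to the more general Proposition~\ref{prop:ApproximateClearing} (approximate clearing for $\tilde v$ with slack $\delta_3$) and obtains Proposition~\ref{prop:ApproximateClearingSimple} by setting $\delta_3=0$; your computation is exactly that specialization, written directly. In particular, your $\beta_i$ and $\gamma$ coincide with the paper's transfers $\tau_i$ and $\tau_s$ (with $\delta_3^i=\delta_3^s=0$), your chained inequality on $\max_x(v_i(x)-\pi_i(x))$ is the paper's add-and-subtract-$\tilde v_i(\bar x_i)$ step combined with the CE optimality of $\tilde a_i$, and your final use of $V(\tilde a)\le V(a^*)$ is identical to the paper's closing step.
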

\begin{proof}
Follows as a special case of
Proposition~\ref{prop:ApproximateClearing}, below.
\end{proof}
Proposition~\ref{prop:ApproximateClearingSimple} states that clearing
prices for the learned valuation $\tilde{v}$ will be
$\delta$-approximate clearing prices at the true valuations, with a
$\delta$ that depends on the quality of the ML algorithm.  That is,
one would need to inject at most $n( \delta_1+ \delta_2)$ into the
market to induce the bidders and the seller to trade the allocation
$a^*$ at prices $\pi$.  Accordingly if we can move towards CE in the
learned valuations, we are moving towards approximate CE at the true
valuations.

Next, we slightly generalize
Proposition~\ref{prop:ApproximateClearingSimple}, by allowing the
prices $\pi$ to be only approximately clearing:
\begin{proposition}
\label{prop:ApproximateClearing}
Let $\tilde v$ be a learned valuation profile and $a^*$ be an
efficient allocation at truthful values $v$.  Assume that the learning
errors are bounded as follows: for each bidder $i$,
$\max_{x\in\mathcal X} |\tilde v_i(x)- v_i(x)|\le \delta_1$ and
$|\tilde v_i(\tilde{a}_i)- v_i(\tilde a_i)|\le \delta_2$.
Let $\pi$ be a $\delta_3$-approximate competitive equilibrium price
profile for $\tilde v$.
Then, $\pi$ is a $\big(n(\delta_1+\delta_2)+\delta_3\big)$-approximate
competitive equilibrium price profile for $v$.
\end{proposition}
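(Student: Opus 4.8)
The plan is to reduce both the hypothesis (``$\pi$ is $\delta_3$-approximate clearing for $\tilde v$'') and the conclusion to a single closed form for the \emph{minimal total subsidy} needed to support a given allocation at given prices, and then to bound the gap between the two instances of that expression by $n(\delta_1+\delta_2)$.

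First I would record the following elementary identity. Fix a price profile $\pi$, a valuation profile $w$, and an allocation $a$. The smallest $\beta_i\ge 0$ with $a_i$ in the $\beta_i$-demand set of $i$ (taken w.r.t.\ $w$) is $\beta_i=\max_{x\in\mathcal X}\bigl(w_i(x)-\pi_i(x)\bigr)-\bigl(w_i(a_i)-\pi_i(a_i)\bigr)$, and the smallest $\gamma\ge 0$ with $a\in s^{\pi,\gamma}$ is $\gamma=\max_{a'\in\mathcal F}\sum_{j\in N}\pi_j(a'_j)-\sum_{j\in N}\pi_j(a_j)$. Adding these and cancelling the price terms yields
\[
\textstyle\sum_{i\in N}\beta_i+\gamma=\sum_{i\in N}\max_{x\in\mathcal X}\bigl(w_i(x)-\pi_i(x)\bigr)+\max_{a'\in\mathcal F}\sum_{j\in N}\pi_j(a'_j)-\sum_{i\in N}w_i(a_i).
\]
Hence $\pi$ is a $\delta$-approximate clearing price profile for $w$ exactly when this right-hand side, evaluated at an efficient (for $w$) allocation, is at most $\delta$; and since only the last term depends on the allocation while all efficient allocations share the same welfare, this value is the same for every efficient choice — which disposes of the ``at any allocation $a^*$'' quantifier in the definition.

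Applying the identity with $w=v$, $a=a^*$ gives the quantity to be bounded, while applying it with $w=\tilde v$ and $a=\tilde a$ (an efficient allocation w.r.t.\ $\tilde v$) gives, by hypothesis, a quantity $\le\delta_3$. The structural point is that the seller term $\max_{a'}\sum_j\pi_j(a'_j)$ is purely price-based, hence \emph{identical} in the two expressions, so their difference is
\[
\textstyle\Bigl[\sum_i\max_x\bigl(v_i(x)-\pi_i(x)\bigr)-\sum_i\max_x\bigl(\tilde v_i(x)-\pi_i(x)\bigr)\Bigr]-\bigl[V(a^*)-\tilde V(\tilde a)\bigr].
\]
I would bound the first bracket by $n\delta_1$ using that a uniform bound $|v_i(x)-\tilde v_i(x)|\le\delta_1$ transfers to a bound $\delta_1$ on the difference of the two maxima (the maximum is $1$-Lipschitz in the sup norm). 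For the second bracket I need a \emph{lower} bound on $V(a^*)-\tilde V(\tilde a)$: since $a^*$ is efficient at $v$, $V(a^*)\ge V(\tilde a)=\sum_i v_i(\tilde a_i)\ge\sum_i\bigl(\tilde v_i(\tilde a_i)-\delta_2\bigr)=\tilde V(\tilde a)-n\delta_2$, so $-[V(a^*)-\tilde V(\tilde a)]\le n\delta_2$. Thus the minimal total subsidy at $v$ exceeds the one at $\tilde v$ by at most $n(\delta_1+\delta_2)$, hence is at most $\delta_3+n(\delta_1+\delta_2)$, as claimed. (Proposition~\ref{prop:ApproximateClearingSimple} is the special case $\delta_3=0$, where $\pi$ is an exact CE price profile for $\tilde v$.)

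The only delicate step is the direction of the $\tilde a$ comparison in the second bracket: the naive move is to compare $V(a^*)$ with its learned-value version, but that needs a learning-error bound \emph{at $a^*$}, which is not assumed; instead one must route through $\tilde a$ and invoke the efficiency of $a^*$ at true values to dominate $V(\tilde a)$. This is exactly where the efficiency of $a^*$ and the bound $\delta_2$ at $\tilde a$ enter, and it mirrors the add-and-subtract argument used in Proposition~\ref{prop:learning_error_learned}. The remaining ingredients — the telescoping subsidy identity and the two ``maximum is $1$-Lipschitz'' estimates — are routine.
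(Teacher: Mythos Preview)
Your proof is correct and follows essentially the same argument as the paper's: both compare the total subsidy needed to support $a^*$ at $(v,\pi)$ with that needed to support $\tilde a$ at $(\tilde v,\pi)$, cancel the price-only seller contribution, bound the buyer-side gap by $n\delta_1$, and use efficiency of $a^*$ at $v$ together with the $\delta_2$ bound at $\tilde a$ to control the welfare gap. Your packaging via a single closed-form identity for the minimal total subsidy makes the cancellation of the seller term transparent, whereas the paper achieves the same cancellation through direct manipulation of the individual transfers $\tau_s,\tau_i$; the underlying steps are identical.
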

\begin{proof}
Let $\delta_3^s$ be the transfer that should be made to the seller to
trade $\tilde{a}$ at prices $\pi$.  Then, the transfer that needs
to be made to the seller to trade $a^*$ at prices $\pi$ is:
\begin{equation}\label{eq:TransferSeller}
\tau_s= \sum_{i\in N} \pi_i(\tilde a_i) - \pi_i (a^*_i)+\delta_3^s.
\end{equation}
Given $\bar x_i \in d_i^{\pi}$, then the transfer that needs to be
made to each bidder $i$ to trade $a^*$ at prices $\pi$ is
\begin{equation}\label{eq:TransferBidder}
\tau_i= v_i(\bar x_i) - \pi_i(\bar x_i) - \big( v_i(a^*_i) - \pi_i(a^*_i)
\big).
\end{equation}
$\pi$ is a $\delta_3$-approximate competitive equilibrium price
profile for $\tilde v$, so for each bidder $i$ there exist a transfer
$\delta_3^i$ such that
\begin{equation}\label{eq:CEProofInequality}
\tilde v_i(\tilde{a}_i) - \pi_i(\tilde{a}_i) + \delta_3^i \ge
\tilde v_i(\bar x_i) - \pi_i(\bar x_i),
\end{equation}
and
\begin{equation}\label{eq:TotalTransfer}
\sum_{i\in N} \delta_3^i + \delta_3^s\le\delta_3.
\end{equation}
After adding and subtracting $\tilde v_i(\bar x_i)$ to $\tau_i$ in
Equation~\eqref{eq:TransferBidder} and using the
inequality in~\eqref{eq:CEProofInequality}, we obtain
\begin{equation}\label{eq:CEProofInequality2}
\tau_i \le v_i(\bar x_i) - \tilde v_i(\bar x_i) + \tilde
v_i(\tilde{a}_i) - v_i(a^*_i) + \pi_i (a^*_i) - \pi_i(\tilde a_i)+ \delta_3^i.
\end{equation}
Considering the inequalities
in~\eqref{eq:TotalTransfer}~and~\eqref{eq:CEProofInequality2} and
Equation~\eqref{eq:TransferSeller}, we have the following bound on the
overall transfer that should be made to induce $a^*$:
\begin{equation}
\tau_s + \sum_{i\in N} \tau_i \le \sum_{i\in N} v_i(\bar x_i) - \tilde
v_i(\bar x_i) + \sum_{i\in N} \tilde v_i(\tilde{a}_i) -
v_i(a^*_i)+\delta_3.
\end{equation}
Note that $\sum_{i\in N} \tilde v_i(\tilde{a}_i) - v_i(a^*_i) \le
\sum_{i\in N} \tilde v_i(\tilde{a}_i) - v_i(\tilde{a}_i) $, as
$a^*$ is an efficient allocation for $v$. We then have
\begin{equation}
\tau_s + \sum_{i\in N} \tau_i \le \sum_{i\in N} v_i(\bar x_i) - \tilde
v_i(\bar x_i) + \sum_{i\in N} \tilde v_i(\tilde{a}) -
v_i(\tilde{a})+\delta_3 \le n( \delta_1+ \delta_2)+\delta_3,
\end{equation}
which concludes our proof.
\end{proof}
In words, Proposition~\ref{prop:ApproximateClearing} tells us how
close to clearing prices at the true valuations, $v$, is a price
profile $\pi$ that are $\delta_3$-approximate clearing prices at the
learned valuation, $\tilde v$.\footnote{Note that, starting from the
  same approximately-clearing price profile $\pi$, we could now derive
  a bound on the efficiency loss in $\tilde a$. However, it would be
  the same bound we have already proven in the Proposition.}

There are many price vectors we could use to instantiate
Propositions~\ref{prop:ApproximateClearingSimple}
and~\ref{prop:ApproximateClearing}.  One way to instantiate the
propositions is to \emph{impute} as the price profile $\pi =
\tilde{v}$ (i.e. use the learned values as prices).
\footnote{In fact, these will be the seller-optimal imputed prices for
  this $\delta$.  In general there will be a set of price profiles
  that are all $\delta$-approximately clearing that provide more or
  less of the surplus to different players.  Here, without loss we
  adopt the seller-optimal prices to simplify the
  exposition.}%
Such prices are a natural choice because they meet the exact clearing
condition for the buyers and the seller.
We want to emphasize such imputed prices are only implicit when MLCA
is run. Importantly, they inherit the structure of learned value
$\tilde{v}$, which can be very complex, depending on the ML algorithm
used. In particular, the prices will in general be non-anonymous, and
if we use high-dimensional (non-linear) ML algorithms, then they will
also be high-dimension bundle prices. Note that this implicit
high-dimensional price structure enables MLCA to find an approximate
CE (where the approximation depends on the learning error of the ML
algorithm) while approaches based on linear prices may severely
limited.

Together, these prices and
Propositions~\ref{prop:ApproximateClearingSimple}
and~\ref{prop:ApproximateClearing} allow us to interpret the learned
value functions, as effectively, imputed approximate clearing prices.
Introducing this price-based interpretation allows us to explain how
our mechanism is related to the work by \cite{lahaie2004applying} who
were among the first to establish a theoretical connection between
preference elicitation in CAs and machine learning. In this prior
work, they proposed an iterative ML-based elicitation paradigm which
also uses learned valuations to drive the elicitation
process. However, while MLCA is based on value reports, their
algorithm critically requires demand queries (i.e., it communicates
ask prices to bidders in every round). The advantage of their approach
is that it guarantees finding a competitive equilibrium. However, due
to the result by \cite{nisan2006communication}, if their approach was
applied in a general setting, it would require communicating
exponentially-sized prices to the agents in every round, which makes
it impractical in such settings.

In contrast to the approach by \cite{lahaie2004applying}, the CCA is
designed with practical applications in mind (like MLCA). A guiding
principle underlying the clock phase of the CCA is that it aims to
find approximate CE prices. However, as a practical mechanism, it
cannot use exponential communication; instead, it uses linear prices
and a limited number of rounds. Thus, due to
\cite{nisan2006communication}, it cannot guarantee finding a true
CE. Note that the supplementary round of the CCA (allowing bidders to
submit up to 500 bundle-value pairs in one final round) is designed to
address the potential inefficiencies that remain at the end of the clock
phase (e.g., due to using linear prices that may be out of
equilibrium). Note that both, MLCA and CCA are practical auction
designs, but they restrict the amount of information exchanged in
different ways. Moreover, while the CCA explicitly aims to find an
approximate CE by using demand queries, MLCA can be interpreted as
doing so implicitly by using an ML algorithm on value reports.

\section{Winner Determination for Dot-Product and RBF Kernels}
\label{app:WDs}
In this appendix we show how to formulate Problem~\eqref{eq:WDSVRGeneric} under dot-product and RBF kernels as integer programming problems (IPs). That we can do this is perhaps surprising, as it requires us to compactly encode our \emph{non-linear} social welfare objective in a \emph{linear} program, a feat we manage by exploiting the structure of the kernel functions we consider and our binary bundle encoding.

We start by noticing that, under dot-product kernel functions, each kernel evaluation in the objective of Problem~\eqref{eq:WDSVRGeneric} can be formulated as $\kappa(x,x_{ik}) = \bar \kappa(x \cdot x_{ik}) = \bar \kappa(\tau)$, where $\tau\in\{0,...,\bar \tau\}$ is the number of items bundles $x$ and $x_{ik}$ have in common, and $\bar \tau$ is the size of bundle $x_{ik}$. Similarly, under RBF kernels, we have that each kernel evaluation in Problem~\eqref{eq:WDSVRGeneric} can be formulated as $\kappa(x,x_{ik}) = \bar\kappa(||x - x_{ik}||) = \bar\kappa(\tau)$, where $\tau\in\{0,...,\bar \tau\}$ is the number of items contained in only one among bundles $x$ and $x_{ik}$, and $\bar \tau$ is the total number of items $m$. In both cases, we can introduce $\bar \tau+1$ binary variables $z_{ik\tau}$, each indicating the value $\tau$ where the kernel function should be evaluated, and encode each kernel evaluation $\kappa(x,x_{ik})$ as
\begin{align}\label{eq:kernelEncoding}
 &\sum_{\tau=0}^{\bar{\tau}} \bar \kappa(\tau)z_{ik\tau} \\
\text{s.t.} &\quad\sum_{\tau=0}^{\bar{\tau}} z_{ik\tau} = 1. \notag
\end{align}

To use this linearized kernel encoding in Problem~\eqref{eq:WDSVRGeneric}, we should also establish the relationship between the allocation variables $a_{ij}$ and the newly introduced binary variables $z_{ik\tau}$. Under dot-product kernels, this relationship can be encoded by introducing the following constraint for each support vector $x_{ik}$:
\begin{equation}\label{eq:contraintsDotProduct}
        \sum_{j \in x_{ik}} a_{ij} = \sum_{\tau=0}^{|x_{ik}|} (\tau+1) z_{ik\tau}-1.
\end{equation}
The left term in Equation~\eqref{eq:contraintsDotProduct} tracks the number of items that bidder $i$'s allocated bundle $a_i$ and the support vector $x_{ik}$ have in common; the right term enforces that only the $z_{ik\tau}$ corresponding to this number gets activated. Under RBF kernels, the relationship between variables $a_{ij}$ and $z_{ik\tau}$ for each $x_{ik}$ can be encoded as
\begin{equation}\label{eq:contraintsRBF}
        \sum_{j \in x_{ik}} (1-a_{ij}) + \sum_{j \notin x_{ik}} a_{ij} = \sum_{\tau=0}^{m} (\tau+1) z_{ik\tau} - 1.
\end{equation}
Here, the left term tracks the number of items that belong to $x_{ik}$ and not to $a_{ij}$ (first sum) and the ones that belong to $a_{ij}$ and not to $x_{ik}$ (second sum).

After integrating the kernel encoding in~\eqref{eq:kernelEncoding} and the constraints in~\eqref{eq:contraintsDotProduct} in Problem~\eqref{eq:WDSVRGeneric}, we derive that the allocation problem for dot-product kernels can be encoded via the following IP:
\begin{align}
        \max_{a_{ij},z_{ik\tau}}&\quad \sum_{i\in N} \sum_{k\in L} ( \alpha_{ik} -  \alpha_{ik}^*) \sum_{\tau=0}^{\bar{\tau}} \bar \kappa (\tau) z_{ik\tau}
        & \quad \text{Problem~\eqref{eq:WDSVRGeneric}} \notag \\
        \text{s.t.} &\quad \sum_{\tau=0}^{\bar{\tau}} z_{ik\tau} = 1 \quad \forall i \in  N,\, k\in[\ell_i] &\quad \text{Encoding~\eqref{eq:kernelEncoding}} \notag\\
        &\quad \sum_{j \in x_{ik}} a_{ij} = \sum_{\tau=0}^{|x_{ik}|} (\tau+1) z_{ik\tau}-1 \quad \forall i \in  N,\, k\in[\ell_i] &\quad \text{Equation~\eqref{eq:contraintsDotProduct}} \notag\\
        &\quad \sum_{i\in N} a_{ij} \leq 1 \quad \forall j\in M &\quad \text{Problem~\eqref{eq:WDSVRGeneric}} \notag
\end{align}
By replacing the constraints in~\eqref{eq:contraintsDotProduct} with the constraints in~\eqref{eq:contraintsRBF}, we obtain the following formulation of the allocation problem under RBF kernels:
\begin{align}
\max_{a_{ij},z_{ik\tau}}&\quad \sum_{i \in N} \sum_{k\in L} ( \alpha_{ik} -  \alpha_{ik}^*) \sum_{\tau=0}^{\bar{\tau}} \bar \kappa (\tau) z_{ik\tau}
& \quad \text{Problem~\eqref{eq:WDSVRGeneric}} \notag \\
\text{s.t.} &\quad \sum_{\tau=0}^{\bar{\tau}} z_{ik\tau} = 1 \quad \forall i \in  N,\, k\in[\ell_i] &\quad \text{Encoding~\eqref{eq:kernelEncoding}} \notag\\
&\quad  \sum_{j \in x_{ik}} (1-a_{ij}) + \sum_{j \notin x_{ik}} a_{ij} = \sum_{\tau=0}^{m} (\tau+1) z_{ik\tau} - 1 \quad \forall i \in  N,\, k\in[\ell_i] &\quad \text{Equation~\eqref{eq:contraintsRBF}} \notag\\
&\quad \sum_{i\in N} a_{ij} \leq 1 \quad \forall j\in M &\quad \text{Problem~\eqref{eq:WDSVRGeneric}} \notag
\end{align}
Note that the size of both the integer programming problems presented above heavily depends on the number of support vectors. In both problems, each support vector $x_{ik}$ introduces two constraints and
$|x_{ik}|+1$ binary variables under dot-product kernels or $m+1$ binary variables under RBF kernels. As discussed in Section~\ref{sec:MLAlgorithm}, one can reduce the number of support vectors by
using a larger $\varepsilon$ in the SVR training problem~\eqref{eq:WDSVRGeneric}, which can be extremely helpful to maintain our mechanism computationally tractable.

\section{Experiments I: Additional Results}
\label{app:kernelAndInsensitivityComparisonGSVMandMRVM}
\subsection{Results for GSVM domain}
\begin{table}[H]
        \centering

\begin{tablesizeadjustment}
\setlength{\tabcolsep}{1.5pt}%
\begin{tabular}{l | c || r | r | r || r | r | r || r | r | r || r | r | r}
Kernel & $\varepsilon$ & 
\multicolumn{3}{c||}{Efficiency} &
\multicolumn{3}{c||}{Learning Error} & 
\multicolumn{3}{c||}{WD Solve Time} & 
\multicolumn{3}{c}{Optimality Gap} \\
\hhline{~|~||---||---||---||---}
&
& \multicolumn{1}{c|}{50} & 
  \multicolumn{1}{c|}{100} & 
  \multicolumn{1}{c||}{200}
& \multicolumn{1}{c|}{50} & 
  \multicolumn{1}{c|}{100} & 
  \multicolumn{1}{c||}{200}
& \multicolumn{1}{c|}{50} & 
  \multicolumn{1}{c|}{100} & 
  \multicolumn{1}{c||}{200}
& \multicolumn{1}{c|}{50} & 
  \multicolumn{1}{c|}{100} & 
  \multicolumn{1}{c}{200} \\

\hhline{=:=*{4}{::===}}

Exponential &  0 &

          93.3\%    &

          95.8\%   &

          92.7\%   &

              12.04   &

              7.52   &

              4.95   &

          3.36s   &

          48.22s   &

          60.00s   &

          0.00   &

          0.08   &

          1.05   \\

\hhline{-|-*{4}{||---}}

Exponential &  1 &

          91.5\%    &

          96.0\%   &

          98.4\%   &

              15.27   &

              9.64   &

              6.63   &

          1.43s   &

          4.01s   &

          8.86s   &

          0.00   &

          0.00   &

          0.00   \\

\hhline{-|-*{4}{||---}}

Exponential &  2 &

          86.2\%    &

          87.9\%   &

          92.2\%   &

              18.04   &

              12.61   &

              8.46   &

          0.79s   &

          1.83s   &

          2.94s   &

          0.00   &

          0.00   &

          0.00   \\

\hhline{=:=*{4}{::===}}

Gaussian &  0 &

          89.1\%    &

          85.0\%   &

          57.9\%   &

              20.67   &

              16.87   &

              14.26   &

          57.39s   &

          60.00s   &

          60.00s   &

          0.11   &

          1.01   &

          7.41   \\

\hhline{-|-*{4}{||---}}

Gaussian &  2 &

          87.7\%    &

          90.4\%   &

          93.7\%   &

              25.05   &

              20.32   &

              16.96   &

          12.68s   &

          41.17s   &

          53.89s   &

          0.00   &

          0.04   &

          0.10   \\

\hhline{-|-*{4}{||---}}

Gaussian &  4 &

          86.2\%    &

          87.9\%   &

          92.2\%   &

              27.67   &

              23.70   &

              20.26   &

          2.69s   &

          9.61s   &

          18.23s   &

          0.00   &

          0.00   &

          0.00   \\

\end{tabular}
\end{tablesizeadjustment}

        \caption{The effect of varying the insensitivity parameter
                $\varepsilon$ on the predictive performance: both efficiency
                of the predicted optimal allocation, and learning error.
                Results are shown for the two expensive kernels,
                Exponential and Gaussian, in the GSVM domain. Results are averages across 50 instances.}
        \label{fig:insensitivityComparisonGSVM}
\end{table}
\begin{table}[H]
        \centering

\begin{tablesizeadjustment}
\setlength{\tabcolsep}{1.5pt}%
\begin{tabular}{l || r | r | r || r | r | r || r | r | r || r | r | r}
Kernel & 
\multicolumn{3}{c||}{Efficiency} & 
\multicolumn{3}{c||}{Learning Error} & 
\multicolumn{3}{c||}{WD Solve Time} & 
\multicolumn{3}{c}{Optimality Gap} \\
\hhline{~||---||---||---||---}
& \multicolumn{1}{c|}{50} & 
  \multicolumn{1}{c|}{100} & 
  \multicolumn{1}{c||}{200}
& \multicolumn{1}{c|}{50} & 
  \multicolumn{1}{c|}{100} & 
  \multicolumn{1}{c||}{200}
& \multicolumn{1}{c|}{50} & 
  \multicolumn{1}{c|}{100} & 
  \multicolumn{1}{c||}{200}
& \multicolumn{1}{c|}{50} & 
  \multicolumn{1}{c|}{100} & 
  \multicolumn{1}{c}{200} \\
\hhline{=*{4}{::===}}
Linear & 90.3\% & 
         90.3\% & 
         91.1\% &
         
             13.20 &
             12.99 &
             12.72 &
        
         0.03s &
         0.04s & 
         0.06s &
         0.00 & 
         0.00 & 
         0.00 \\
\hhline{-*{4}{||---}}
Quadratic & 88.5\% & 
         96.6\% & 
         100.0\% &
         
             8.54 &
             8.54 &
             0.02 &
        
         0.74s & 
         0.77s & 
         0.47s &
         0.00 & 
         0.00 & 
         0.00 \\
\hhline{-*{4}{||---}}
Exponential &          
          91.5\%   &
         
          96.0\%   &
         
          98.4\%   &

             15.27   &
            
             9.64   &
            
             6.63   &

          1.43s   &
         
          4.01s   &
         
          8.86s   &
          
          0.00   &
          
          0.00   &
         
          0.00   \\
\hhline{-*{4}{||---}}
Gaussian &
         
          87.7\%   &
         
          90.4\%   &
         
          93.7\%   &

             25.05   &
            
             20.32   &
            
             16.96   &

          12.68s   &
         
          41.17s   &
         
          53.89s   &
          
          0.00   &
          
          0.04   &
         
          0.10   \\
\end{tabular}
\end{tablesizeadjustment}

        \caption{Comparison of different kernels in the GSVM domain.  The
                Quadratic kernel obtains the best efficiency under the imposed
                computational constraints. Results are averages across 50 instances.}
        \label{fig:kernelComparisonGSVM}
\end{table}
\subsection{Results for MRVM domain}

\begin{table}[H]
        \centering

\begin{tablesizeadjustment}
\setlength{\tabcolsep}{1.5pt}%
\begin{tabular}{l | c || r | r | r || r | r | r || r | r | r || r | r | r}
Kernel & $\varepsilon$ & 
\multicolumn{3}{c||}{Efficiency} &
\multicolumn{3}{c||}{Learning Error} & 
\multicolumn{3}{c||}{WD Solve Time} & 
\multicolumn{3}{c}{Optimality Gap} \\
\hhline{~|~||---||---||---||---}
&
& \multicolumn{1}{c|}{50} & 
  \multicolumn{1}{c|}{100} & 
  \multicolumn{1}{c||}{200}
& \multicolumn{1}{c|}{50} & 
  \multicolumn{1}{c|}{100} & 
  \multicolumn{1}{c||}{200}
& \multicolumn{1}{c|}{50} & 
  \multicolumn{1}{c|}{100} & 
  \multicolumn{1}{c||}{200}
& \multicolumn{1}{c|}{50} & 
  \multicolumn{1}{c|}{100} & 
  \multicolumn{1}{c}{200} \\

\hhline{=:=*{4}{::===}}

Exponential &  0 &

          80.8\%    &

          77.3\%   &

          14.8\%   &

              7.4e+07   &

              5.7e+07   &

              4.1e+07   &

          60.00s   &

          60.00s   &

          60.00s   &

          0.05   &

          0.41   &

          2.66   \\

\hhline{-|-*{4}{||---}}

Exponential &  2048 &

          79.2\%    &

          78.7\%   &

          74.2\%   &

              1.6e+08   &

              1.5e+08   &

              1.3e+08   &

          42.82s   &

          60.00s   &

          60.00s   &

          0.00   &

          0.03   &

          0.11   \\

\hhline{-|-*{4}{||---}}

Exponential &  4096 &

          76.1\%    &

          76.6\%   &

          75.5\%   &

              2.1e+08   &

              2.0e+08   &

              1.9e+08   &

          15.59s   &

          41.55s   &

          57.51s   &

          0.00   &

          0.00   &

          0.01   \\

\hhline{=:=*{4}{::===}}

Gaussian &  0 &

          \makecell[c]{-}    &

          \makecell[c]{-}   &

          \makecell[c]{-}   &

              \makecell[c]{-}   &

              \makecell[c]{-}   &

              \makecell[c]{-}   &

          \makecell[c]{-}   &

          \makecell[c]{-}   &

          \makecell[c]{-}   &

          \makecell[c]{-}   &

          \makecell[c]{-}   &

          \makecell[c]{-}   \\

\hhline{-|-*{4}{||---}}

Gaussian &  16384 &

          82.7\%    &

          82.7\%   &

          83.8\%   &

              6.6e+08   &

              5.8e+08   &

              5.3e+08   &

          60.00s   &

          60.00s   &

          60.00s   &

          0.06   &

          0.07   &

          0.07   \\

\hhline{-|-*{4}{||---}}

Gaussian &  32768 &

          82.4\%    &

          82.0\%   &

          81.7\%   &

              1.0e+09   &

              9.8e+08   &

              9.5e+08   &

          60.00s   &

          59.26s   &

          60.00s   &

          0.04   &

          0.05   &

          0.05   \\

\end{tabular}
\end{tablesizeadjustment}

        \caption[MRVMkernel]{The effect of varying the insensitivity parameter
                $\varepsilon$ on the predictive performance: both efficiency
                of the predicted optimal allocation, and learning error.
                Results are shown for the two expensive kernels,
                Exponential and Gaussian, in the MRVM domain. We do not report results for the Gaussian kernel with $\varepsilon = 0$, because for this parameterization, the solver fails to find a feasible solution within 60 seconds for a large number of instances.}
        \label{fig:insensitivityComparisonMRVM}
\end{table}

\begin{table}[H]
        \centering

\begin{tablesizeadjustment}
\setlength{\tabcolsep}{1.5pt}%
\begin{tabular}{l || r | r | r || r | r | r || r | r | r || r | r | r}
Kernel & 
\multicolumn{3}{c||}{Efficiency} & 
\multicolumn{3}{c||}{Learning Error} & 
\multicolumn{3}{c||}{WD Solve Time} & 
\multicolumn{3}{c}{Optimality Gap} \\
\hhline{~||---||---||---||---}
& \multicolumn{1}{c|}{50} & 
  \multicolumn{1}{c|}{100} & 
  \multicolumn{1}{c||}{200}
& \multicolumn{1}{c|}{50} & 
  \multicolumn{1}{c|}{100} & 
  \multicolumn{1}{c||}{200}
& \multicolumn{1}{c|}{50} & 
  \multicolumn{1}{c|}{100} & 
  \multicolumn{1}{c||}{200}
& \multicolumn{1}{c|}{50} & 
  \multicolumn{1}{c|}{100} & 
  \multicolumn{1}{c}{200} \\
\hhline{=*{4}{::===}}
Linear & 83.8\% & 
         83.8\% & 
         82.4\% &
         
             9.5e+07 &
             1.1e+08 &
             7.6e+07 &
         
         0.00s &
         0.00s & 
         0.00s &
         0.00 & 
         0.00 & 
         0.00 \\
\hhline{-*{4}{||---}}
Quadratic & 83.7\% & 
         84.8\% & 
         81.8\% &
         
             9.4e+07 &
             9.3e+07 &
             6.1e+07 &
         
         1.55s & 
         1.37s & 
         2.36s &
         0.00 & 
         0.00 & 
         0.00 \\
\hhline{-*{4}{||---}}
Exponential &          
          79.2\%   &
         
          78.7\%   &
         
          74.2\%   &

             1.6e+08   &
            
             1.5e+08   &
            
             1.3e+08   &

          42.82s   &
         
          60.00s   &
         
          60.00s   &
          
          0.00   &
          
          0.03   &
         
          0.11   \\
\hhline{-*{4}{||---}}
Gaussian &
         
          82.7\%   &
         
          82.7\%   &
         
          83.8\%   &

             6.6e+08   &
            
             5.8e+08   &
            
             5.3e+08   &

          60.00s   &
         
          60.00s   &
         
          60.00s   &
          
          0.06   &
          
          0.07   &
         
          0.07   \\
\end{tabular}
\end{tablesizeadjustment}

        \caption[MRVMkernel2]{Comparison of different kernels in the MRVM domain.  The
                Quadratic kernel obtains the best efficiency under the imposed
                computational constraints.\footnotemark}
        \label{fig:kernelComparisonMRVM}
\end{table}

\footnotetext{Note that these are preliminary results subject to change due to ongoing updates to our code base.}

\section{Quadratic Kernel and Global Synergy Value Model}\label{app:GSVMAndQuadKernel}

\begin{proposition}
        Every valuation of the Global Synergy Value Model domain can be formulated in the feature space where the Quadratic kernel is the inner product.
\end{proposition}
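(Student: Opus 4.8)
The plan is to identify the span of the Quadratic kernel's feature map with the set of multilinear polynomials of degree at most $2$ on $\mathcal X=\{0,1\}^m$ having zero constant term, and then to check that every GSVM valuation is such a polynomial. For the first part, fix $\lambda>0$ and take the feature map
\[
\varphi(x)=\bigl(x_1,\dots,x_m,\ \sqrt{\lambda}\,x_jx_k\ \text{for}\ (j,k)\in M\times M\bigr),
\]
for which $\varphi(x)\cdot\varphi(x')=\sum_j x_jx'_j+\lambda\sum_{j,k}x_jx_kx'_jx'_k=x\cdot x'+\lambda(x\cdot x')^2=\kappa(x,x')$. Hence $\{\,w\cdot\varphi(x):w\in\mathbb R^f\,\}$ is exactly the set of functions $x\mapsto\sum_j c_jx_j+\sum_{j,k}c_{jk}x_jx_k$, and since every $x\in\mathcal X$ is binary ($x_j^2=x_j$, $x_jx_k=x_kx_j$) this equals the set of all $x\mapsto\sum_j d_jx_j+\sum_{j<k}d_{jk}x_jx_k$. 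So it suffices to write each GSVM valuation in this form; the absence of a constant term is consistent with the normalization $v_i(\emptyset)=0$.

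Next I would recall the GSVM valuation. Each bidder $i$ has a set of items of interest $\mathcal I_i\subseteq M$, nonnegative base values $v_{ij}$ for $j\in\mathcal I_i$, and a synergy parameter $\alpha_i\ge0$, and the value of a bundle with indicator $x$ is the sum of the base values of the won items of interest, scaled by a synergy factor that is \emph{affine in the number of such items}:
\[
v_i(x)=\Bigl(1+\alpha_i\bigl(\textstyle\sum_{k\in\mathcal I_i}x_k-1\bigr)\Bigr)\sum_{j\in\mathcal I_i}v_{ij}\,x_j.
\]
No positive-part truncation of the factor is needed: if the bidder wins no item of interest the second sum is $0$, so the formula holds on all of $\mathcal X$. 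Expanding the product and folding the diagonal $j=k$ contributions into the linear part via $x_j^2=x_j$ gives
\[
v_i(x)=\sum_{j\in\mathcal I_i}v_{ij}\,x_j\ +\ \alpha_i\!\!\sum_{\{j,k\}\subseteq\mathcal I_i,\,j\neq k}\!\!(v_{ij}+v_{ik})\,x_jx_k ,
\]
which is manifestly a degree-$\le 2$ multilinear polynomial with zero constant term. By the first step, $v_i(x)=w_i\cdot\varphi(x)$ for a suitable $w_i$, proving the claim.

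The main obstacle is not the algebra — the expansion above is routine — but making sure the stated GSVM value function is the right one. Concretely, I would pin down the form from \citet{goeree_etal_2008_HierarchicalPackageBidding} and its SATS implementation \citep{weiss2017sats}, and in particular verify that the synergy multiplier is genuinely affine in the count of won items of interest, with no saturation or cap. This affine-in-count property is exactly what the argument needs, and it is also what separates GSVM from LSVM: in LSVM the corresponding factor is an S-shaped (sigmoidal) function of the count, which is not a polynomial, so the identical argument does not go through there. For that reason I would write out the GSVM value function with the explicit SATS constants before running the expansion, so that the affine structure is on the record and the reduction to a quadratic feature space is transparent.
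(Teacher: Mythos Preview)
Your argument is correct and follows essentially the same route as the paper's own proof: both expand the GSVM valuation $\sum_{j\in\bar x}v_{ij}\bigl(1+\alpha(|\bar x|-1)\bigr)$ (the paper uses the specific constant $\alpha=0.2$) into the form $\sum_j w_j x_j+\sum_{j<k}w_{jk}x_jx_k$, and then conclude that such functions live in the Quadratic kernel's feature space.

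The only notable difference is in how the second step is handled. The paper invokes the notion of a \emph{2-wise dependent valuation} from \citet{conitzer2005combinatorial} and then cites Proposition~1 of \citet{brero2017probably} for the fact that every 2-wise dependent valuation lies in the Quadratic feature space. You instead write down the feature map $\varphi$ explicitly and characterize its span directly as the degree-$\le 2$ multilinear polynomials with zero constant term. Your version is more self-contained and arguably cleaner for a reader who does not have the cited paper at hand; the paper's version is terser and connects the result to existing terminology. The underlying algebra and the coefficient identification $w_j=v_{ij}$, $w_{jk}=\alpha(v_{ij}+v_{ik})$ are identical in both. Your closing caveat about verifying the affine-in-count form against the SATS implementation is prudent; the paper simply cites \citet{goeree2010hierarchical} for the formula.
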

\begin{proof}
        To prove this statement, it is sufficient to show that each valuation $v_i$ of the Global Synergy Value Model (GSVM) domain is a 2-wise dependent valuation function in the sense of \citet{conitzer2005combinatorial} and use Proposition~1 of \citet{brero2017probably}.
        According to \citet{conitzer2005combinatorial}, a 2-wise dependent valuation is any valuation $v_i$ that can be expressed as
        \begin{equation}\label{eq:2wiseVal}
                v_i(x)=\sum_{j\in x} \left( w_j + \sum_{j'\in x:j'< j}w_{j,j'} \right)
        \end{equation}
        where $w_j$ and $w_{j,j'}$ are scalar parameters.
        At the same time, as discussed by \citet{goeree2010hierarchical}, any valuation of the GSVM domain $v^{\text{GSVM}}_i$ can be expressed as
        \begin{equation}\label{eq:gsvmVal}
        v^{{\text{GSVM}}}_i(x)=\sum_{j\in \bar x} v^{{\text{GSVM}}}_{ij} \bigg( 1+0.2\, (|\bar x|-1)\bigg),
        \end{equation}
        where $v^{{\text{GSVM}}}_{ij}$ is the value assigned by $v^{{\text{GSVM}}}_i$ to item $j$ and $\bar x$ is a sub-bundle of $x$ containing only the licenses in $x$ that are of interest to bidder $i$ (i.e., all the licenses $j\in x$ such that $v^{{\text{GSVM}}}_{ij}>0$).
        $v^{\text{GSVM}}_i$ can be expressed as a 2-wise dependent valuation by setting $w_j=v^{{\text{GSVM}}}_{ij}$ and $w_{j,j'} = 0.2 \, \big(v^{{\text{GSVM}}}_{ij}+v^{{\text{GSVM}}}_{ij'}$\big) if both $j$ and $j'$ are of interest to bidder $i$, $w_{j,j'} = 0$ otherwise. Indeed, with these parameters, the value contributed to $v_i(x)$ in Equation~\eqref{eq:2wiseVal} by each item $j$ of interest to bidder $i$ will be the sum of
        \begin{itemize}
                \item $v^{{\text{GSVM}}}_{ij}$ (via $w_j$) and
                \item $0.2\, (|\bar x|-1) \, v^{{\text{GSVM}}}_{ij}$ (via all the $w_{j,j'}$ such that $j'\in \bar x$ and $j\neq j'$),
        \end{itemize}
        which corresponds to Equation~\eqref{eq:gsvmVal}.
\end{proof}

\section{Additional Design Features of MLCA}\label{appendix:additional_features}

In this appendix we provide more details on two additional design features of MLCA that will
likely be important in many domains when applying MLCA: (a) enabling the auctioneer to control the number of rounds of the auction, and (b) enabling the auctioneer to
switch out the payment rule (e.g., to charge core-selecting
payments).\medskip

\noindent\textbf{Controlling the number of rounds.} Recall that the
version of MLCA presented in Algorithm~\ref{alg:MLCA} is slightly
abbreviated to improve clarity and readability. In particular, we left
out additional control structure that would be necessary to provide
the auctioneer with explicit control over the number of rounds, which
however might be important in practice. Note that the total number of
auction rounds~$T$ depends on the maximum number of queries
$Q^{\text{max}}$, the number of initial queries $Q^{\text{init}}$, and
the number of agents $n$ in the following way:
$T=\floor{(Q^{\text{max}} - Q^{\text{init}})/n}$
(Line~\ref{alg:MLCA:TotRounds} of Algorithm~\ref{alg:MLCA}).  We
assume that the auctioneer sets $Q^{\text{max}}$ to control the effort
required by the bidders, and $Q^{\text{init}}$ is then optimized
(given $Q^{\text{max}}$) for the chosen ML algorithm, which would
leave the auctioneer without direct control over the number of auction
rounds. This could lead to an undesirable or even impractical number
of rounds in certain domains. Assume the auctioneer sets
$Q^{\text{max}}=500$ and $Q^{\text{init}}=100$. Then, in an auction
with just a few bidders, the resulting number of rounds may be too
large; for example, $n=2$ would result in $200$ rounds. Conversely, in
an auction with a large number of bidders, the resulting number of
rounds may be too small; for example, $n=100$ would result in just $4$
rounds.  Fortunately, it is straightforward to modify MLCA such that
the auctioneer can directly control the number of rounds. In an
auction with a small number of bidders, the auctioneer can simply go
through the steps of generating queries for the main and marginal
economies (Lines~\ref{alg:MLCA:QueryMain}--\ref{alg:MLCA:EndFor})
multiple times before sending all generated queries to the bidders at
once (Line~\ref{alg:MLCA:SendQueries}). By deciding how many times to
go through these steps, the auctioneer now has more control over the
number of rounds.\footnote{Note that this requires that the function
  \emph{nextQuery} is guaranteed to always generate a \emph{new} query
  for every bidder, which we guarantee in our implementation (see
  Footnote~\ref{FN:BookkeepingQueries} for details).} In an auction
with a large number of bidders, the auctioneer can choose to generate
less than $n-1$ queries per bidder per round, to increase the number
of rounds. Specifically, let $Q^{\text{round}}$ denote the number of
queries per round the auctioneer selects. Then the resulting number of
rounds is $T=\floor{(Q^{\text{max}} -
  Q^{\text{init}})/Q^{\text{round}}}$, which provides the auctioneer
with the desired control over the number of rounds. We provide details
for the later approach in the full version of MLCA in
\refapp{app:FullVersionsOfAlgorithms}.\medskip

\noindent\textbf{Alternatives to VCG Payments.}
Recall that our auction computes final payments by applying the VCG
payment rule (see Section~\ref{ssec:CAEfficiency}) based on the
bundle-value pairs reported by the bidders (see
Line~\ref{alg:MLCA:PaymentComputation} of Algorithm~\ref{alg:MLCA}).
It is well known that VCG payments may be outside the \emph{core}.
Informally, this means that payments may be so low that a coalition of
bidders may be willing to pay more in total than what the seller
receives from the current winners.  As bidders in such a coalition
might complain, payments in the core may likely be desirable to
auctioneers in practice.
Accordingly, many contemporary real-world auction formats, such as the
CCA~\citep{ausubel2006clock}, eschew VCG payments in favor of rules
that charge payments in the (revealed) core, such as the VCG-nearest
payment rule~\citep{day2012quadratic}.
In MLCA, it is straightforward to substitute a core-selecting payment
rule (e.g.  VCG-nearest) for VCG in
Line~\ref{alg:MLCA:PaymentComputation} of Algorithm~\ref{alg:MLCA}.
In practice, different auctioneers may have different desires for the
mechanism, and this will affect their choice of payment rule.
Specifically, when targeting good incentives they may want to use VCG
as the payment rule; when aiming to be robust against defecting
coalitions, they may prefer a core-selectring rule such as
VCG-nearest.  While we believe it is important to be able to support
multiple payment rules in order to facilitate an auctioneer tailoring
the mechanism to their particular domain of application, we adopt the
VCG payment rule in Algorithm~\ref{alg:MLCA} and in the following
section on theoretical properties of MLCA, as it is simpler to
analyze.  In our experiments (Section~\ref{sec:MLCAvsCCAExp}), we
additionally provide the revenue numbers that result from using
VCG-nearest.    %

\section{Manipulation Experiments}
\label{app:manipulationExperiments}

\begin{table}[H]
\centering
\begin{adjustbox}{max width=\textwidth}
\begin{tabular}{l||ccc|ccc}
 & \multicolumn{3}{c|}{\bfseries Local} & \multicolumn{3}{c}{\bfseries National} \\
\multicolumn{1}{c||}{Strategy} & \makecell{Social\\Welfare} & \makecell{Marginal Economy\\Social Welfare} &      Utility & \makecell{Social\\Welfare} & \makecell{Marginal Economy\\Social Welfare} &       Utility \\
\hhline{=::===:===}
Truthful          &    437.5 (3.6) &                     418.1 (3.5) &  19.40 (1.84) &    437.5 (3.6) &                     433.5 (4.0) &  4.01 (1.11) \\
Overbidding (25\%) &    437.5 (3.6) &                     418.1 (3.5) &  19.40 (1.84) &    437.5 (3.6) &                     433.5 (4.0) &  4.01 (1.11) \\
Overbidding (50\%) &    437.5 (3.6) &                     418.1 (3.5) &  19.40 (1.84) &    437.5 (3.6) &                     433.5 (4.0) &  4.01 (1.11) \\
Overbidding (75\%) &    437.5 (3.6) &                     418.1 (3.5) &  19.40 (1.84) &    437.5 (3.6) &                     433.5 (4.0) &  3.96 (1.11) \\
Overbidding (99\%) &    437.4 (3.6) &                     418.1 (3.5) &  19.31 (1.83) &    437.5 (3.6) &                     433.5 (4.0) &  3.94 (1.11) \\
\hhline{-||------}
ANOVA p-value     &          1.000 &                           1.000 &         1.000 &          1.000 &                           1.000 &        1.000 \\
\end{tabular}

\end{adjustbox}
\caption{Manipulation Experiments for GSVM.  Entries are the average of 100 runs.}
\label{tab:manipulationGSVM}
\end{table}

\begin{table}[H]
\centering
\begin{adjustbox}{max width=\textwidth}
\begin{tabular}{l||ccc|ccc}
 & \multicolumn{3}{c|}{\bfseries Local} & \multicolumn{3}{c}{\bfseries National} \\
\multicolumn{1}{c||}{Strategy} & \makecell{Social\\Welfare} & \makecell{Marginal Economy\\Social Welfare} &      Utility & \makecell{Social\\Welfare} & \makecell{Marginal Economy\\Social Welfare} &       Utility \\
\hhline{=::===:===}
Truthful          &    532.6 (4.5) &                     521.0 (5.8) &  11.54 (2.78) &    532.6 (4.5) &                     512.7 (4.2) &  19.88 (2.82) \\
Overbidding (25\%) &    532.3 (4.5) &                     520.6 (5.9) &  11.70 (2.86) &    532.7 (4.5) &                     513.0 (4.2) &  19.64 (2.76) \\
Overbidding (50\%) &    532.6 (4.5) &                     520.7 (5.8) &  11.97 (2.68) &    532.2 (4.5) &                     511.8 (4.4) &  20.38 (2.79) \\
Overbidding (75\%) &    532.4 (4.5) &                     521.1 (5.8) &  11.32 (2.51) &    531.7 (4.5) &                     513.0 (4.2) &  18.70 (2.77) \\
Overbidding (99\%) &    527.8 (4.9) &                     519.8 (5.8) &   7.91 (1.94) &    525.6 (4.6) &                     511.7 (4.2) &  13.97 (2.49) \\
\hhline{-||------}
ANOVA p-value     &          0.933 &                           1.000 &         0.792 &          0.784 &                           0.999 &         0.461 \\
\end{tabular}

\end{adjustbox}
\caption{Manipulation Experiments for LSVM.  Entries are the average of 99 runs (One run did not terminate for numerical reasons within CPLEX).}
\label{tab:manipulationLSVM}
\end{table}

\end{APPENDICES}

\endsupplementcontent

\bibliographystyle{aeaSS}
\bibliography{Bibliography}

\end{document}